\documentclass[onecolumn,letter]{IEEEtran}
\usepackage{latexsym}
\usepackage{amssymb}
\usepackage{amsmath}

\usepackage{multirow}
\usepackage{bm}
\usepackage[dvips]{graphicx}

\newtheorem{thm}    {Theorem}
\newtheorem{lem}     {Lemma}
\newtheorem{cor}  {Corollary}
\newtheorem{proposition}        {Proposition}
\newtheorem{rem}     {Remark}
\newtheorem{condition}  {Condition}


\newcommand{\sX}{\mathsf{X}}
\newcommand{\sZ}{\mathsf{Z}}
\newcommand{\sW}{\mathsf{W}}

\def\mix{\mathop{\rm mix}}

\def\Add{\mathop{\rm Add}\nolimits}

\newcommand{\bR}{\mathbb{R}}

\newcommand{\bF}{\mathbb{F}}
\newcommand{\bbZ}{\mathbb{Z}}
\newcommand{\cH}{{\cal H}}

\def\sM{\mathsf{M}}
\def\sL{\mathsf{L}}

\def\cA{{\cal A}}

\def\cY{{\cal Y}}

\def\cM{{\cal M}}
\def\cD{{\cal D}}

\def\rE{{\rm E}}
\def\rP{{\rm P}}

\newcommand{\cX}{{\cal X}}

\newcommand{\Tr}{{\rm Tr}\,}

\newcommand{\bZ}{{\bf Z}}

\newcommand{\bY}{{\bf Y}}

\newcommand{\bg}{{\bf g}}

\makeatletter
\newcommand{\lleq}{\mathrel{\mathpalette\gl@align<}}
\newcommand{\ggeq}{\mathrel{\mathpalette\gl@align>}}
\newcommand{\gl@align}[2]{
\vbox{\baselineskip\z@skip\lineskip\z@
\ialign{$\m@th#1\hfil##\hfil$\crcr#2\crcr{}_{{}_{(=)}}\crcr}}}
\makeatother



\def\Label#1{\label{#1}\ [\ \text{#1}\ ]\ }
\def\Label{\label}

 \newenvironment{proof}{\par \noindent
            {\bf Proof. \hspace{2mm}}}{\hfill$\Box$ \vspace*{3mm}}

 \newenvironment{proofof}[1]{\vspace*{5mm} \par \noindent
         \quad{\it Proof of #1:\hspace{2mm}}}{\endproof
}

\begin{document}

\title{Quantum wiretap channel with non-uniform random number 
and 
its exponent and equivocation rate of leaked information}
\author{
Masahito Hayashi
\thanks{
M. Hayashi is with Graduate School of Mathematics, Nagoya University, 
Furocho, Chikusaku, Nagoya, 464-860, Japan, and
Centre for Quantum Technologies, National University of Singapore, 3 Science Drive 2, Singapore 117542.
(e-mail: masahito@math.nagoya-u.ac.jp)
This paper was
presented in part at 2012 IEEE International Symposium on Information Theory, 
Cambridge, MA, USA, July 2012.}%
}
\date{}
\maketitle

\begin{abstract}
A usual code for quantum wiretap channel requires 
an auxiliary random variable subject to the perfect uniform distribution.
However, it is difficult to prepare such an auxiliary random variable.
We propose a code that requires only an auxiliary random variable subject to a non-uniform distribution instead of the perfect uniform distribution.
Further, we evaluate the exponential decreasing rate of 
leaked information and derive its equivocation rate.
For practical constructions,
we also discuss the security when our code consists of a linear error correcting code.
\end{abstract}

\maketitle

\section{Introduction}
Secure communication is one of important topics in quantum information.
Quantum wiretap channel model is one of most adequate formulations of
this problem.
In this model,
the authorized sender Alice sends her message to the authorized receiver Bob and keeps its secrecy for the eavesdropper Eve.
The sender is often assumed to be able to use 
an auxiliary random variable subject to the perfect uniform distribution.
This assumption is usual in the quantum setting \cite{Deve,WNI} as well as in 
the classical setting\cite{Wyner,CK79,Csiszar,H-leaked}.
However, in some case, it is not necessarily easy to prepare a perfect uniform random number
with low cost.
In order to resolve it, it is needed to construct a code that properly works with 
an auxiliary non-uniform random number.

Further, 
the quantum wiretap channel capacity does not give an estimate 
for the leaked information with a finite-length code
because the capacity is the asymptotic limit of securely transmitted rate. 
Hence, 
from a practical viewpoint, 
we need to treat the decreasing speed of leaked information 
rather than the quantum wiretap channel capacity.
For this purpose, 
it is usual to focus on 
the exponential decreasing rate (exponent) of leaked information
with a given sacrifice information rate in information theory.
In this setting, we fix the sacrifice information rate,
and treat the speed of convergence rate of leaked information
by deriving its proper upper bounds. 
In fact,
as another setting,
similar to the second order analysis for classical channel coding \cite{Hay1,Pol},
given a fixed leaked information quantity,
we can treat the asymptotic rate of coding length up to the second order.  
This setting essentially employs the central limit theorem.
However, when the fixed leaked information quantity is quite small, 
the convergence of the central limit theorem is slow.
Hence, the latter setting cannot yield a good bound for a finite-length code in such a case.
Since the former setting gives upper bounds for leaked information,
the former method can provide useful upper bounds for leaked information
for a finite-length code.

In fact, Devetak \cite{Deve} applied the random coding to the quantum wiretap channel model.
For the evaluation for leaked information,
he essentially used the quantum version of the channel resolvability.
The original classical version is invented by Han and Verd\'{u}\cite{han-verdu}.
In the quantum version, we approximate the given output state by 
the output state of the input mixture 
of as a small number of input states as possible.
Devetak \cite{Deve}
treated the output approximation 
when the input distribution is the perfect uniform distribution.
Hence, his method does not work when 
the perfect uniform distribution is not available.
After his achievement, the book \cite{Hayashi-book} derived 
the exponent of leaked information, however, 
it also assumes that the perfect uniform distribution 
is available because its method is based on the quantum version of the channel resolvability.
Further, there is a possibility to improve the exponent in \cite{Hayashi-book}
because the commutative case of the exponent in \cite{Hayashi-book}
is smaller than that by the previous result \cite{H-leaked}.
On the other hand,
the previous paper 
\cite{SMC-non}
treated secure multiplex coding in the classical setting,
which requires the generalization of the channel resolvability 
to the case when the uniform random number is not available.

In this paper, 
we treat the quantum wiretap channel model 
when 
the sender Alice cannot use the uniform random number.
Alternatively, we assume that she knows 
the concrete form of the distribution of the auxiliary random number
and 
its R\'{e}nyi entropy of order 2,
which can be regarded as the sacrifice information rate.
In the quantum wiretap channel model, 
two classical-quantum channels are given. 
One is the channel from Alice's classical information to Bob's quantum state.
The other is the channel from Alice's classical information to Eve's quantum state.
In this model, we focus on the code given by random coding method.
Under this protocol, we give an upper bound for leaked information with the quantum mutual information criterion.
We also derive an upper bound for 
leaked information in terms of $L_1$ distinguishability.
The both bounds go to zero exponentially when the generated key rate is less than the capacity.
The exponent for the former
can be regarded as the quantum version of \cite{H-leaked}
because its commutative case coincides with that by \cite{H-leaked}.
The exponent for the latter 
is smaller than that by \cite{H-tight}.

When the generated key rate is larger than the capacity,
the leaked information does not go to zero.
In this paper, we derive the minimum leaked information rate.
That is, we calculate the maximum conditional entropy\cite{Wyner}.
In the degraded case, we obtain its single-letterized formula.
Further, 
in order to treat a more practical setting,
we derive similar bounds when our error correction codes are restricted to linear codes.
Finally, as a typical example,
we treat the Pauli channel.
In the classical case, more deeper analyses are given in \cite{SMC-non}.

This paper is organized as follows.
In Section \ref{s22}, we prepare quantum versions of information quantities and several fundamental inequalities
for latter discussion.
In Section \ref{sV}, 
we treat a non-uniform extension of quantum channel resolvability,
which is a strong tool for quantum wiretap channel with 
an auxiliary non-uniform random number.
In Section \ref{s2}, 
we proceed to the quantum wiretap channel model
and derive a lower bound of the exponent of leaked information,
whose commutative version coincides with that by the previous paper \cite{H-leaked}.
In Section \ref{s2-1}, 
we treat the case when the sacrifice information rate is less than
Eve's mutual information.
Then, we derive the equivocation rate.
In Section \ref{s3}, 
we treat the case when only liner codes are available.
Finally, in Section \ref{s10}, 
we discuss the case of Pauli channels.
In Appendix \ref{s4-1}, 
we review a result concerning privacy amplification by universal$_2$
hash functions,
which is shown in \cite{H-cq}.

\section{Information quantities}\Label{s22}
\subsection{Notations for distributions and states}\Label{s22-1}
In this paper, we denote the classical probability space by the 
calligraphic capital letter (e.g., ${\cal X}$) and the corresponding random variable
by the capital letter (e.g., $X$).
We also denote the probability distribution on ${\cal X}$ by $P_X$.
In order to describe the transition matrix from ${\cal V}$ to ${\cal X}$, we often 
use the letter $\Gamma$, in which,
$\Gamma_v$ denotes the probability distribution on ${\cal X}$ for any $v \in {\cal V}$.
Using these notations, we define 
a probability distribution on the composite system ${\cal X} \times {\cal V}$
by $(\Gamma \times P_V) (x,v):= \Gamma_v(x) P_V(x)$,
and 
a probability distribution on the system ${\cal X}$
by $(\Gamma \circ P_V) (x):= \sum_{v \in {\cal V}} \Gamma_v(x) P_V(x)$.
Since any function $f$ from ${\cal V}$ to ${\cal X}$ can be regarded as 
a transition matrix from ${\cal V}$ to ${\cal X}$,
we can define the probability distribution 
$f \times P_V$ on the composite system ${\cal X} \times {\cal V}$ and 
the probability distribution $f \circ P_V$ on the system ${\cal X}$ in the above way.
When two distributions $P_X$ and $P_V$ are given,
$P_X\times P_V$ expresses their independent product distribution 
on ${\cal X} \times {\cal V}$.
For a given subset $\Omega$ of ${\cal X}$, 
we denote the uniform distribution on $\Omega$ by $P_{\mix,\Omega}$.
For positive integers $\sM$ and ${\sL}$,
we denote the sets $\{1, \ldots, \sM\}$ and $\{1, \ldots, {\sL}\}$
by ${\cal M}$ and ${\cal L}$, respectively.

Next, we introduce notations by using 
a classical-quantum channel $W_E:x \mapsto W_{E|x}$ from the classical system ${\cal X}$ to the quantum system ${\cal H}_{E}$. 
When we consider only one quantum system, we simplify it as $W:x \mapsto W_{x}$. 
For a simple treatment, we identify the state $\sum_{x \in {\cal X}}P_X(x)|x\rangle \langle x|$
with the distribution $P_X$.
We define the state 
$W \times P_X:=\sum_{x} W_x \otimes P_X(x) |x \rangle \langle x| $ on 
the composite system ${\cal H}_{E} \otimes {\cal X}$,
and the state
$W \circ P_X:=\sum_{x}P_X(x) W_x$ on the system ${\cal H}_{E}$,
in which,  ${\cal H}_X$ is a classical system spanned by the basis $\{|x\rangle\}$.
For a transition matrix $\Gamma$ from ${\cal V}$ to ${\cal X}$, we 
define a classical-quantum channel
$W \circ \Gamma:v \mapsto \sum_{x} W_x \Gamma_v(x)$
from the classical system ${\cal V}$ to the quantum system ${\cal H}_{E}$. 
Since the state $(W \circ \Gamma) \circ P_V$ coincides with $W \circ (\Gamma \circ P_V)$
as a state on ${\cal H}_{E}$,
we simply denote it by $W \circ \Gamma \circ P_V$.

\subsection{Single system}\Label{s22-2}
Given a normalized state $\rho$ and a non-negative operator $\sigma$
on a single quantum system $\cH$, we prepare 
relative entropy type of information quantities 
\begin{align}
D(\rho\|\sigma) &:= \Tr \rho (\log \rho-\log \sigma) \\
\underline{D}(\rho\|\sigma) &:= \Tr \rho \log (\sigma^{-1/2} \rho \sigma^{-1/2}) .
\end{align}
Similarly, for $s\in (-1,\infty)$, we can define the functions
\begin{align}
D_s(\rho\|\sigma) & := \frac{1}{s}\log \Tr \rho^{1+s} \sigma^{-s} \\
\underline{D}^*_s(\rho\|\sigma) & := \frac{1}{s}\log \Tr \rho (\sigma^{-1/2}\rho \sigma^{-1/2})^{s} .
\end{align}
Indeed, 
the quantity ${D}_s(\rho\|\sigma)$ equals
the quantity $\frac{1}{s}{\psi}(s|\rho\|\sigma)$ given in 
\cite{Ogawa-Nagaoka,Hayashi-book,H-cq}.
While 
the quantity $\underline{D}^*_1(\rho\|\sigma) $
is the same as the quantity $\underline{\psi}(1|\rho\|\sigma)$ given in \cite{H-cq},
the quantity $\underline{D}^*_s(\rho\|\sigma)$
is different from 
the quantity $\frac{1}{s}\underline{\psi}(s|\rho\|\sigma)$ given in \cite{H-cq}.
Hence, we use the notation $\underline{D}^*_s(\rho\|\sigma)$ instead of $\underline{D}_s(\rho\|\sigma)$.

Since
$s\mapsto s D_s(\rho\|\sigma)$ and
$s\mapsto s \underline{D}^*_s(\rho\|\sigma)$ 
are convex
and
$\lim_{s \to 0}s D_s(\rho\|\sigma)=\lim_{s \to 0}s \underline{D}^*_s(\rho\|\sigma)=0$,
$D_s(\rho\|\sigma)$ and $\underline{D}^*_s(\rho\|\sigma)$
are monotone increasing for $s\in \bR$.
(The convexity of $s\mapsto s D_s(\rho\|\sigma)$ is shown in \cite{Hayashi-book}.
We can show the convexity of $s\mapsto s \underline{D}^*_s(\rho\|\sigma)$ in the same way
by calculating the second derivative.) 
Further,
since $\lim_{s \to 0}D_s(\rho\|\sigma)=D(\rho\|\sigma)$
and
$\lim_{s \to 0}\underline{D}^*_s(\rho\|\sigma)=\underline{D}(\rho\|\sigma)$,
\begin{align}
D(\rho\|\sigma) & \le D_s(\rho\|\sigma), \\
\underline{D}(\rho\|\sigma) & \le \underline{D}^*_s(\rho\|\sigma) 
\Label{8-15-14-b} 
\end{align}
for $s \in (0,1]$.
The information processing inequalities
\begin{align}
D({\cal E}(\rho)\|{\cal E}(\sigma)) & \le D(\rho\|\sigma) , \quad
D_s({\cal E}(\rho)\|{\cal E}(\sigma)) \le D_s(\rho\|\sigma)
\Label{8-21-7} 
\end{align}
hold for $s\in (-1,1]$ and a TP-CP map ${\cal E}$ \cite[(5.30), (5.41)]{Hayashi-book}.
However, this kind of inequality does not fold for $\underline{D}(\rho\|\sigma)$ or 
$\underline{D}^*_s(\rho\|\sigma)$ in general.
Since the inequalities (\ref{8-21-7}) are natural property
for information quantities,
we consider that
the quantities $D(\rho\|\sigma)$ and $D_s(\rho\|\sigma)$
describe
the essential information quantities,
and 
the quantities $\underline{D}(\rho\|\sigma)$ 
and $\underline{D}^*_s(\rho\|\sigma)$ 
are technical tools for our derivation. 
As is shown in the end of this subsection, the relations
\begin{align}
\underline{D}(\rho\|\sigma) & \le D(\rho\|\sigma) \Label{8-26-1}\\
\underline{D}^*_s(\rho\|\sigma) & \le D_s(\rho\|\sigma) \Label{8-26-2} 
\end{align}
hold for $s\in [-1,1]$.
When $\sigma $ is $I$, we obtain von Neumann entropy and R\'{e}nyi entropy of $\rho$ as
\begin{align}
H(\rho)&:=-D(\rho\|I)= -\underline{D}(\rho\|I)\\
H_{1+s}(\rho)&:=-D_s(\rho\|I)= -\underline{D}^*_s(\rho\|I).
\end{align}

When the state $\sigma$
has the spectral decomposition $\sigma= \sum_i s_i E_i$,
the pinching map ${\cal E}_\sigma$ is defined as
\begin{align}
{\cal E}_\sigma(\rho):=\sum_{i} E_i \rho E_i.
\end{align}
When $v$ is the number of the eigenvalues of $\sigma$,
the inequality 
\begin{align}
\rho \le v {\cal E}_{\sigma}(\rho).
\Label{8-15-23}
\end{align}
holds\cite[Lemma 3.8]{Hayashi-book},\cite{H2001}.
Hence, we obtain
\begin{align}
\sigma^{-1/2} \rho \sigma^{-1/2}
\le
v \sigma^{-1/2} {\cal E}_{\sigma}(\rho)  \sigma^{-1/2}. \Label{8-26-3}
\end{align}
Since $x \mapsto \log x$ is matrix monotone,
\begin{align}
\log \sigma^{-1/2} \rho \sigma^{-1/2}
\le
\log v+ 
\log \sigma^{-1/2} {\cal E}_{\sigma}(\rho)  \sigma^{-1/2} .
\end{align}
Since
\begin{align}
\Tr \rho \log \sigma^{-1/2} {\cal E}_{\sigma}(\rho)  \sigma^{-1/2}
=\Tr {\cal E}_{\sigma}(\rho) \log \sigma^{-1/2} {\cal E}_{\sigma}(\rho)  \sigma^{-1/2} ,
\end{align}
we obtain 
\begin{align}
D(\rho\|\sigma)
\le 
D({\cal E}_{\sigma}(\rho)\|\sigma) +\log v=
\underline{D}({\cal E}_{\sigma}(\rho)\|\sigma) +\log v .
\Label{8-15-8-a}
\end{align}

\quad {\it Proofs of (\ref{8-26-1}) and (\ref{8-26-2}):}\quad
Here, we show Inequalities (\ref{8-26-1}) and (\ref{8-26-2})
by using Inequality (\ref{8-26-3}).
Since the method based on Inequalities (\ref{8-15-23}) and (\ref{8-26-3})
is very important in this paper,
we put these proofs here not in Appendix.

For $s \in [0,1]$, 
since $x^s$ is operator monotone,
the inequality (\ref{8-26-3}) implies that
\begin{align*}
& \Tr \rho (\sigma^{-1/2}\rho\sigma^{-1/2})^s
\le
v \Tr \rho (\sigma^{-1/2}{\cal E}_{\sigma}(\rho)\sigma^{-1/2})^s \\
= &
v \Tr {\cal E}_{\sigma}(\rho) (\sigma^{-1/2}{\cal E}_{\sigma}(\rho)\sigma^{-1/2})^s \\
= &
v \Tr {\cal E}_{\sigma}(\rho)^{1+s} \sigma^{-s}
\le
v \Tr \rho^{1+s} \sigma^{-s},
\end{align*}
where $v$ is the number of eigenvalues of $\sigma$.
In the $n$-fold setting,
we obtain
\begin{align*}
e^{s\underline{D}^*_s(\rho^{\otimes n}\|\sigma^{\otimes n})}
\le
v_n 
e^{s D_s(\rho^{\otimes n}\|\sigma^{\otimes n} )},
\end{align*}
where $v_n$ is the number of eigenvalues of $\sigma^{\otimes n}$.
That is, the relation
\begin{align}
s\underline{D}^*_s(\rho \|\sigma )
\le
\frac{\log v_n}{n}+ s D_s(\rho\|\sigma)
\end{align}
holds.
Taking the limit $n \to \infty$, we obtain (\ref{8-26-1}) with $s \in [0,1]$.
Finally, taking the limit $s \to 0$, we obtain (\ref{8-26-2}).

For $s \in [-1,0]$, 
since $-x^s$ is operator monotone,
the inequality (\ref{8-26-3}) implies that
\begin{align*}
& \Tr \rho (\sigma^{-1/2}\rho\sigma^{-1/2})^s
\ge
v \Tr \rho (\sigma^{-1/2}{\cal E}_{\sigma}(\rho)\sigma^{-1/2})^s \\
= &
v \Tr {\cal E}_{\sigma}(\rho) (\sigma^{-1/2}{\cal E}_{\sigma}(\rho)\sigma^{-1/2})^s \\
= &
v \Tr {\cal E}_{\sigma}(\rho)^{1+s} \sigma^{-s}
\ge
v \Tr \rho^{1+s} \sigma^{-s},
\end{align*}
which implies that
\begin{align}
s\underline{D}^*_s(\rho \|\sigma )
\ge
\frac{\log v_n}{n}+ s D_s(\rho\|\sigma).
\end{align}
Taking the limit $n \to \infty$, we obtain (\ref{8-26-1}) with $s \in [-1,0]$.
\endproof

\subsection{Composite system}\Label{s22-3}
This paper heavily employs mutual information type quantities rather than conditional entropy type quantities
because the channel coding is closely linked to the mutual information rather than the conditional entropy even with the existence of the eavesdropper.
Now, we introduce mutual information type quantities 
in a composite system ${\cal H}_X \otimes {\cal H}_{E}$
by using the relative entropy type quantities
$D(\rho\|\sigma)$, $D_s(\rho\|\sigma)$,
$\underline{D}(\rho\|\sigma)$, and
$\underline{D}^*_s(\rho\|\sigma)$
when the composite state $\rho_{E,X}$ is 
given as $W \times P_X$ by using 
a distribution $P_X$ on ${\cal X}$ and
a classical-quantum channel $W:x \mapsto W_x$ from the classical system ${\cal X}$ to the quantum system ${\cal H}_{E}$.
Using two kinds of relative entropies
$D(\rho\|\sigma)$ and $\underline{D}(\rho\|\sigma)$,
we define two kinds of quantum versions of
the mutual information between ${\cal H}_X$ and ${\cal H}_{E}$ as
\begin{align}
I(X;E|\rho_{E,X}) &:= D( \rho_{E,X} \| \rho_{E} \otimes \rho_X) \\
\underline{I}(X;E|\rho_{E,X}) &:= \underline{D}( \rho_{E,X} \| \rho_{E} \otimes \rho_X) ,
\end{align}
where $\rho_E$, $\rho_X$ are reduced density operators
while the conditional entropy is given as
\begin{align}
H(X|E|\rho_{E,X}) &:= - D( \rho_{E,X} \| \rho_{E} \otimes I) .
\end{align}
In this notation,
we use the random variable for identifying the classical system while we use the subscript of the quantum system for identifying the quantum system.
Further, we define modified version mutual informations as 
\begin{align}
I_s(X;E|\rho_{E,X}) &:= D_s( \rho_{E,X} \| \rho_{E} \otimes \rho_X) \\
\underline{I}^*_s(X;E|\rho_{E,X}) &:= \underline{D}^*_s( \rho_{E,X} \| \rho_{E} \otimes \rho_X)
\end{align}
while 
the conditional R\'{e}nyi entropy is defined \cite{H-cq,H-precise} as
\begin{align}
H_{1+s}(X|E|\rho_{E,X}) &:= - D_s( \rho_{E,X} \| \rho_{E} \otimes I) .
\end{align}
Since $D_s(\rho\|\sigma)$ is monotone increasing for $s\in \bR$,
the function
$ {H}_{1+s}(X|E|\rho_{E,X})$ is monotone decreasing for $s\in \bR$.
In particular,
\begin{align}
H(X|E|\rho_{E,X}) & \ge H_{1+s}(X|E|\rho_{E,X})
\Label{8-15-14} 
\end{align}
for $s\in (0,1]$ because $\lim_{s \to 0}H_{1+s}(X|E|\rho_{E,X})=H(X|E|\rho_{E,X})$.
When a TP-CP map ${\cal E}$ on the system $\cH_E$ is applied, 
Relations (\ref{8-21-7}) imply that
\begin{align}
I(X;E|{\cal E}(\rho_{E,X})) \le I(X;E|\rho_{E,X}),
\quad
I_s(X;E|{\cal E}(\rho_{E,X})) \le I_s(X;E|\rho_{E,X})\Label{9-4-1}
\end{align}
for $s \in (-1,1]$.
However, $\underline{I}(X;E|\rho_{E,X})$
and $\underline{I}^*_s(X;E|\rho_{E,X})$
do not satisfy the same property.
In this paper, due to these properties,
we consider that
the quantities $I(X;E|{\cal E}(\rho_{E,X}))$
and $I_s(X;E|{\cal E}(\rho_{E,X}))$
describe essential information quantities.
The other quantities
$\underline{I}(X;E|\rho_{E,X})$
and $\underline{I}^*_s(X;E|\rho_{E,X})$
are technical tools for overcoming the difficulty 
caused by the non-commutativity.
That is, our final result will be described by using 
the quantities $I(X;E|{\cal E}(\rho_{E,X}))$
and $I_s(X;E|{\cal E}(\rho_{E,X}))$.

Next, using a state $\sigma_E$ on ${\cal H}_E$, 
we extend $I(X;E|\rho_{E,X})$ and $I_s(X;E|\rho_{E,X})$ as
\begin{align}
I(X;E|\rho_{E,X}\|\sigma_E) &:= D( \rho_{E,X} \| \sigma_{E} \otimes \rho_X) \\
I_s(X;E|\rho_{E,X}\|\sigma_E) &:= D_s( \rho_{E,X} \| \sigma_{E} \otimes \rho_X) .
\end{align}
The conditional R\'{e}nyi entropy $H(X|E|\rho_{E,X})$ is also generalized \cite{H-cq} as
\begin{align*}
H_{1+s}(X|E|\rho_{E,X}\|\sigma_E) &:=
- D_s( \rho_{E,X} \| \sigma_{E} \otimes I).
\end{align*}
Note that our definition of the quantity $H_{2}(X|E|\rho_{E,X}\|\sigma_E)$
is different from 
Renner's definition $H_{2}(X|E|\rho_{E,X}\|\sigma_E)= - \underline{D}_1^*( \rho_{E,X} \| \sigma_{E} \otimes I)$ \cite{Ren05} in the non-commutative case.
Generally, due to (\ref{8-26-2}), our definition gives a smaller value than Renner's definition.

The first quantity $I(X;E|\rho_{E,X}\|\sigma_E)$ satisfies
\begin{align}
I(X;E|\rho_{E,X}\|\sigma_E) - I(X;E|\rho_{E,X}) 
= D( \rho_E \| \sigma_{E}) \ge 0\Label{8-04-a}
\end{align}
for any state $\sigma_E$ on $\cH_E$.
That is, 
\begin{align}
\min_{\sigma_E} I(X;E|\rho_{E,X}\|\sigma_E) = I(X;E|\rho_{E,X}) .
\end{align}

For the minimization of the second quantity $I_s(X;E|\rho_{E,X}\|\sigma_E)$, 
we introduce another type of mutual information as
\begin{align}
I_s^{{\rm G}}(X;E|\rho_{E,X}) := 
\frac{1+s}{s}\log \Tr  \left( \sum_x P_X(x)
(W_x)^{1+s}\right)^{\frac{1}{1+s}} .
\end{align}
This quantity can be written by using 
a quantum extension of Gallager function\cite{Gal}
\begin{align*}
\phi(s|W,P_X) := 
\log \Tr  \left( \sum_x P_X(x)
(W_x)^{1/(1-s)}\right)^{1-s} 
\end{align*}
as
\begin{align*}
I_s^{{\rm G}}(X;E|\rho_{E,X}) = \frac{1+s}{s}\phi(\frac{s}{1+s}|W,P_X).
\end{align*}
These quantities satisfy
\begin{align}
\lim_{s\to 0} I_s^{{\rm G}}(X;E|\rho_{E,X})
=\lim_{s\to 0} I_s(X;E|\rho_{E,X})
=I (X;E|\rho_{E,X}),
\Label{9-5-c1}
\end{align}
and can be characterized by the following lemmas
\begin{lem}\Label{l3}
The equation
\begin{align}
\min_{\sigma_E} 
I_s(X;E|\rho_{E,X} \| \sigma_{E} )
=
I_s^{{\rm G}}(X;E|\rho_{E,X})
\Label{8-12-a}
\end{align}
holds for $s \in (-1,\infty)$,
where $\sigma$ is restricted to normalized states on ${\cal H}_{E}$.
The minimum is realized when\par
\noindent $\sigma_{E}= (\sum_x P_X(x) W_x^{1+s})^{\frac{1}{1+s}}/\Tr (\sum_x P_X(x) W_x^{1+s})^{\frac{1}{1+s}}$.
Thus, the inequality
\begin{align}
I_s(X;E|\rho_{E,X}) \ge I_s^{{\rm G}}(X;E|\rho_{E,X})
\Label{10-25-101}
\end{align}
holds for $s \in (-1,\infty)$.
\end{lem}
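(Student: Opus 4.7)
The strategy is a Sibson-type decomposition that reduces the lemma to the non-negativity of the Petz R\'{e}nyi divergence $D_s$.

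First, I would exploit the classical-quantum block structure of $\rho_{E,X}=\sum_x P_X(x)\,W_x\otimes|x\rangle\langle x|$, which is block diagonal in $\mathcal{X}$. Then $\rho_{E,X}^{1+s}=\sum_x P_X(x)^{1+s}\,W_x^{1+s}\otimes|x\rangle\langle x|$ and $(\sigma_E\otimes\rho_X)^{-s}=\sigma_E^{-s}\otimes\sum_x P_X(x)^{-s}|x\rangle\langle x|$, and upon multiplying and tracing, the powers of $P_X(x)$ recombine into $P_X(x)$, giving
$$\Tr\rho_{E,X}^{1+s}(\sigma_E\otimes\rho_X)^{-s}=\Tr A\,\sigma_E^{-s},\qquad A:=\sum_x P_X(x)W_x^{1+s}.$$
Thus the functional under minimization collapses to $I_s(X;E|\rho_{E,X}\|\sigma_E)=\tfrac{1}{s}\log \Tr A\,\sigma_E^{-s}$.

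Second, I would introduce the auxiliary density operator $\tilde\rho:=A^{1/(1+s)}/Z$ where $Z:=\Tr A^{1/(1+s)}$. Since $A=Z^{1+s}\tilde\rho^{1+s}$, this yields $\Tr A\,\sigma_E^{-s}=Z^{1+s}\Tr\tilde\rho^{1+s}\sigma_E^{-s}=Z^{1+s}\,e^{sD_s(\tilde\rho\|\sigma_E)}$, and taking $\tfrac{1}{s}\log$ produces the Sibson-type identity
$$I_s(X;E|\rho_{E,X}\|\sigma_E)=\frac{1+s}{s}\log Z+D_s(\tilde\rho\|\sigma_E)=I_s^{\rm G}(X;E|\rho_{E,X})+D_s(\tilde\rho\|\sigma_E),$$
where the second equality identifies $\frac{1+s}{s}\log Z$ with $I_s^{\rm G}$ by definition. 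Provided $D_s(\tilde\rho\|\sigma_E)\ge 0$ with equality iff $\sigma_E=\tilde\rho$, the minimum in (\ref{8-12-a}) is attained at $\sigma_E=\tilde\rho$ and equals $I_s^{\rm G}$, exactly as claimed.

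The non-negativity of $D_s$ between normalized states is handled case-by-case: for $s\in(-1,1]$ I would apply the data-processing inequality (\ref{8-21-7}) to the TP-CP trace map $\rho\mapsto(\Tr\rho)\in\mathbb{C}$, yielding $D_s(\tilde\rho\|\sigma_E)\ge D_s(1\|1)=0$; for $s>1$, where DPI is not provided by (\ref{8-21-7}), I would invoke the monotonicity of $s\mapsto D_s$ noted in the paper together with Klein's inequality, giving $D_s\ge D_0=D\ge 0$. Inequality (\ref{10-25-101}) is then an immediate corollary, since $I_s(X;E|\rho_{E,X})=I_s(X;E|\rho_{E,X}\|\rho_E)$ is one particular value of the minimized functional and hence bounds $I_s^{\rm G}$ from above. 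The only real obstacle in the plan is the bookkeeping needed for the $s>1$ regime; the Sibson decomposition and the DPI-based non-negativity otherwise fall out cleanly from the cq-structure of $\rho_{E,X}$ and results already assembled in Section \ref{s22}.
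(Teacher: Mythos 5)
Your proposal is correct, but it takes a genuinely different route from the paper's. Both arguments begin with the same reduction: the cq-block structure of $\rho_{E,X}$ collapses $I_s(X;E|\rho_{E,X}\|\sigma_E)$ to $\tfrac{1}{s}\log\Tr A\,\sigma_E^{-s}$ with $A=\sum_x P_X(x)W_x^{1+s}$. From there the paper optimizes $\Tr A\,\sigma_E^{-s}$ over $\sigma_E$ directly by invoking the operator H\"older inequality when $s<0$ and the reverse operator H\"older inequality when $s>0$ (with conjugate exponents $1/(1+s)$ and $-1/s$, and the equality case pinning down the optimal $\sigma_E$). You instead prove a Sibson-type identity, $I_s(X;E|\rho_{E,X}\|\sigma_E)=I_s^{\rm G}(X;E|\rho_{E,X})+D_s(\tilde\rho\|\sigma_E)$ with $\tilde\rho=A^{1/(1+s)}/\Tr A^{1/(1+s)}$, and reduce the minimization to the non-negativity of the Petz R\'enyi divergence between normalized states, which you get from DPI (\ref{8-21-7}) for $s\in(-1,1]$ and from the monotonicity of $s\mapsto D_s$ together with Klein's inequality for $s>1$. (The monotonicity argument actually covers all $s>0$, so you could dispense with the DPI route there and reserve DPI only for $s\in(-1,0)$.) The paper's H\"older argument is slightly more self-contained at the level of trace inequalities and identifies the optimizer in one stroke through the H\"older equality condition, while your decomposition is conceptually tidier: it isolates the gap as an explicit divergence and makes (\ref{10-25-101}) an immediate specialization at $\sigma_E=\rho_E$, which is precisely how the paper deduces that corollary as well.
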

The proof of Lemma \ref{l3} is given in Appendix \ref{sl3}. 
Since $D_s(\rho\|\sigma)$ is monotone increasing for $s$,
$I_s(X;E|\rho_{E,X} \| \sigma_{E} )$ is also monotone increasing for $s$.
Due to Lemma \ref{l3}, 
$I_s^{{\rm G}}(X;E|\rho_{E,X})$ is also monotone increasing for $s$.

Conversely, as shown in \cite[(16)]{H-leaked}, the following lemma holds.
\begin{lem}\Label{l2}
When all of the states $W_x$ are commutative each other,
the inequality
\begin{align}
I_s(X;E|\rho_{E,X}) \le I_{\frac{s}{1-s}}^{{\rm G}}(X;E|\rho_{E,X})
\Label{8-23-4b}
\end{align}
holds for $s \in [0,1)$.
\end{lem}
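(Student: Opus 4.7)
The plan is to exploit the commutativity to reduce the statement to a purely classical inequality and then apply Hölder's inequality in the standard way used in the classical reference \cite{H-leaked}.

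First, since $\{W_x\}_{x \in \cX}$ is a commuting family of density operators on $\cH_E$, they admit a common spectral basis $\{|y\rangle\}_{y \in \cY}$, so we can write $W_x = \sum_{y \in \cY} W(y|x) |y\rangle\langle y|$ for a classical channel $W(y|x)$. Consequently, $\rho_E = \sum_y P_E(y)|y\rangle\langle y|$ with $P_E(y) = \sum_x P_X(x) W(y|x)$, and everything in $I_s(X;E|\rho_{E,X})$ and $I_s^{\mathrm{G}}(X;E|\rho_{E,X})$ is classical. In particular,
\begin{align*}
I_s(X;E|\rho_{E,X}) &= \frac{1}{s}\log \sum_{x,y} P_X(x) W(y|x)^{1+s} P_E(y)^{-s}, \\
I_{s/(1-s)}^{\mathrm{G}}(X;E|\rho_{E,X}) &= \frac{1}{s}\log \sum_{y} \Bigl(\sum_x P_X(x) W(y|x)^{1/(1-s)}\Bigr)^{1-s},
\end{align*}
using that with $t=s/(1-s)$ we have $1+t=1/(1-s)$ and $(1+t)/t = 1/s$.

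Second, the core step will be a Hölder inequality applied inside each $y$. I would factor $P_X(x)W(y|x)^{1+s} = [P_X(x)^{1-s} W(y|x)] \cdot [P_X(x)^{s} W(y|x)^{s}]$ and apply Hölder with conjugate exponents $p = 1/(1-s)$ and $q = 1/s$ (both $\ge 1$ since $s \in [0,1)$), which yields
\begin{align*}
\sum_x P_X(x) W(y|x)^{1+s} &\le \Bigl(\sum_x P_X(x) W(y|x)^{1/(1-s)}\Bigr)^{1-s} \Bigl(\sum_x P_X(x) W(y|x)\Bigr)^{s} \\
&= \Bigl(\sum_x P_X(x) W(y|x)^{1/(1-s)}\Bigr)^{1-s} P_E(y)^{s}.
\end{align*}
Dividing by $P_E(y)^{s}$ and summing over $y$ produces
\begin{align*}
\sum_{x,y} P_X(x) W(y|x)^{1+s} P_E(y)^{-s} \le \sum_y \Bigl(\sum_x P_X(x) W(y|x)^{1/(1-s)}\Bigr)^{1-s}.
\end{align*}
Taking $\frac{1}{s}\log$ of both sides gives the desired inequality (\ref{8-23-4b}).

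The edge cases $s=0$ and $s \to 1^-$ require a small comment: at $s=0$ both sides reduce to $I(X;E|\rho_{E,X})$ by (\ref{9-5-c1}), and for $s$ close to $1$ the Hölder exponents remain admissible. I do not anticipate any real obstacle here; the only thing to be careful about is the bookkeeping of the exponents $(1+s)$, $1/(1-s)$, and the translation $t = s/(1-s)$ so that the factor $(1+t)/t = 1/s$ matches on both sides. Because the inequality is purely classical once simultaneous diagonalization is performed, none of the operator-monotonicity machinery from the previous subsection (such as (\ref{8-26-3})) is needed.
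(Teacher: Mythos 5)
Your argument is correct, and it is in fact the only self-contained proof available here: the paper itself does not prove Lemma~\ref{l2} but simply defers to \cite[(16)]{H-leaked}, so there is no in-paper proof to compare against. Your reduction via simultaneous diagonalization followed by H\"older's inequality with conjugate exponents $p=1/(1-s)$, $q=1/s$ (applied to the split $P_X(x)W(y|x)^{1+s}=[P_X(x)^{1-s}W(y|x)]\cdot[P_X(x)^{s}W(y|x)^{s}]$) is the standard classical derivation, and the exponent bookkeeping $t=s/(1-s)$, $1+t=1/(1-s)$, $(1+t)/t=1/s$ checks out; the $s=0$ endpoint follows by (\ref{9-5-c1}).
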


\subsection{Classical-quantum channel}\Label{s22-4}
Next, we characterize the above defined quantities 
by using a classical-quantum channel $W$ from ${\cal X}$ to ${\cal H}_E$.
\begin{align}
I(X;E|W \times P_X) &= \sum_x P_X(x) \Tr W_x (\log W_x -\log (W \circ P_X) ) \\
\underline{I}(X;E|W \times P_X) 
&= \sum_x P_X(x) \Tr W_x \log ( (W \circ P_X)^{-1/2}W_x (W \circ P_X)^{-1/2}) \\
I_s(X;E|W \times P_X) 
&= \frac{1}{s}\log \Tr \sum_x P_X(x) \Tr W_x^{1+s} (W \circ P_X)^{-s} \\
\underline{I}^*_s(X;E|W \times P_X)
& = \frac{1}{s}\log \Tr \sum_x P_X(x) \Tr W_x ( (W \circ P_X)^{-1/2}W_x (W \circ P_X)^{-1/2})^{s}.
\end{align}
When we apply a quantum operation ${\cal E}$ on ${\cal H}_{E}$, 
we can define another channel ${\cal E}[W]:x \mapsto {\cal E}(W_x)$ from the classical system ${\cal X}$ to the quantum system ${\cal H}_{E}$.
Then, Inequalities (\ref{9-4-1}) are rewritten as
\begin{align}
I(X;E|{\cal E}[W] \times P_X)
&\le I(X;E|W \times P_X)
\Label{8-15-7}
\\
I_s(X;E|{\cal E}[W] \times P_X)
&\le I_s(X;E|W \times P_X)
\Label{8-15-7-b}
\end{align}
hold for $s\in (-1,1]$.
When $v$ is the number of the eigenvalues of $W \circ P_X$,
Relation (\ref{8-15-8-a}) yields an inequality opposite to (\ref{8-15-7}):
\begin{align}
I(X;E|W \times P_X) 
& \le 
I(X;E|{\cal E}_{W \circ P_X}[W] \times P_X ) +\log v \nonumber \\
&= \underline{I}(X;E|{\cal E}_{W \circ P_X}[W] \times P_X ) +\log v .
\Label{8-15-8} 
\end{align}

Further, we obtain the following lemma.
\begin{lem}\Label{l1}
The function $P_X \mapsto 
e^{\frac{s}{1+s}I_s^{{\rm G}}(X;E|W \times P_X )}
=e^{\phi(\frac{s}{1+s}|W,P_X)}$ 
is convex with $s\in [-\frac{1}{2},0]$, and is concave with $s\in [0,\infty]$.
\end{lem}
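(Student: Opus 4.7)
The plan is to rewrite the quantity as the trace of a power of an operator that depends affinely on $P_X$, and then reduce the claim to the operator convexity or concavity of the scalar power function $t\mapsto t^p$.

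Using the identity $I_s^{\mathrm{G}}(X;E|W\times P_X) = \tfrac{1+s}{s}\phi\bigl(\tfrac{s}{1+s}\bigm|W,P_X\bigr)$ recorded just above the statement, one has
\begin{align*}
e^{\frac{s}{1+s} I_s^{\mathrm{G}}(X;E|W\times P_X)}
\;=\; e^{\phi(\frac{s}{1+s}|W,P_X)}
\;=\; \Tr\bigl(A(P_X)\bigr)^{1/(1+s)},
\end{align*}
where $A(P_X):=\sum_x P_X(x)\, W_x^{1+s}$ is a positive operator depending affinely on $P_X$. Hence the convexity (resp.\ concavity) of the target function in $P_X$ is equivalent to the convexity (resp.\ concavity) of the map $A\mapsto \Tr A^p$ on the positive cone, with $p := 1/(1+s)$.

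The key ingredient is the classical fact that $t\mapsto t^p$ on $[0,\infty)$ is operator convex for $p\in[1,2]$ and operator concave for $p\in[0,1]$. For $s\in[-1/2,0]$ the exponent $p=1/(1+s)$ lies in $[1,2]$, so for positive $A,B$ and $\lambda\in[0,1]$ one has the operator inequality $\bigl(\lambda A+(1-\lambda)B\bigr)^p\le \lambda A^p+(1-\lambda)B^p$; taking the trace preserves this, giving trace convexity of $A\mapsto \Tr A^p$. For $s\in[0,\infty)$ one has $p\in(0,1]$ and the reversed operator inequality yields trace concavity. Composing with the affine map $P_X\mapsto A(P_X)$ transfers each property back to $P_X$; the endpoint $s=0$ corresponds to $p=1$ where the function is affine (hence both convex and concave), and the limit $s\to\infty$ (with $p\to 0$) is handled by continuity.

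The main point I expect to require care is the invocation of the operator-version convexity/concavity of $t^p$: mere scalar convexity of $t^p$ for $p\ge 1$ would not directly produce trace convexity of $A\mapsto \Tr A^p$ without an additional majorization argument. The exact matching of the ranges $s\in[-1/2,0]$ and $s\in[0,\infty]$ with the operator-convexity window $[1,2]$ and the operator-concavity window $[0,1]$ of $t^p$ strongly suggests that operator convexity/concavity is indeed the intended tool, and no further input is needed beyond this classical ingredient and the affine-in-$P_X$ structure of $A(P_X)$.
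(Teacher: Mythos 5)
Your proof is correct and takes essentially the same approach as the paper: both rewrite $e^{\phi(\frac{s}{1+s}|W,P_X)}=\Tr\bigl(\sum_x P_X(x)\,W_x^{1+s}\bigr)^{1/(1+s)}$, observe that the argument $\sum_x P_X(x)\,W_x^{1+s}$ is affine in $P_X$, and invoke the operator convexity of $t\mapsto t^{1/(1+s)}$ for $s\in[-\tfrac12,0]$ (exponent in $[1,2]$) and its operator concavity for $s\in[0,\infty]$ (exponent in $[0,1]$). The paper's proof is a one-line version of precisely this argument.
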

However, the quantity $P_X \mapsto e^{s I_s(X;E|{\cal E}[W] \times P_X)} $ does not satisfy the convexity in general.

\begin{proof}
The convexity and concavity of $P_X \mapsto 
e^{\phi(\frac{s}{1+s}|W,P_X)} = \Tr  \left( \sum_x P_X(x) (W_x)^{1+s}\right)^{\frac{1}{1+s}}$ 
follow from the operator convexity and operator concavity of
$x^{\frac{1}{1+s}}$ for the respective parameter $s$.
\end{proof}

\subsection{Security criteria}\Label{s22-5}
Next, in order to give the security criteria,
we characterize the leaked information when an information $X$ is transmitted via the classical-quantum channel $W$.
When the information $X$ is subjected to the uniform distribution $P_{\mix,\cX}$ on $\cX$,
the leaked information is given as
\begin{align*}
& I_{\mix}(X;E|W):= I(X;E|W\times P_{\mix,\cX})
= D(W\times P_{\mix,\cX} \| (W \circ P_{\mix,\cX}) \otimes P_{\mix,\cX}) \\
=& \sum_{x\in \cX} P_{\mix,\cX}(x) D(W_x \| W \circ P_{\mix,\cX})
=\sum_{x\in \cX} P_{\mix,\cX}(x) ( H(W \circ P_{\mix,\cX}) -  H(W_x) ).
\end{align*}
When we do not know the distribution of the information $X$,
we adopt the following value as the criterion of the leaked information:
\begin{align*}
& I_{\max}(X;E|W):= \max_{P_X} I(X;E|W\times P_{X})
=\max_{P_X} D(W\times P_{X} \| (W \circ P_{X}) \otimes P_{X}) \\
=& \max_{P_X} \sum_{x\in \cX} P_X(x) D(W_x \| W \circ P_X)
=\max_{P_X} \sum_{x\in \cX} P_{X}(x) ( H(W \circ P_{X}) -  H(W_x) ).
\end{align*}

Next, we consider the leaked information by using the trace norm instead of the mutual information.
When the state is given as 
$\rho_{E,X}=\sum_{x} P_X(x) W_x\otimes |x\rangle \langle x| $,
the leaked information is characterized as
\begin{align}
d_1(X;E|\rho_{E,X}):= \|\rho_{E,X}- \rho_E \otimes P_{X} \|_1.
\end{align}
When we take into account the uniformity as well as the independence,
we employ the following quantity \cite{Ren05}:
\begin{align}
d_1'(X;E|\rho_{E,X}):= \|\rho_{E,X}- \rho_E \otimes P_{\mix,{\cal X}} \|_1.
\end{align}
In this notation, 
when a function $f:{\cal X}\mapsto{\cal Y}$ is given,
$d_1'(f(X);E|\rho_{E,X})$ expresses the following quantity:
\begin{align}
d_1'(f(X);E|\rho_{E,X})
=\left\| \sum_{y} (\sum_{x \in f^{-1}(y)} P_X(x) W_x ) \otimes |y\rangle \langle y|
- \rho_E \otimes P_{\mix,{\cal Y}} \right\|_1.
\end{align}
Now, we consider the case when 
the eavesdropper's system consists of the quantum system $\cH_E$ and the classical system $\cY$.
As shown in Appendix \ref{a8-12-1}, when $X$ and $Y$ obey the uniform distribution independently,
this quantity satisfies
\begin{align}
d_1'(X;Y,E|\rho_{X,Y,E})
\le 2 d_1'(X,Y;E|\rho_{X,Y,E}).
\Label{8-09-o}
\end{align}

Using these quantities, we define the leaked information for the channel $W$.
When the input information $X$ is subjected to the uniform distribution $P_{\mix,\cX}$ on $\cX$,
the leaked information is given as
\begin{align*}
d_{1,\mix}(X;E|W):= 
d_1(X;E|W \times P_{\mix,\cX})
=
d_1'(X;E|W \times P_{\mix,\cX})
=
\sum_{x \in \cX} P_{\mix,\cX}(x) \| W_{x} - W \circ P_{\mix,\cX} \|_1 .
\end{align*}
When we do not know the distribution of the information $X$,
we adopt the following value as the criterion of the leaked information:
\begin{align*}
d_{1,\max}(X;E|W):= 
\max_{P_X} d_1(X;E|W \times P_X)
=
\max_{P_X} \sum_{x \in \cX} P_{X}(x) \| W_{x} - W \circ P_{X} \|_1 .
\end{align*}

In this paper,
we employ the quantities 
$I_{\mix}(X;E|W)$, $I_{\max}(X;E|W)$, $d_{1,\mix}(X;E|W)$, and $d_{1,\max}(X;E|W)$ 
as security criteria.
Since $I_{\max}(X;E|W)$ and $d_{1,\max}(X;E|W)$ 
do not depend on the distribution on the input messages,
the results based on these are called 
source universal \cite{korner80}.

Using the quantum version of Pinsker inequality,
we obtain
\begin{align}
d_{1,\mix}(X;E|W)^2 &\le 2 I_{\mix}(X;E|W) \Label{8-19-14-a} \\
d_{1,\max}(X;E|W)^2 &\le 2 I_{\max}(X;E|W) \Label{8-19-14-e}. 
\end{align}
Conversely, we can bound $I_{\mix}(W)$ and $I_{\max}(W)$ by using $d_{1,\mix}(W)$ and $d_{1,\max}(W)$
in the following way.
Applying Fannes's inequality, we obtain
\begin{align}
0 \le & I_{\mix}(X;E|W)  
=\sum_{x\in \cX} P_{\mix,\cX}(x)  (H(W \circ P_{\mix,\cX}) - H(W_x)) \nonumber \\
\le &  \sum_{x\in \cX} P_{\mix,\cX}(x) \eta (\|W \circ P_{\mix,\cX} - W_x \|_1, \log d_{E})  \nonumber \\
\le &  \eta (\sum_{x\in \cX} P_{\mix,\cX}(x) \|W \circ P_{\mix,\cX} - W_x \|_1, \log d_{E})  \nonumber \\
= &   \eta (d_{1,\mix}(X;E|W), \log d_{E})  ,
\Label{8-26-9-a} \\
0 \le & I_{\max}(X;E|W)  
=\max_{P_X}\sum_{x\in \cX} P_{X}(x)  (H(W \circ P_{X}) - H(W_x)) \nonumber \\
\le &  \max_{P_X}\sum_{x\in \cX} P_{X}(x) \eta (\|W \circ P_{X} - W_x \|_1, \log d_{E})  \nonumber \\
\le &  \eta (\max_{P_X} \sum_{x\in \cX} P_{X}(x) \|W \circ P_{X} - W_x \|_1, \log d_{E})  \nonumber \\
= &   \eta (d_{1,\max}(X;E|W), \log d_{E})  ,
\Label{8-26-9-e} 
\end{align}
where $d_{E}$ is the dimension of ${\cal H}_{E}$
and $\eta(x,y):= -x \log x +x y $.


Therefore,
if the quantity $I_{\mix}(X;E|W)$ or $I_{\max}(X;E|W)$ goes to zero,
the quantity $d_{1,\mix}(X;E|W)$ or $d_{1,\max}(X;E|W)$ goes to zero.
The converse is also true
when 
the quantity $d_{1,\mix}(X;E|W)$ or $d_{1,\max}(X;E|W)$ exponentially goes to zero and the dimension $d_{E}$ grows linearly.
However, their speeds of both convergence do not coincide with each other.
Hence, we consider both quantities.

Next, we consider the relation between 
$I_{\mix}(X;E|W)$ and $I_{\max}(X;E|W)$ 
in a special class of channels.
A channel $W$ from the set $\cX$ to the quantum system $\cH_E$ is called {\it additive}
when the set ${\cal X}$ has a structure of module and 
there exist a state $\rho$ on the system $\cH_E$ and a projective representation $U$ of ${\cal X}$ such that $W_{x}= U_x \rho U_x^\dagger$.
In this case, as shown in Appendix \ref{a9-4-c1}, the relation
\begin{align}
I_{\mix}(X;E|W)=I_{\max}(X;E|W)
\Label{9-4-c1}
\end{align}
holds.
This equation is useful in the latter discussions.

\section{Generalization of channel resolvability}\Label{sV}
In order to treat the quantum wiretap channel model, 
we treat the quantum channel resolvability problem for a given
classical-quantum channel $W$ from the classical system ${\cal X}$
to the quantum system ${\cal H}$,
in which,
the output quantum state $W_x$ on ${\cal H}$ is given for an element $x\in {\cal X}$.
In the channel resolvability,
we treat the approximation of the given output state 
with the output average state of the auxiliary input random variable
when 
the auxiliary input random variable is subject to 
the uniform distribution of the subset ${\cal X}_0$ of 
the input system ${\cal X}$ of the given channel. 
That is, the purpose is 
minimizing the cardinality of the input subset ${\cal X}_0$ 
when 
$W \circ P_{\mix,\cX_0}=\sum_{x\in {\cal X}_0}\frac{1}{|{\cal X}_0|}W_{x}$
approximates a given output state $\rho$.

Now, we generalize this problem to the case when
the uniform distribution on ${\cal X}_0$ is not available.
We assume that the auxiliary random variable is subject to 
a given distribution $P_A$ on the set ${\cal A}$.
Choosing a map $f$ from ${\cal A}$ to ${\cal X}$,
we approximate a given state $\rho$ by
$W \circ f \circ P_A =\sum_{a\in {\cal A}} P_A(a) W_{f(a)}$. 

Now, we apply the random coding to the alphabet $X$ with the probability distribution $P_X$.
The map $\Phi$ from ${\cal A}$ to ${\cal X}$ is
randomly chosen in the following way.
For each $a \in {\cal A}$, $\Phi(a)$ is the random variable subject to the distribution $P_X$ on ${\cal X}$.
For $a \neq a' \in {\cal A}$, 
$\Phi(a)$ is independent of $\Phi(a')$.
Then, for a distribution $P_A$ on ${\cal A}$,
we can define the distribution $\Phi \circ P_A(x)$ on ${\cal X}$.


Then, we have the following lemma:
\begin{lem}
\Label{Lee1}
For $s \in (0,1]$, we obtain
\begin{align}
\rE_{\Phi} D( W \circ \Phi \circ P_A \| W \circ P_X)
\le
\frac{v^s e^{-s H_{1+s}(P_A)+s I_s(X;E|W \times P_X)}}{s},
\Label{9-19-1}
\end{align}
where $v$ is the number of eigenvalues of $W \circ P_X$.
Similarly,
\begin{align}
& 
\rE_{\Phi} e^{s \underline{D}( W \circ \Phi \circ P_A \| W \circ P_X)}
\le
\rE_{\Phi} e^{s \underline{D}^*_s(W \circ \Phi \circ P_A \| W\circ P_X)} \nonumber \\
\le &
1+ e^{-s H_{1+s}(P_A)+s \underline{I}^*_s(X;E|W \times P_X)}.
\Label{9-19-2}
\end{align}
\end{lem}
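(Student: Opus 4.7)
Set $\sigma := W \circ P_X$, $\rho_\Phi := W \circ \Phi \circ P_A = \sum_a p_a W_{\Phi(a)}$ with $p_a := P_A(a)$, and $\hat W_x := \sigma^{-1/2}W_x\sigma^{-1/2}$. The plan is to prove (\ref{9-19-2}) first and then obtain (\ref{9-19-1}) from it via a pinching reduction. The first inequality in (\ref{9-19-2}) is immediate from $\underline{D} \le \underline{D}^*_s$ in (\ref{8-15-14-b}) together with monotonicity of $x\mapsto e^{sx}$, so the only work is in the second inequality.

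For the second inequality in (\ref{9-19-2}), expand
\[
e^{s\underline{D}^*_s(\rho_\Phi\|\sigma)} = \Tr \rho_\Phi (\sigma^{-1/2}\rho_\Phi\sigma^{-1/2})^{s} = \sum_a p_a \Tr W_{\Phi(a)}\bigl(p_a \hat W_{\Phi(a)} + Y_a\bigr)^{s},
\]
where $Y_a := \sum_{a'\neq a} p_{a'}\hat W_{\Phi(a')}$. Take $\rE_\Phi$, condition on $\Phi(a)=x$, and note that $\rE[\hat W_{\Phi(a')}] = \sigma^{-1/2}\sigma\sigma^{-1/2}$ is the identity on the support of $\sigma$; the operator concavity of $x^s$ on $[0,\infty)$ for $s\in(0,1]$ (L\"owner--Heinz) together with the operator Jensen inequality then gives $\rE_{\Phi\setminus \Phi(a)}(p_a \hat W_x + Y_a)^{s} \le (p_a\hat W_x + (1-p_a)I)^{s}$. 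Because $p_a\hat W_x$ commutes with the scalar $(1-p_a)I$, the commutative subadditivity $(A+B)^s \le A^s + B^s$ further bounds this by $p_a^s\hat W_x^s + (1-p_a)^s I$. Tracing against $W_x$ (using $\Tr W_x = 1$), summing over $a$ and $x$, and invoking $(1-p_a)^s \le 1$, $\sum_a p_a^{1+s} = e^{-sH_{1+s}(P_A)}$, and $\sum_x P_X(x)\Tr W_x \hat W_x^s = e^{s\underline{I}^*_s(X;E|W\times P_X)}$ produces the bound in (\ref{9-19-2}).

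For (\ref{9-19-1}), start with $D \le D_s$ for $s\in(0,1]$, then combine $sD_s = \log \Tr \rho^{1+s}\sigma^{-s}$ with $\log x \le x-1$ so that the task reduces to bounding $\rE_\Phi \Tr \rho_\Phi^{1+s}\sigma^{-s} - 1$. Now apply the pinching inequality (\ref{8-15-23}): writing $\tilde\rho_\Phi := {\cal E}_\sigma(\rho_\Phi)$ gives $\rho_\Phi \le v\tilde\rho_\Phi$, and since $x^s$ is operator monotone on $[0,1]$ we have $\rho_\Phi^s \le v^s\tilde\rho_\Phi^s$; therefore
\[
\Tr \rho_\Phi^{1+s}\sigma^{-s} = \Tr \rho_\Phi\, \rho_\Phi^s\sigma^{-s} \le v^s \Tr \rho_\Phi\, \tilde\rho_\Phi^s\sigma^{-s} = v^s \Tr \tilde\rho_\Phi^{1+s}\sigma^{-s},
\]
the final equality using that $\tilde\rho_\Phi^s\sigma^{-s}$ commutes with $\sigma$, so tracing against it cannot distinguish $\rho_\Phi$ from $\tilde\rho_\Phi$. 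Since everything now commutes with $\sigma$, the exact same split-and-Jensen argument, with $\tilde W_x := {\cal E}_\sigma(W_x)$ in place of $\hat W_x$, yields $\rE_\Phi \Tr \tilde\rho_\Phi^{1+s}\sigma^{-s} \le 1 + e^{-sH_{1+s}(P_A)+sI_s(X;E|\tilde W\times P_X)}$; one more use of data processing for $I_s$, namely (\ref{8-15-7-b}), under the pinching map ${\cal E}_\sigma$ replaces $I_s(\tilde W)$ by $I_s(W)$, delivering (\ref{9-19-1}).

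The main obstacle is the non-commutativity between $W_x$ and $\sigma$. For $\underline{D}^*_s$ the sandwiched form $\sigma^{-1/2}\rho\sigma^{-1/2}$ in its definition handles this smoothly: after averaging, the residual term $Y_a$ collapses to a scalar multiple of the identity, which commutes with the retained $p_a\hat W_x$ and makes the scalar subadditivity $(A+B)^s\le A^s+B^s$ available without any commutator penalty. For $D$ this structural luxury is absent; one must pinch by $\sigma$, and the $v^s$ prefactor in (\ref{9-19-1}) is precisely the price paid, after which the problem reduces to the essentially classical resolvability calculation.
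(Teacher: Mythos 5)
Your proof of (\ref{9-19-2}) is correct and is essentially the paper's argument: expand $\Tr\rho_\Phi(\sigma^{-1/2}\rho_\Phi\sigma^{-1/2})^s$, split off the $a$-th term, pull $\rE_{\Phi\setminus\Phi(a)}$ inside via operator concavity of $x^s$, then use commutative subadditivity $(A+B)^s\le A^s+B^s$. The minor difference (you keep $(1-p_a)I$ and bound $(1-p_a)^s\le 1$ at the end, whereas the paper first bounds $\sum_{a'\neq a}P_A(a')\le 1$) is cosmetic.

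Your proof of (\ref{9-19-1}) has a genuine gap: the pinching reduction $\Tr \rho_\Phi^{1+s}\sigma^{-s} \le v^s \Tr \tilde\rho_\Phi^{1+s}\sigma^{-s}$ with $\tilde\rho_\Phi:={\cal E}_\sigma(\rho_\Phi)$ is false. The operator order $\rho_\Phi^s\le v^s\tilde\rho_\Phi^s$ transfers through a trace only against a PSD operator, and $\rho_\Phi\sigma^{-s}$ is not PSD (indeed not Hermitian). Writing $\Tr\rho_\Phi^{1+s}\sigma^{-s}=\Tr\rho_\Phi^s\bigl(\rho_\Phi^{1/2}\sigma^{-s}\rho_\Phi^{1/2}\bigr)$ does give a valid bound by $v^s\Tr\tilde\rho_\Phi^s\bigl(\rho_\Phi^{1/2}\sigma^{-s}\rho_\Phi^{1/2}\bigr)$, but this cannot be collapsed to $v^s\Tr\tilde\rho_\Phi^{1+s}\sigma^{-s}$, because $\rho_\Phi^{1/2}$ is not in the commutant of $\sigma$, so the pinching self-adjointness you invoke does not apply. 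A concrete counterexample at $s=1$: take $\sigma=\mathrm{diag}(0.99,0.01)$ (so $v=2$) and $\rho_\Phi$ the pure state with entries $\rho_{11}=0.9$, $\rho_{22}=0.1$, $\rho_{12}=\rho_{21}=0.3$ in that basis; then $\Tr\rho_\Phi^2\sigma^{-1}=\Tr\rho_\Phi\sigma^{-1}\approx 10.91$, while $v\Tr\tilde\rho_\Phi^2\sigma^{-1}=2(0.81/0.99+1)\approx 3.64$. (Note that by (\ref{8-21-7}), $\Tr\tilde\rho_\Phi^{1+s}\sigma^{-s}\le\Tr\rho_\Phi^{1+s}\sigma^{-s}$ always; the claim is that the ratio is at most $v^s$, which the example refutes.) Moreover, even if the step were valid, your chain would give $\rE_\Phi\Tr\rho_\Phi^{1+s}\sigma^{-s}-1 \le v^s-1 + v^s e^{-sH_{1+s}(P_A)+sI_s}$, leaving an extra additive $(v^s-1)/s$ that is not present in (\ref{9-19-1}). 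The paper avoids both problems by never passing to $D_s$ or to a global pinching of $\rho_\Phi$: it expands $D$ directly, splits the resolvability sum inside $\log$, pulls $\rE_{\Phi|\Phi(a)}$ inside $\log$ by concavity, bounds the residual by $\sigma$, applies the pinching bound $W_{\Phi(a)}\le v\,{\cal E}_\sigma(W_{\Phi(a)})$ to the single surviving term inside $\log\bigl(P_A(a)W_{\Phi(a)}+\sigma\bigr)-\log\sigma$ via operator monotonicity of $\log$, and finally uses $\log(1+x)\le x^s/s$. This keeps the ``$+I$'' inside the logarithm (so it never contributes additively) and produces exactly the stated bound.
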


Using (\ref{9-19-2}) in Lemma \ref{Lee1}, we obtain
\begin{align}
& 
\rE_{\Phi} e^{s D( W \circ \Phi \circ P_A \| W \circ P_X)}
\le
v^s \rE_{\Phi} e^{s D( {\cal E}_{W \circ {P_X}}(W \circ \Phi \circ P_A) \| W \circ {P_X})} \nonumber \\
\le &
v^s (1+ e^{-s H_{1+s}(P_A)}e^{s \underline{I}^*_s(X;E|{\cal E}_{W \circ P_X}[W] \times P_X)}) \nonumber \\
= &
v^s (1+ e^{-s H_{1+s}(P_A)}e^{s I_s (X;E|{\cal E}_{W \circ P_X}[W] \times P_X)}) \nonumber \\
\le &
v^s (1+ e^{-s H_{1+s}(P_A)}e^{s I_s(X;E| W \times P_X)}).
\Label{9-20-1}
\end{align}

Next, we consider 
this kind of approximation when the space ${\cal X}$ has the structure of a module.
That is, 
when a submodule $C\subset {\cal X}$ and an element $y \in {\cal X}$ are given,
using the uniform distribution $P_{\mix,C+y}$ on the subset $C+y:=\{x+y|x\in C\}$,
we approximate the output state $W \circ {P_{\mix,{\cal X}}}$ with the uniform input distribution.
In this case, we evaluate 
$D(W \circ {P_{\mix,C+y}}\| W \circ {P_{\mix,{\cal X}}}) $.
Now, we consider the condition for ensemble of submodules $\{C[\bZ]\}$ of ${\cal X}$ when the submodule $C[\bZ]$ is decided by a random variable $\bZ$.
\begin{condition}\Label{C2}
The cardinality of $C[\bZ]$ is a constant ${\sL}$.
Any element $x \neq 0 \in {\cal X}$
is included in $C[\bZ]$ with probability at most 
$\frac{{\sL}}{|{\cal X}|}$.
\end{condition}

Then, we have the following lemma:
\begin{lem}
\Label{Lee2}
When the random variable $\tilde{X}\in {\cal X}$ obeys the uniform distribution on ${\cal X}$
and is independent of the choice of $C[\bZ]$,
we obtain
\begin{align}
\rE_{\bZ,\tilde{X}} D( W \circ {P_{\mix,C[\bZ]+\tilde{X}}} \| W \circ {P_{\mix,{\cal X}}})
\le
\frac{v^s e^{s I_s (X;E|W \times P_{\mix,\cX})}}{{\sL}^s s}
\Label{9-19-1-a}
\end{align}
for $s \in (0,1]$, where $v$ is the number of eigenvalues of $W \circ {P_{\mix,{\cal X}}}$.
Similarly,
\begin{align}
& 
\rE_{\bZ,\tilde{X}} e^{s \underline{D}( W \circ {P_{\mix,C[\bZ]+\tilde{X}}} \| W \circ {P_{\mix,{\cal X}}})} \nonumber \\
\le &
\rE_{\bZ,\tilde{X}} e^{s \underline{D}^*_s( W \circ {P_{\mix,C[\bZ]+\tilde{X}}} \| W \circ {P_{\mix,{\cal X}}})} \nonumber \\
\le &
1+ \frac{1}{{\sL}^s}e^{s \underline{I}^*_s(X;E|W \times P_{\mix,{\cal X}})}.
\Label{9-19-2-a}
\end{align}
\end{lem}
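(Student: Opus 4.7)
The plan is to follow the template of Lemma \ref{Lee1}, treating the random coset $C[\bZ]+\tilde{X}$ as a random code whose distinct points are only pairwise (not jointly) uniform, and showing that this weaker property is sufficient for the non-commutative proof of Lemma \ref{Lee1}. First I would establish the exponential-moment bound (\ref{9-19-2-a}), and then deduce the linear bound (\ref{9-19-1-a}) by combining it with the pinching inequality (\ref{8-15-8-a}) and the elementary inequality $sD\le e^{sD}-1$ valid since $D\ge 0$.

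To set up the analogy, fix any enumeration $C[\bZ]=\{c_1(\bZ),\ldots,c_{\sL}(\bZ)\}$ and write $\rho=\frac{1}{\sL}\sum_{i=1}^{\sL}W_{c_i(\bZ)+\tilde X}$, $\sigma=W\circ P_{\mix,\cX}$. The map $i\mapsto c_i(\bZ)+\tilde X$ plays the role of $\Phi$ in Lemma \ref{Lee1} with $P_A$ uniform on $\sL$ letters, so $e^{-sH_{1+s}(P_A)}=\sL^{-s}$. Because $\tilde X$ is uniform on $\cX$ and independent of $\bZ$, each marginal $c_i(\bZ)+\tilde X$ is uniform on $\cX$, so the role of $P_X$ in Lemma \ref{Lee1} is taken here by $P_{\mix,\cX}$. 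For the crucial pair input, a direct count combined with Condition \ref{C2} gives, for $y\ne x$,
\begin{equation*}
\sum_{i\ne j}\Pr\bigl(c_i(\bZ)+\tilde X=x,\ c_j(\bZ)+\tilde X=y\bigr)=\frac{\sL}{|\cX|}\Pr{}_{\!\bZ}\bigl(y-x\in C[\bZ]\bigr)\le \frac{\sL^2}{|\cX|^2},
\end{equation*}
which matches exactly the aggregate pair probability one would obtain in the i.i.d.\ case. This identity is the single new ingredient beyond Lemma \ref{Lee1}.

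With the pair bound in hand, the non-commutative expansion behind (\ref{9-19-2}) carries over verbatim. Expanding $\rE_{\bZ,\tilde X}\Tr\rho(\sigma^{-1/2}\rho\sigma^{-1/2})^{s}$ using operator concavity of $x\mapsto x^{s}$ for $s\in(0,1]$ splits the contribution into an $i=j$ diagonal part yielding $\sL^{-s}e^{s\underline{I}^{*}_{s}(X;E|W\times P_{\mix,\cX})}$, and an $i\ne j$ off-diagonal part which, by the pair bound above, is dominated by $\Tr\sigma(\sigma^{-1/2}\sigma\sigma^{-1/2})^{s}=1$; this yields (\ref{9-19-2-a}). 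To obtain (\ref{9-19-1-a}) I would then apply (\ref{8-15-8-a}) to bound $D(\rho\|\sigma)\le \underline{D}(\cE_{\sigma}(\rho)\|\sigma)+\log v$, observe that the pinching $\cE_{\sigma}$ commutes with the linear random construction of $\rho$ so that (\ref{9-19-2-a}) applies equally to the pinched channel $\cE_{\sigma}[W]$ (for which $\underline{I}^{*}_{s}=I_{s}$, since the pinched output states all commute with $\sigma$), invoke monotonicity (\ref{8-15-7-b}) to replace the pinched $I_{s}$ by $I_{s}(X;E|W\times P_{\mix,\cX})$, and finally linearize via $sD\le e^{sD}-1$ to pull the expectation inside.

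The main obstacle lies in the first step: one must pin down that the non-commutative expansion of Lemma \ref{Lee1} couples the random inputs only through bilinear objects of the form $\rE\,\Tr W_{\Phi(i)}\sigma^{-1/2}W_{\Phi(j)}\sigma^{-1/2}$, so that the aggregate pair inequality afforded by Condition \ref{C2} genuinely suffices in place of full pairwise independence. Once that point is secured, the remainder is a routine reuse of the pinching argument together with the comparisons between $\underline{D}^{*}_{s}$, $\underline{D}$, and $D$ established in Section \ref{s22}.
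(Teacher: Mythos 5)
Your analysis of (\ref{9-19-2-a}) is on the right track. The pair-probability identity you isolate,
\begin{equation*}
\sum_{i\ne j}\Pr\bigl(c_i(\bZ)+\tilde X=x,\ c_j(\bZ)+\tilde X=y\bigr)
=\frac{\sL}{|\cX|}\Pr{}_{\!\bZ}\bigl(y-x\in C[\bZ]\bigr)\le \frac{\sL^2}{|\cX|^2},
\end{equation*}
is exactly what the paper uses, phrased there as a conditional-distribution identity: $P_{\bZ,\tilde X}(C[\bZ]=C\mid x\in C[\bZ]+\tilde X)=P_{\bZ,\tilde X}(C[\bZ]=C)$ together with $P_{\bZ,\tilde X}(x'\in C[\bZ]+\tilde X\mid x\in C[\bZ]+\tilde X)=P_\bZ(x'-x\in C[\bZ])\le \sL/|\cX|$ from Condition~\ref{C2}. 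After the concavity step this is all that enters; the coupling is indeed only through $\rE\sum_{x'\ne x\in C[\bZ]+\tilde X}W_{x'}\le \sL\, W\circ P_{\mix,\cX}$, which your aggregate pair bound reproduces. So your worry about whether the non-commutative expansion couples the random inputs more subtly is unfounded: once one conditions and applies operator concavity of $x\mapsto x^{s}$, the paper's chain goes through verbatim, and your $(\ref{9-19-2-a})$ argument is essentially the paper's.

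The genuine gap is in your proposed derivation of $(\ref{9-19-1-a})$ from $(\ref{9-19-2-a})$. Applying $(\ref{8-15-8-a})$ gives the \emph{additive} bound $D(\rho\|\sigma)\le\underline{D}(\cE_\sigma(\rho)\|\sigma)+\log v$; combined with $(\ref{9-19-2-a})$ on the pinched channel, $(\ref{8-15-7-b})$, and $sD\le e^{sD}-1$, you obtain
\begin{equation*}
\rE_{\bZ,\tilde X}D(\,\cdot\,\|\,\cdot\,)\ \le\ \frac{e^{sI_s(X;E|W\times P_{\mix,\cX})}}{\sL^{s}s}\ +\ \log v,
\end{equation*}
which is not $(\ref{9-19-1-a})$: the $\log v$ term (and, if you instead use the exponentiated form $e^{sD}\le v^{s}e^{s\underline D}$ as in $(\ref{9-20-1-a})$, the term $(v^{s}-1)/s$) is a polynomial in $n$ that does not tend to zero, whereas the lemma's bound $v^{s}e^{sI_s}/(\sL^{s}s)$ does. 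In the paper, $(\ref{9-19-1-a})$ is proved \emph{directly}, not as a consequence of $(\ref{9-19-2-a})$: after the conditioning and concavity steps reduce the problem to $\Tr\sum_x\frac{1}{|\cX|}W_x\bigl(\log(\tfrac1{\sL}W_x+W\circ P_{\mix,\cX})-\log(W\circ P_{\mix,\cX})\bigr)$, the pinching inequality $W_x\le v\,\cE_{W\circ P_{\mix,\cX}}(W_x)$ is applied \emph{inside} the logarithm, so that after operator monotonicity of $\log$ and the elementary bound $\log(1+t)\le t^{s}/s$ the factor $v$ emerges multiplicatively as $v^{s}$ rather than additively as $\log v$. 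You need a separate chain of inequalities for $(\ref{9-19-1-a})$, parallel to but not deducible from $(\ref{9-19-2-a})$.
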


Using (\ref{9-19-2-a}) in Lemma \ref{Lee2}, we obtain
\begin{align}
& 
\rE_{\bZ,\tilde{X}} e^{s D( W \circ {P_{\mix,C[\bZ]+\tilde{X}}} \| W \circ {P_{\mix,{\cal X}}})} \nonumber \\
\le &
v^s \rE_{\bZ,\tilde{X}} e^{s D( {\cal E}_{W \circ {P_{\mix,{\cal X}}}}(W \circ {P_{\mix,C[\bZ]+\tilde{X}}}) \| W \circ {P_{\mix,{\cal X}}})} \nonumber \\
\le &
v^s (1+ \frac{1}{{\sL}^s}e^{s\underline{I}^*_s(X;E|
{\cal E}_{W \circ {P_{\mix,{\cal X}}}}[W] \times P_{\mix,{\cal X}})}) \nonumber \\
= &
v^s (1+ \frac{1}{{\sL}^s} 
e^{s I_s(X;E|{\cal E}_{W \circ {P_{\mix,{\cal X}}}}[W] \times P_{\mix,{\cal X}})}) \nonumber \\
\le &
v^s (1+ \frac{1}{{\sL}^s} e^{s I_s (X;E|W \times P_{\mix,{\cal X}})}).
\Label{9-20-1-a}
\end{align}

Next, we consider the case when the uniform distribution on $C[\bZ]$ is not available
and only a distribution $P_A$ on another module ${\cal A}$ is available. 
Now, we assume the following condition for the ensemble for injective homomorphisms 
$f_{\bZ}$
from ${\cal A}$ to ${\cal X}$.
\begin{condition}\Label{C2-b}
Any elements 
$x \neq 0 \in {\cal X}$ and $a \neq 0 \in {\cal C}$,
the relation $f_{\bZ}(a)=x$ holds with probability at most 
$\frac{1}{|{\cal X}|-1}$.
\end{condition}
When ${\cal X}$ and ${\cal A}$ are vector spaces of a finite field $\bF_q$,
the set of all injective morphisms from ${\cal A}$ to ${\cal X}$ satisfies Condition \ref{C2-b}.

We choose the random variable $\tilde{X}\in {\cal X}$ that obeys the uniform distribution on ${\cal X}$ 
that is independent of the choice of $f_{\bZ}$.
Then, we define a map $f_{\bZ|\tilde{X}}(a):=f_{\bZ}(a)+\tilde{X}$
and have the following lemma:

\begin{lem}
\Label{Lee3}
Under the above choice,
we obtain
\begin{align}
\rE_{\bZ,\tilde{X}} D( W \circ f_{\bZ|\tilde{X}} \circ P_A \| W \circ P_{\mix,{\cal X}})
\le &
\frac{v^s e^{-s H_{1+s}(P_A)} e^{s I_s(X;E|W \times P_{\mix,{\cal X}})}}{ s}
\Label{9-19-1-b}
\end{align}
for $s \in (0,1]$, 
where $v$ is the number of eigenvalues of $W \circ {P_{\mix,{\cal X}}}$.
Similarly,
\begin{align}
& 
\rE_{\bZ,\tilde{X}} e^{s \underline{D}( W \circ f_{\bZ|\tilde{X}}\circ P_A \| W \circ {P_{\mix,{\cal X}}})} \nonumber \\
\le &
\rE_{\bZ,\tilde{X}} e^{s\underline{D}^*_s(W \circ f_{\bZ|\tilde{X}} \circ P_A \| W \circ {P_{\mix,{\cal X}}})} \nonumber \\
\le &
1+ e^{-s H_{1+s}(P_A)}e^{s \underline{I}^*_s(X;E|W \times P_{\mix,{\cal X}})}.
\Label{9-19-2-b}
\end{align}
\end{lem}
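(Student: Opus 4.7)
My plan is to parallel the proof of Lemma~\ref{Lee1}, using the structural properties that Condition~\ref{C2-b} imposes on the ensemble $\{f_{\bZ|\tilde X}\}$ to replace independence with a tractable conditional distribution. Write $\rho := W\circ P_{\mix,\cX}$, $\Phi(a) := f_{\bZ|\tilde X}(a) = f_{\bZ}(a)+\tilde X$, $Y_a := \rho^{-1/2}W_{\Phi(a)}\rho^{-1/2}$, $T := \sum_a P_A(a)Y_a$, and $T_a' := \sum_{a'\neq a}P_A(a')Y_{a'}$. Two structural observations drive everything: (i) for every $a$, $\Phi(a)$ is uniformly distributed on $\cX$ because $\tilde X$ is uniform and independent of $\bZ$; (ii) for $a\neq a'$, the value $f_{\bZ}(a'-a)$ is uniform on $\cX\setminus\{0\}$, since by injectivity it takes values only there and Condition~\ref{C2-b} bounds each of the $|\cX|-1$ non-zero values' probability by $\tfrac{1}{|\cX|-1}$, forcing equality. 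Consequently, conditionally on $\Phi(a)=x$, $\Phi(a')$ is uniform on $\cX\setminus\{x\}$.

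To derive the sandwiched bound~(\ref{9-19-2-b}) I apply, for each $a$, the operator inequality $(A+B)^s\le A^s+B^s$ for $A,B\ge 0$ and $s\in(0,1]$ to $T=P_A(a)Y_a+T_a'$, producing $T^s\le P_A(a)^s Y_a^s + (T_a')^s$; multiplying by $P_A(a)W_{\Phi(a)}$ and summing over $a$ gives $\Tr(W\circ f_{\bZ|\tilde X}\circ P_A)\,T^s \le \sum_a P_A(a)^{1+s}\Tr W_{\Phi(a)}Y_a^s + \sum_a P_A(a)\Tr W_{\Phi(a)}(T_a')^s$. Taking $\rE_{\bZ,\tilde X}$, property~(i) evaluates the first sum to $e^{-sH_{1+s}(P_A)}\,e^{s\underline I^*_s(X;E|W\times P_{\mix,\cX})}$. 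For the second sum, condition on $\Phi(a)=x$: property~(ii) gives $\rE[T_a'\mid\Phi(a)=x]=(1-P_A(a))\,M^{(x)}$ with $M^{(x)}:=\tfrac{|\cX|I-\rho^{-1/2}W_x\rho^{-1/2}}{|\cX|-1}$; operator Jensen (concavity of $t\mapsto t^s$) applied conditionally, followed by averaging $\Tr W_x(\cdot)$ over uniform $x$, exploits the identity $\frac{1}{|\cX|}\sum_x M^{(x)}=I$ and the positivity $\underline I^*_s\ge 0$ to bound this contribution by~$1$. Combining yields~(\ref{9-19-2-b}); the first inequality in that display is $\underline D\le\underline D^*_s$ from~(\ref{8-15-14-b}).

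The relative-entropy bound~(\ref{9-19-1-b}) follows by the pinching argument already used in~(\ref{9-20-1})--(\ref{9-20-1-a}): inequality~(\ref{8-15-23}) gives $W\circ f_{\bZ|\tilde X}\circ P_A\le v\,\cE_\rho(W\circ f_{\bZ|\tilde X}\circ P_A)$, hence $e^{sD(\rho_\Phi\|\rho)}\le v^s\,e^{s\underline D^*_s(\cE_\rho(\rho_\Phi)\|\rho)}$. Substituting the just-proved~(\ref{9-19-2-b}) for the channel $\cE_\rho[W]$ (whose outputs commute with $\rho$, so $\underline I^*_s$ coincides with $I_s$) and invoking data-processing~(\ref{8-15-7-b}) $I_s(X;E|\cE_\rho[W]\times P_{\mix,\cX})\le I_s(X;E|W\times P_{\mix,\cX})$, together with the elementary inequality $sD\le e^{sD}$, produces~(\ref{9-19-1-b}).

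The delicate step is the off-diagonal bound by~$1$. In the i.i.d.\ setting of Lemma~\ref{Lee1}, independence of $\Phi(a)$ and $\Phi(a')$ factorises the expectation and operator Jensen applied directly to $T_a'$ (whose unconditional mean is $(1-P_A(a))I$) gives the constant trivially. Here, the common shift $\tilde X$ couples $\Phi(a)$ and $\Phi(a')$, and one is forced to condition on $\Phi(a)=x$; the resulting $x$-dependent conditional mean $(1-P_A(a))M^{(x)}$ must still average, through $\Tr W_x(\cdot)^s$ over uniform $x$, to something at most~$1$, and it is the exact uniformity of $f_{\bZ}(a'-a)$ on $\cX\setminus\{0\}$ forced by Condition~\ref{C2-b} that makes the constant come out sharp rather than with a spurious $|\cX|/(|\cX|-1)$ prefactor.
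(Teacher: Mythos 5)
Your diagnosis of the subtlety is correct and worth stating: conditionally on $\Phi(a)=x$, the co-image $\Phi(a')$ is uniform on $\cX\setminus\{x\}$, \emph{not} on $\cX$, which is exactly what the paper's appendix glosses over by asserting $\rE_{\bZ|Z}W_{f_{\bZ|\tilde X}(a')}=W\circ P_{\mix,\cX}$. However, the mechanism you propose to repair this does not work. The first step, the ``operator inequality $(A+B)^s\le A^s+B^s$ for $A,B\ge 0$'', is false for non-commuting operators: take $A=|0\rangle\langle0|$, $B=|+\rangle\langle+|$, $s=1/2$. Then $A^{1/2}+B^{1/2}=A+B$ has eigenvalues $1\pm 1/\sqrt 2$, while $(A+B)^{1/2}$ has eigenvalues $\sqrt{1\pm1/\sqrt 2}$, and $(1-1/\sqrt 2)-\sqrt{1-1/\sqrt 2}<0$, so $(A+B)^{1/2}\not\le A^{1/2}+B^{1/2}$. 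The inequality the paper actually uses (step~(\ref{3-a}) of the Lemma~\ref{Lee1} proof) is $(X+I)^s\le X^s+I$, which holds only because the second summand is a scalar multiple of the identity and therefore commutes with $X$; you cannot decompose $T$ into $P_A(a)Y_a$ and $T_a'$ this way, because $Y_a$ and $T_a'$ do not commute. The paper's order of operations---operator Jensen on the full conditional expectation first, which produces a constant ($cI$) as the second term, then the split against the identity---is essential and cannot be rearranged as you propose. This also makes the subsequent $M^{(x)}$ calculation moot; in any case, that step asserts $\frac{1}{|\cX|}\sum_x\Tr W_x(M^{(x)})^s\le 1$ as a one-line consequence of ``operator Jensen plus $\frac{1}{|\cX|}\sum_x M^{(x)}=I$ plus $\underline I^*_s\ge 0$'', but $W_x$ and $M^{(x)}$ do not commute and the stated ingredients do not obviously combine to yield that bound.

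Your route to~(\ref{9-19-1-b}) is also off. The paper does \emph{not} derive it from~(\ref{9-19-2-b}): it repeats the Lemma~\ref{Lee1} proof of~(\ref{9-19-1}) directly at the level of $\log$, applying Jensen to the operator $\log$, inserting the pinching bound~(\ref{8-15-23}), and only then using $\log(I+X)\le X^s/s$; this is how the exact prefactor $v^s/s$ appears. Your proposal---pinch, apply~(\ref{9-19-2-b}), then invoke ``$sD\le e^{sD}$''---necessarily leaves an additive remnant of order $(v^s-1)/s$ or $\log v$ that the stated bound does not contain, so it produces a strictly weaker inequality rather than~(\ref{9-19-1-b}). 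Finally, note that even the paper's intended route requires, after conditioning, that $P_A(a)W_x+(1-P_A(a))\tfrac{|\cX|\rho-W_x}{|\cX|-1}$ be bounded by $P_A(a)W_x+\rho$; this holds termwise only when $P_A(a)\ge 1/|\cX|$, and closing the gap for the small-mass atoms requires the injectivity constraint $|\cA|\le|\cX|$ together with a concavity/Jensen argument on the scalar map $\beta\mapsto\beta\bigl(\tfrac{(1-\beta)|\cX|}{|\cX|-1}\bigr)^s$---a genuinely different mechanism from the $M^{(x)}$ identity you invoke.
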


Using (\ref{9-19-2-b}) in Lemma \ref{Lee3}, we obtain
\begin{align}
& 
\rE_{\bZ,\tilde{X}} e^{s D( W \circ f_{\bZ|\tilde{X}} \circ P_A \| W \circ {P_{\mix,{\cal X}}})} \nonumber \\
\le &
v^s \rE_{\bZ,\tilde{X}} e^{s D( {\cal E}_{W \circ {P_{\mix,{\cal X}}}}
[W] \circ f_{\bZ|\tilde{X}}\circ P_A  \| W \circ {P_{\mix,{\cal X}}})} \nonumber\\
\le &
v^s (1+ e^{-s H_{1+s}(P_A)}e^{s\underline{I}^*_s
(X;E|{\cal E}_{W \circ {P_{\mix,{\cal X}}}}[W] \times P_{\mix,{\cal X}})}) \nonumber\\
= &
v^s (1+ e^{-s H_{1+s}(P_A)} e^{s I_s(X;E|{\cal E}_{W \circ {P_{\mix,{\cal X}}}}[W] \times P_{\mix,{\cal X}})}) \nonumber\\
\le &
v^s (1+ e^{-s H_{1+s}(P_A)} e^{s I_s(X;E|W \times P_{\mix,{\cal X}})}).
\Label{9-20-1-b}
\end{align}

Here, we construct 
an ensemble of submodules $\{C[\bZ]\}$ satisfying Condition \ref{C2}
and 
an ensemble of injective isomorphisms $\{f_{\bZ}\}$ satisfying Condition \ref{C2-b}
when ${\cal X}$ and ${\cal A}$ are given as vector spaces $\bF_q^k$ and $\bF_q^l$ of a finite field $\bF_q$
($k \ge l$).
Let $\bZ$ be the Toeplitz matrix of the size $(k-l) \times l$, which contains
$k -1$ random variables taking values in the finite field $\bF_q$, and
$\bZ'$ be the random matrix taking values in the set of invertible matrixes of the size $l \times l$
with the uniform distribution.
When $f_{\bZ}$ is given by the multiplication of the random matrix $(I,\bZ)^T$,
and $C[\bZ]$ is given by $f_{\bZ}({\cal A})$,
$\{C[\bZ]\}$ satisfies Condition \ref{C2}.
When $f_{\bZ',\bZ}$ is given by the multiplication of the random matrix $(\bZ',\bZ)^T$
with two independent random variables $\bZ'$ and $\bZ$,
the ensemble $\{f_{\bZ}\}$ satisfies Condition \ref{C2-b}.

\section{Quantum wiretap channel in a general framework}\Label{s2}
\subsection{Single-shot bounds}\Label{s2a}
Next, we consider the quantum wiretap channel model, in which
the eavesdropper (wire-tapper) Eve and the authorized receiver Bob
receive information from the authorized sender Alice.
In this case, in order for
Eve to have less information,
Alice chooses a suitable encoding.
This problem is formulated as follows.
Let $\cH_{B}$ and $\cH_{E}$ be the quantum systems of 
Bob and Eve,
and $\cX$ be the alphabet sent by Alice.
Then, the main quantum channel from Alice to Bob
is described by $W_{B}:x \mapsto W_{B|x}$,
and the wire-tapper quantum channel from Alice to 
Eve is described by $W_{E}:x \mapsto W_{E|x}$.
That is, $W_{B|x}$ ($W_{E|x}$) is Bob's (Eve's) density matrix on the system $\cH_{B}$ ($\cH_{E}$).
In this setting,
Alice chooses $\sM$ 
distributions $\Gamma_1, \ldots, \Gamma_{\sM}$ on $\cX$,
and she generates $x\in \cX$ subject to $\Gamma_m$
when she wants to send the message $m \in {\cal M}= \{1, \ldots, \sM\}$.
In this case, Bob (Eve) receives the density matrix $W_{B} \circ \Gamma_m = \sum_{x} \Gamma_m (x) W_{B|x}$ 
($W_{E} \circ \Gamma_m:= \sum_{x} \Gamma_m (x) W_{E|x}$).
Bob prepares $\sM$-outcome POVM 
$\{\cD_1,\ldots, \cD_{\sM}\}$ of $\cH_{B}$.
Here, we regard $\{\Gamma_1, \ldots, \Gamma_{\sM}\}$ as a transition matrix $\Gamma$ from
${\cal M}$ to $\cX$.
Therefore, the triplet $(\sM,\Gamma,
\{\cD_1,\ldots, \cD_{\sM}\})$ is called a
code, and is described by $\Phi$.
Its performance is given by the following three quantities.
The first is the size $\sM$, which is denoted by $|\Phi|$.
The second is the average 
error probability $\epsilon_{B}(\Phi)$:
\begin{align*}
\epsilon(W_{B}| \Phi) &:= \frac{1}{\sM} \sum_{m=1}^{\sM}  \Tr W_{B}\circ {\Gamma_m} (I-\cD_i ) \\
\epsilon_{\max}(W_{B}| \Phi) &:= \max_{m=1, \ldots, \sM} \Tr W_{B} \circ {\Gamma_m} (I-\cD_i ) ,
\end{align*}
and the third is Eve's information regarding the transmitted message.
Using the channel $W_{E}\circ \Gamma :m \mapsto W_{E} \circ {\Gamma_m}$,
we can describe this quantity by the following ways:
\begin{align*} 
I_{\mix}(M;E|W_{E}\circ \Gamma) &= I(M;E|W_{E} \circ \Gamma \times P_{\mix,{\cal M}}) \\
I_{\max}(M;E|W_{E}\circ \Gamma) &= \max_{P_M} I(M;E|W_{E} \circ \Gamma\times P_{M}) \\
d_{1,\mix}(M;E|W_{E}\circ \Gamma) &= 
\sum_{m=1}^{\sM} \frac{1}{\sM} \| W_{E} \circ {\Gamma_m} - W_{E} \circ \Gamma \circ P_{\mix,{\cal M}} \|_1 \nonumber \\
d_{1,\max}(M;E|W_{E}\circ \Gamma) &=
\max_{P_M} \sum_{m=1}^{\sM} P_M(m) \| W_{E} \circ {\Gamma_m} - W_{E} \circ \Gamma \circ P_{M} \|_1. \nonumber
\end{align*}
In the usual setting, the distribution $\Gamma_i$ can be chosen to 
a uniform distribution.
However, sometimes, 
it is difficult to prepare a perfect uniform distribution
in a realistic setting.
So, in the following, 
we make our code with a non-uniform distribution $P_L$ on ${\cal L}=\{1, \ldots, \sL\}$.

Now, we make a code $\Phi[\bZ]$ for the quantum wiretap channel based on the random coding method
for given integers ${\sL}$ and $\sM$.
In the following, Alice is allowed to generate a random number on $\{1,\ldots, {\sL}\}$ with the distribution $P_L$.
First, we generate the random code $\Phi[\bZ]'$ with size ${\sL}\sM$,
which is described by the
${\sL}\sM$ independent and identical random
variables $\bZ$ subject to the distribution $P_X$ on $\cX$.
That is, all of $\{\Phi[\bZ]_{l,m}'\}$ are independent and obeys the distribution $P_X$ on $\cX$.
For integers $l=1, \ldots, {\sL}$ and $m=1, \ldots, \sM$, 
As is guaranteed in the previous paper \cite{expo-chan}, 
we choose the decoder (POVM) $\cD_{l,m}'[\bZ]$ of the code $\Phi[\bZ]'$
such that
the ensemble expectation of the average error probability concerning decoding the input message $A$ is less than
$4(\sM {\sL})^{s}e^{-s I_{-s}(X;B|W_{B} \times P_X)}$ for $0 \le s \le 1$.
The proof given in \cite{expo-chan} is valid even if 
the prior distribution for sent messages is not uniform.
That is, 
when the message $(l,m)$ is sent with the probability $\frac{P_L(l)}{\sM}$,
the ensemble expectation of the average error probability concerning decoding the input message $A$ is 
bounded by $4(\sM{\sL})^{s}e^{-s I_{-s}(X;B|W_{B} \times P_X)}$ for $0 \le s \le 1$.
Using the code $\Phi[\bZ]'$,
Alice encodes her message $M=1,\ldots, \sM$ in the following way.
In order to send her message $m$,
she generates the random number $L$ subject to $P_L$,
and inputs the element $\Phi[\bZ]_{L,m}'$ in the channel $W_{B}$.
That is,
Alice generates the input signal $X$ subject to the distribution 
$\Gamma[\bZ]_m$, which is defined as $\Gamma[\bZ]_m(x):= 
\sum_l P_L(l) \delta_{\Phi[\bZ]_{l,m}',x}$,
where 
\begin{align}
\delta_{x,x'}
= \left\{
\begin{array}{ll}
1 & \hbox{if~} x=x' \\
0 & \hbox{if~} x\neq x' .
\end{array} 
\right.
\end{align}
Bob recovers the values $l$ and $m$ by using the decoder $\{\cD_{l,m}'[\bZ]\}$,
and discards the value $l$.
That is, 
Bob recovers the value $m$ by using the decoder $\cD_{m}[\bZ]:=\sum_{l=1}^{\sL} \cD_{l,m}'[\bZ]$.
We denote the code $(\sM, \Gamma[\bZ], \{\cD_{m}[\bZ]\})$ by $\Phi[\bZ]$.
Therefore, 
the above discussion in \cite{expo-chan} yields that
\begin{align}
\rE_{\bZ} \epsilon(W_{B}|\Phi[\bZ])  \le 4 \min_{0\le s\le 1}(\sM{\sL})^{s}e^{-s I_{-s}(X;B|W_{B} \times P_X)}.
\end{align}

Using (\ref{9-19-1}),
for $0<s\le 1$,
we obtain
\begin{align}
\rE_{\bZ} D(W_{E} \circ \Gamma[\bZ]_m \| W_{E} \circ P_X ) 
\le 
\frac{v^s}{s} 
e^{-s H_{1+s}(P_L) + s I_{s}(X;E|W_{E} \times P_X) },\Label{10-11-1}
\end{align}
where $v$ is the number of eigenvalues of $W_{E}\circ P_X $.
Thus, using (\ref{8-04-a}) and (\ref{10-11-1}), we obtain
\begin{align}
& \rE_{\bZ} I_{\mix}(M;E|W_{E} \circ \Gamma[\bZ]) \nonumber \\
= & 
\rE_{\bZ} 
I(M;E| (W_{E} \circ \Gamma[\bZ]) \times P_{\mix,{\cal M}}) \nonumber \\
\le & 
\rE_{\bZ} 
I(M;E| (W_{E} \circ \Gamma[\bZ]) \times P_{\mix,{\cal M}} \| W_{E} \circ P_X )
\nonumber \\
= &
\rE_{\bZ} 
\sum_{m=1}^{\sM} \frac{1}{{\sM}}
D(W_{E} \circ \Gamma[\bZ]_m \| W_{E}\circ P_X ) \nonumber \\
\le &
\frac{v^s}{s} 
e^{-s H_{1+s}(P_L) }
e^{s I_{s}(X;E|W_{E} \times P_X)}.
\Label{2-25-1} 
\end{align}
Similarly, as is shown latter,
we obtain 
\begin{align}
& \rE_{\bZ} d_{1,\mix}(M;E|W_{E} \circ \Gamma[\bZ]) \nonumber \\
\le &
(4+ \sqrt{v_s})
e^{-\frac{s}{2} H_{1+s}(P_L) + \frac{s}{2}I^{{\rm G}}_s(X;E|W_{E} \times P_X)}
\Label{8-26-10-a} \\
& \rE_{\bZ} d_{1,\mix}(M;E|W_{E} \circ \Gamma[\bZ]) \nonumber \\
\le &
(4+ \sqrt{\lceil \lambda_s \rceil})
e^{-\frac{s}{2} H_{1+s}(P_L) + \frac{s}{2}I^{{\rm G}}_s(X;E|W_{E} \times P_X)+ \frac{s}{2}}
\Label{8-26-10-a02} 
\end{align}
for $0 < s \le 1$,
where 
$v_s$ is the number of eigenvalues of $\sum_{x}P_X(x) W_{E|x}^{1+s}$
and 
$\lambda_s$ is defined as
the real number
$\log a_1-\log a_0$
by using 
the maximum eigenvalue $a_1$
and  
the minimum eigenvalue $a_0$ of $\sum_{x}P_X(x) W_{E|x}^{1+s}/\Tr \sum_{x}P_X(x) W_{E|x}^{1+s}$.

Finally, we consider what code is derived from the above random coding discussion.
Using the Markov inequality, we obtain
\begin{align}
\rP_{\bZ} 
\{ \epsilon(W_{B}|\Phi[\bZ]) \ge 3 \rE_{\bZ} \epsilon(W_{B}|\Phi[\bZ]) \}^c
&\,< \frac{1}{3} \nonumber \\
\rP_{\bZ} 
\{ I_{\mix}(M;E|W_{E} \circ \Gamma[\bZ]) \ge 3 \rE_{\bZ} I_{\mix}(M;E|W_{E} \circ \Gamma[\bZ]) \}^c
&\,< \frac{1}{3} \nonumber \\
\rP_{\bZ} 
\{ d_{1,\mix}(M;E|W_{E} \circ \Gamma[\bZ]) \ge 3 \rE_{\bZ} d_{1,\mix}(M;E|W_{E} \circ \Gamma[\bZ])\}^c
&\,< \frac{1}{3} .
\Label{10-20-101}
\end{align}
Therefore, the existence of a good code is guaranteed in the following way.
That is, we give the concrete performance of a code 
whose existence is shown in the above random coding method.

\begin{thm}\Label{3-6}
Assume that a random variable subject to the distribution $P_L$ on $\{1, \ldots, {\sL}\}$
is available.
For any integer ${\sM}$ and any probability distribution $P_X$ on $\cX$,
there exists a code $\Phi$ with an encoder $\Gamma$ 
such that
the code $\Phi$ only uses the distribution $P_L$ for mixing the input alphabet
and
\begin{align}
|\Phi| &={\sM} \nonumber \\
\epsilon(W_{B}|\Phi) & \le 12 \min_{0\le s\le 1}({\sM}{\sL})^{s}e^{-s I_{-s}(X;B|W_{B} \times P_X)}
\Label{3-8-1}\\
I_{\mix}(M;E|W_{E} \circ \Gamma) & \le  3
\min_{0 \le s \le 1}
v^s \frac{e^{-s H_{1+s}(P_L)+s I_s(X;E|W_{E} \times P_X)}}{s}
\Label{7-1-2} \\
d_{1,\mix}(M;E|W_{E} \circ \Gamma)
& \le 
3
\min_{0 \le s \le 1} \mu_s
e^{-\frac{s}{2}H_{1+s}(P_L)+\frac{s}{2}I^{{\rm G}}_s(X;E|W_{E} \times P_X)} 
\end{align}
where 
$\mu_s := 
\min\{4 +\sqrt{v_s}, (4+ \sqrt{\lceil \lambda_s \rceil})e^{\frac{s}{2}} \}$.
\end{thm}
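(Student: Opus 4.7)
The plan is to assemble the random-coding construction defined immediately before the theorem with the three expectation bounds already at hand, and then extract a single deterministic code by Markov's inequality together with a union bound.

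First I would take the random code $\Phi[\bZ]$ exactly as described: sample ${\sL}{\sM}$ i.i.d.\ codewords from $P_X$, form the stochastic encoder $\Gamma[\bZ]_m(x):=\sum_l P_L(l)\,\delta_{\Phi[\bZ]'_{l,m},x}$, and use the decoder $\cD_m[\bZ]=\sum_{l=1}^{{\sL}}\cD'_{l,m}[\bZ]$ inherited from the enlarged $({\sL}{\sM})$-message code. This ensures $|\Phi[\bZ]|={\sM}$ automatically, and by construction Alice only ever accesses the non-uniform source $P_L$ when randomising the alphabet, as required.

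Next I would collect the three expectation estimates. The error-probability bound from \cite{expo-chan} (valid even for the non-uniform prior $P_L(l)/{\sM}$, as noted above) gives $\rE_{\bZ}\epsilon(W_B|\Phi[\bZ])\le 4({\sM}{\sL})^s e^{-sI_{-s}(X;B|W_B\times P_X)}$ for every $s\in[0,1]$. Inequality (\ref{2-25-1}) provides the analogous expected bound for $I_{\mix}(M;E|W_E\circ\Gamma[\bZ])$, and the two $d_1$ estimates (\ref{8-26-10-a}) and (\ref{8-26-10-a02}) combine, via the definition of $\mu_s$, into a single bound for $\rE_{\bZ}d_{1,\mix}(M;E|W_E\circ\Gamma[\bZ])$. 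In each case the left-hand side is independent of $s$, so the expectation is automatically bounded by the infimum over $s\in[0,1]$ of the corresponding right-hand side.

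With these three expectation bounds in hand, I would apply Markov's inequality separately to each of the three non-negative random variables, as already recorded in (\ref{10-20-101}): the probability that any one exceeds three times its expectation bound is strictly below $1/3$. A union bound then guarantees a realisation $\bZ^\star$ at which none of the three events occurs, so all three inequalities hold simultaneously with the factor $3$ (which combines with the factor $4$ in the error-probability estimate to give $12$). Setting $\Phi:=\Phi[\bZ^\star]$ yields the deterministic code asserted by the theorem.

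The only delicate point I foresee is the interaction between this probabilistic argument and the ``$\min_s$'' appearing inside each stated bound. Since every expectation estimate is valid for every $s\in[0,1]$ and the left-hand sides are $s$-independent, the expectations are automatically bounded by their $s$-infima, and Markov pushes each infimum into the high-probability conclusion for each constraint separately. Consequently, the three optimising values of $s$ need not coincide, and no choice of $s$ has to be fixed before invoking the union bound; no further work beyond the cited estimates is required.
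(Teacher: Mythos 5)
Your proposal is correct and reproduces the paper's own (mostly implicit) argument: the expectation bounds (\ref{2-25-1}), (\ref{8-26-10-a}), (\ref{8-26-10-a02}) and the error-exponent estimate from \cite{expo-chan} are combined with Markov's inequality as in (\ref{10-20-101}), and a union bound over the three bad events furnishes a single realization $\bZ^\star$. Your remark that the $s$-minimization can be taken inside each expectation bound before applying Markov, so the optimal $s$ need not be shared across the three constraints, is exactly the right reading of the statement and matches what the paper tacitly assumes.
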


\begin{cor}\Label{c3-6}
Assume that a random variable $L$ subject to the distribution $P_L$ on $\{1, \ldots, {\sL}\}$
is available.
Then, 
for any integer ${\sM}$
and any probability distribution $P_X$ on $\cX$,
there exists a code $\Phi$ with an encoder $\Gamma$ 
such that
the code $\Phi$ only uses the distribution $P_L$ for mixing the input alphabet
and
\begin{align}
|\Phi| &={\sM}/4 \nonumber \\
\epsilon_{\max}(W_{B}| \Phi) & \le 48 \min_{0\le s\le 1}({\sM}{\sL})^{s}e^{-s I_{-s}(X;B|W_{B} \times P_X)}
\Label{3-8-1-a}\\
I_{\max}(M;E|W_{E}\circ \Gamma) & \le  12
\min_{0 \le s \le 1} v^s
\frac{e^{-s H_{1+s}(P_L)+s I_{s}(X;E|W_{E} \times P_X)}}{ s}
\Label{7-1-2-x} \\
d_{1,\max}(M;E|W_{E}\circ \Gamma) 
& \le 
24
\min_{0 \le s \le 1}
\mu_s e^{-\frac{s}{2}H_{1+s}(P_L)+\frac{s}{2}I^{{\rm G}}_s(X;E|W_{E} \times P_X)}.
\end{align}
\end{cor}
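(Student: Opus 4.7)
The plan is to derive the corollary from Theorem \ref{3-6} by a standard expurgation argument: apply the theorem to produce a code of size $\sM$ whose three averaged quantities are bounded, then discard the messages that are ``bad'' in any of the three criteria, leaving a subcode of size $\sM/4$ whose max-type quantities satisfy the claimed bounds.

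First I would invoke Theorem \ref{3-6} with message size $\sM$ to obtain a code $\Phi=(\sM,\Gamma,\{\cD_m\})$, and introduce the reference state $\sigma_E := W_{E}\circ\Gamma\circ P_{\mix,{\cal M}}$. A direct evaluation of the relative-entropy and trace-norm expressions on the cq-state $W_{E}\circ\Gamma\times P_{\mix,{\cal M}}$ gives
\begin{align*}
I_{\mix}(M;E|W_{E}\circ\Gamma) &= \frac{1}{\sM}\sum_m D(W_{E}\circ\Gamma_m\|\sigma_E),\\
d_{1,\mix}(M;E|W_{E}\circ\Gamma) &= \frac{1}{\sM}\sum_m \|W_{E}\circ\Gamma_m-\sigma_E\|_1,
\end{align*}
so the bounds supplied by Theorem \ref{3-6} control the averages $\bar e,\bar i,\bar d$ of the per-message quantities $e_m:=\Tr W_{B}\circ\Gamma_m(I-\cD_m)$, $i_m:=D(W_{E}\circ\Gamma_m\|\sigma_E)$, $d_m:=\|W_{E}\circ\Gamma_m-\sigma_E\|_1$.

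Next I would apply Markov's inequality three times: in each criterion, strictly fewer than $\sM/4$ indices violate $e_m\le 4\bar e$, $i_m\le 4\bar i$, or $d_m\le 4\bar d$, so by a union bound at least $\sM/4$ messages are simultaneously good in all three. Restricting $\Phi$ to those messages yields the subcode $\Phi'$ of size $\sM/4$, and the bound $\epsilon_{\max}(W_{B}|\Phi')\le 4\bar e$ gives (\ref{3-8-1-a}) immediately.

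Finally I would convert the per-message bounds on $i_m$ and $d_m$ into the max-type bounds using the fixed reference $\sigma_E$ (which is \emph{not} changed by the expurgation). Inequality (\ref{8-04-a}) yields, for any $P_M$ supported on the subcode, $I(M;E|W_{E}\circ\Gamma'\times P_M)\le\sum_m P_M(m)i_m\le 4\bar i$, giving $I_{\max}\le 4\bar i$; two applications of the triangle inequality around $\sigma_E$ give $\sum_m P_M(m)\|W_{E}\circ\Gamma_m-W_{E}\circ\Gamma'\circ P_M\|_1\le 2\max_m d_m\le 8\bar d$, giving $d_{1,\max}\le 8\bar d$. The one subtlety is keeping $\sigma_E$ as the mixture of the \emph{original} code rather than of the subcode, so that the averages controlled by Theorem \ref{3-6} can be invoked directly; once this is arranged, the factor-4 loss from expurgation combines multiplicatively with the Theorem \ref{3-6} constants to produce the stated $48$, $12$, and $24$.
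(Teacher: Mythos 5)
Your proposal is correct and follows essentially the same route as the paper's proof: apply Theorem \ref{3-6}, express the averaged criteria as per-message averages, expurgate by Markov and union bound to keep at least $\sM/4$ messages good in all three criteria simultaneously, and convert per-message bounds to $\max$-type bounds via (\ref{8-04-a}) and the triangle inequality around the \emph{original} mixture $\sigma_E = W_E\circ\Gamma\circ P_{\mix,\mathcal{M}}$, which is precisely the subtlety the paper also handles. One small imprecision: Markov gives \emph{at most} $\sM/4$ (not strictly fewer than $\sM/4$) indices violating $x_m \ge 4\bar{x}$ in each criterion; the union bound then leaves \emph{at least} $\sM/4$ good indices, which is all that is needed, so the conclusion and constants $48 = 12\cdot 4$, $12 = 3\cdot 4$, $24 = 3\cdot 8$ are unaffected.
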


\begin{proof}
Now, we prove Corollary \ref{c3-6} using code $\Phi$ given in Theorem \ref{3-6}.
When $M$ is regarded as a random variable obeying the uniform distribution on ${\cal M}$,
Markov inequality guarantees that
$\Tr W_{B} \circ {\Gamma_M} (I-\cD_M ) \ge  4 \epsilon_{B}(\Phi)$ 
holds at most probability $1/4$.
Similarly,
$D(W_{E} \circ {\Gamma_M}\|W_{E} \circ \Gamma \circ P_{\mix,{\cal M}}) \ge 4 I_{\mix}(M;E|W_{E}\circ \Gamma)$
and
$\| W_{E} \circ {\Gamma_M} - W_{E} \circ \Gamma \circ P_{\mix,{\cal M}}\|_1 \ge 4 d_{1,\mix}(M;E|W_{E}\circ \Gamma) $
hold at most probability $1/4$, respectively.
So, the random variable $M$ satisfies the three relations 
$\Tr W_{B} \circ {\Gamma_M} (I-\cD_M ) \le  4 \epsilon_{B}(\Phi)$, 
$D(W_{E} \circ {\Gamma_M}\|W_{E} \circ \Gamma \circ P_{\mix,{\cal M}}) \le 4 I_{\mix}(M;E|W_{E}\circ \Gamma)$,
and
$\| W_{E} \circ {\Gamma_M} - W_{E} \circ \Gamma \circ P_{\mix,{\cal M}}\|_1 \le 4 d_{1,\mix}(M;E|W_{E}\circ \Gamma) $
at least probability $1/4=1-3/4$.
In other words, there exist at least ${\sM}/4$ elements $m_1,\ldots, m_{{\sM}/4}$ such that
$\Tr W_{B} \circ {\Gamma_{m_l}} (I-\cD_{i_l} ) \le  4 \epsilon_{B}(\Phi)$,
$D(W_{E} \circ {\Gamma_{m_l}}\|W_{E}\circ \Gamma \circ P_{\mix,{\cal M}}) \le 4 I_{\mix}(M;E|W_{E}\circ \Gamma)$,
$\| W_{E} \circ {\Gamma_{m_l}} - W_{E}\circ \Gamma \circ P_{\mix,{\cal M}}\|_1 \le 4 d_{1,\mix}(M;E|W_{E}\circ \Gamma) $
for $l=1,\ldots, {\sM}/4$.

So, we define the code 
$\tilde{\Phi}:=({\sM}/4,\{\tilde{\Gamma}_1,\ldots,\tilde{\Gamma}_{{\sM}/4}\}, \{{\cal D}_1,\ldots,{\cal D}_{{\sM}/4}\})$
with $\tilde{\Gamma}_l:=\Gamma_{m_l}$ and $\tilde{{\cal D}}_l:={\cal D}_{m_l}$.
Then, 
$\epsilon_{\max}(W_{B}| \tilde{\Phi})\le 4 \epsilon(W_{B}|\Phi)$.
Consider a distribution $P_{M'}$ on ${\cal M}':=\{1, \ldots, {\sM}/4 \}$.
Then, using (\ref{8-04-a}), we obtain
\begin{align*}
& I_{\max}(M';E|W_{E} \circ {\tilde{\Gamma}}) 
=\max_{P_{M'}} I(M';E| W_{E} \circ {\tilde{\Gamma}}\times P_{M'}) \\
\le &
\max_{P_{M'}} I(M';E| W_{E} \circ {\tilde{\Gamma}}\times P_{M'}\| W_E \circ \Gamma \circ P_{\mix,{\cal M}} ) \\
=&
\max_{P_{M'}} 
\sum_{l=1}^{{\sM}/4}
P_{M'}(l)
D(W_{E} \circ {{\Gamma}_{m_l}}\| W_E \circ \Gamma \circ P_{\mix,{\cal M}}) \\
= &
\max_{l=1,\ldots, {\sM}/4}
D(W_{E} \circ {{\Gamma}_{m_l}}\| W_E \circ \Gamma \circ P_{\mix,{\cal M}}) 
\le 4 I_{\mix}(M;E|W_{E}\circ \Gamma).
\end{align*}
Since 
any distribution $P_{M'}$ on ${\cal M}'$ satisfies 
$\| W_E \circ \tilde{\Gamma} \circ P_{M'} - W_E \circ \Gamma \circ P_{\mix,{\cal M}}\|_1
\le \max_{l=1,\ldots, {\sM}/4} \| W_{E} \circ {\Gamma_{m_l}} - W_E \circ \Gamma \circ P_{\mix,{\cal M}}\|_1$,
\begin{align*}
& d_{1,\max}(M';E|W_{E}\circ {\tilde{\Gamma}})
=
\max_{P_{M'}} \sum_{l=1}^{{\sM}/4}
P_{M'}(l)
\| W_{E}\circ {\Gamma_{m_l}} -  W_E \circ \tilde{\Gamma} \circ P_{M'} \|_1 \\
\le &
\max_{P_{M'}} \sum_{l=1}^{{\sM}/4}
P_{M'}(l)
(\| W_{E}\circ {\Gamma_{m_l}} - W_E \circ \Gamma \circ P_{\mix,{\cal M}}\|_1 + \| W_E \circ \tilde{\Gamma} \circ P_{M'} - W_E \circ \Gamma \circ P_{\mix,{\cal M}}\|_1) \\
= &
\max_{l=1,\ldots, {\sM}/4} 
(\| W_{E}\circ {\Gamma_{m_l}} - W_E \circ \Gamma \circ P_{\mix,{\cal M}}\|_1 + \| W_E \circ \tilde{\Gamma} \circ P_{M'}- W_E \circ \Gamma \circ P_{\mix,{\cal M}}\|_1) \\
\le &
2 \max_{l=1,\ldots, {\sM}/4} \| W_{E}\circ {\Gamma_{m_l}} - W_E \circ \Gamma \circ P_{\mix,{\cal M}}\|_1
\le 8 d_{1,\mix}(M;E|W_{E}\circ \Gamma) .
\end{align*}
So, we obtain the desired argument.
\end{proof}


\quad {\it Proofs of (\ref{8-26-10-a}) and (\ref{8-26-10-a02}):}\quad
In order to show (\ref{8-26-10-a}), we consider another protocol generating a secret random number.
Alice and Bob prepare $l$ random permutations ${\bg}=(g_1,\ldots, g_l)$ among $\{1,\ldots, {\sM}\}$.
First, Alice sends Bob the message $l$ and $m$ based on the code $ \Phi[\bZ]'$.
Second, 
Alice and Bob apply the $l$ random permutations to their message in the way $(m,l)\mapsto (g_l(m),l)$.
Finally, Alice and Bob discard $l$ and obtain $g_l(m)$.
That is, Alice and Bob apply a hash function $f_{{\bg}}: (m,l)\to g_l(m)$.
Alice apply the encoder $\Gamma[\bZ,\bg]_m(x):=\sum_{l=1}^{\sL}P_L(l) \delta_{\Phi[\bZ]'_{l,g_l(l)},x}$.
Bob recovers the value $m$ by using the decoder $\cD_{m}[\bZ,\bg]:=\sum_{l=1}^{\sL} \cD_{l,g_l(m)}'[\bZ]$.
We denote the code $(\sM, \Gamma[\bZ,\bg], \{\cD_{m}[\bZ,\bg]\})$ by $\Phi[\bZ,\bg]$.
Due to the construction, the hash function $f_{{\bg}}$ satisfies 
Condition \ref{C1} given in Appendix \ref{s4-1}.
Applying (\ref{8-26-13-a}), 
for any density $\sigma$ on the system ${\cal H}_{E}$,
we obtain
\begin{align}
& \rE_{{\bg}|\bZ} d_{1,\mix}(M;E|W_{E}\circ \Gamma[\bZ,\bg]) \nonumber \\
=& \rE_{{\bg}|\bZ} 
d_{1}'(f_{\bg}(M,L);E|
W_{E}\circ \Phi[\bZ]' \times P_{\mix,{\cal M}\times {\cal L}}) \nonumber \\
\le & (4+ \sqrt{v}) {\sM}^{\frac{s}{2}} (
\Tr 
\sum_{m=1}^{{\sM}}  
\sum_{l=1}^{{\sL}}
(\frac{P_L(l)}{{\sM}})^{1+s}
W_{E|\Phi[\bZ]_{l,m}'}^{1+s}
\sigma^{-s}
)^{\frac{1}{2}} \nonumber \\
= & (4+ \sqrt{v})
(
\sum_{m=1}^{{\sM}}  
\frac{1}{{\sM}}
\sum_{l=1}^{{\sL}}
P_L(l)^{1+s}
\Tr 
W_{E|\Phi[\bZ]_{l,m}'}^{1+s}
\sigma^{-s}
)^{\frac{1}{2}} ,\Label{8-27-1}
\end{align}
where $v$ is the number of eigenvalues of $\sigma$.
Taking the average concerning the variable $\bZ$,
we obtain
\begin{align}
& \rE_{\bZ} \rE_{\bg|\bZ} 
d_{1,\mix}(M;E|W_{E}\circ \Gamma[\bZ,\bg])
\nonumber \\
\le & 
(4+ \sqrt{v})
\rE_{\bZ}
(
\sum_{m=1}^{{\sM}}  
\frac{1}{{\sM}}
\sum_{l=1}^{{\sL}}
P_L(l)^{1+s}
\Tr 
W_{E|\Phi[\bZ]_{l,m}'}^{1+s}
\sigma^{-s}
)^{\frac{1}{2}} \nonumber \\
\le & (4+ \sqrt{v})
(
\sum_{m=1}^{{\sM}}  
\frac{1}{{\sM}}
\sum_{l=1}^{{\sL}}
P_L(l)^{1+s}
\rE_{\bZ}
\Tr 
W_{E|\Phi[\bZ]_{l,m}'}^{1+s}
\sigma^{-s}
)^{\frac{1}{2}} \nonumber \\
= & (4+ \sqrt{v}) 
(
\sum_{l=1}^{{\sL}}
P_L(l)^{1+s}
\Tr 
\sum_{x}P_X(x)  (W_{E|x})^{1+s} \sigma^{-s}
)^{\frac{1}{2}}  .
\Label{8-25-1}
\end{align}
According to Lemma \ref{l3},
we choose $\sigma$ to be 
$c(\sum_{x}P_X(x)(W_{E|x})^{1+s})^{1/(1+s)}$ with the normalizing constant $c$
and $v$ to be the number of eigenvalues of  
$\sum_{x}P_X(x)(W_{E|x})^{1+s}$.
Then, we obtain
\begin{align}
& \rE_{\bg} \rE_{\bZ|\bg} 
d_{1,\mix}(M;E|W_{E}\circ \Gamma[\bZ,\bg])
 \nonumber \\
\le & (4+ \sqrt{v_s} )
(\sum_{l=1}^{{\sL}} P_L(l)^{1+s})^{-\frac{1}{2}}
(\Tr (\sum_{x}P_X(x)  W_{E|x}^{1+s})^{\frac{1}{1+s}})^{\frac{1+s}{2}} \nonumber \\
= & (4+ \sqrt{v_s})
e^{-\frac{s}{2}H_{1+s}(P_L)+\frac{s}{2}I^{{\rm G}}_s(X;E|W_{E} \times P_X)}.
\Label{8-26-10}
\end{align}
Finally, 
when $\bg$ is fixed, 
The statistical behavior of $\Gamma[\bZ,\bg]$ is the same as that of $\Gamma[\bZ]$.
That is,
\begin{align}
 \rE_{\bZ} d_{1,\mix}(M;E|W_{E}\circ \Gamma[\bZ,\bg])
=
 \rE_{\bZ} d_{1,\mix}(M;E|W_{E}\circ \Gamma[\bZ])
\end{align}
for $\bg$, which implies (\ref{8-26-10-a}).
Similarly, we can show (\ref{8-26-10-a02}) by using (\ref{8-26-13-f2})
instead of (\ref{8-26-13-a}).
\endproof

\begin{rem}
One might consider that
Inequality (\ref{10-11-1}) could be derived from a kind of privacy amplification lemma
\cite[Theorem 1]{H-precise} similar to 
(\ref{8-26-10-a}) from \cite[(12)]{H-leaked}.
However, the strategy cannot yield Inequality (\ref{10-11-1}) due to the following reason.
In order to show an inequality corresponding to (\ref{8-25-1}),
we need to show 
\begin{align}
& 
\rE_{\bZ}
\sum_{m,l}
\frac{P_L(l)^{1+s}}{{\sM}}
\Tr 
 (W_{E} \circ {\Gamma[\bZ]_{l,m}})^{1+s}
(
\sum_{m,l}
\frac{P_L(l)}{{\sM}}
W_{E} \circ {\Gamma[\bZ]_{l,m}}
)^{-s} \nonumber \\
\le & 
(
\sum_{l}
P_L(l)^{1+s})
\Tr 
(\sum_{x}P_X(x)  W_{E|x}^{1+s} )
(\sum_{x}P_X(x) W_{E|x})^{-s}.
\Label{8-25-2}
\end{align}
However, only the opposite inequality holds, in general.
Hence, this method cannot be applied to the proof of (\ref{10-11-1}).
\end{rem}

Further,
using (\ref{8-19-14-a}) and (\ref{8-26-9-a}),
we can obtain other type bounds.
Combining (\ref{8-19-14-a}) and (\ref{10-11-1}),
we obtain an alternative bound of $\rE_{\bZ} d_{1,\mix}(M;E|W_{E}\circ \Gamma[\bZ])$ as
\begin{align}
& \rE_{\bZ} d_{1,\mix}(M;E|W_{E}\circ \Gamma[\bZ])
\le \sqrt{\rE_{\bZ} d_{1,\mix}(M;E|W_{E}\circ \Gamma[\bZ])^2} \nonumber \\
\le &
\frac{\sqrt{2}v^{s/2}}{\sqrt{s}} 
e^{-\frac{s}{2} H_{1+s}(P_L) +\frac{s}{2} I_{s}(X;E|W_{E} \times P_X)}.
\Label{8-24-11}
\end{align}
Combining (\ref{8-26-9-a}), 
(\ref{8-26-10-a}), and (\ref{8-26-10-a02}),
we obtain
\begin{align}
& \rE_{\bZ} D(W_{E}\circ {\Gamma[\bZ]_m} \| W_{E} \circ P_X ) \nonumber \\
\le &
(\log d_{E} - \log \mu_s
-\frac{s}{2} H_{1+s}(P_L) - \frac{s}{2}I^{{\rm G}}_s(X;E|W_{E} \times P_X))\nonumber \\
& \cdot \mu_s
e^{-\frac{s}{2} H_{1+s}(P_L) + \frac{s}{2}I^{{\rm G}}_s(X;E|W_{E} \times P_X)}.
\nonumber 
\end{align}
Hence, instead of (\ref{10-11-1}),
we obtain an alternative bound of $\rE_{\bZ} I_{\mix}(M;E|W_{E} \circ \Gamma[\bZ])$ as
\begin{align}
& 
\rE_{\bZ} I_{\mix}(M;E|W_{E} \circ \Gamma[\bZ]) \nonumber \\
\le & 
(\log d_{E} - \log \mu_s
-\frac{s}{2} H_{1+s}(P_L) - \frac{s}{2}I^{{\rm G}}_s(X;E|W_{E} \times P_X))\nonumber \\
& \cdot \mu_s
e^{-\frac{s}{2} H_{1+s}(P_L) + \frac{s}{2}I^{{\rm G}}_s(X;E|W_{E} \times P_X)} .
\Label{8-24-12}
\end{align}
As is shown in Subsection \ref{s2c},
these alternative bounds are weaker than the bounds (\ref{2-25-1}), 
(\ref{8-26-10-a}), and (\ref{8-26-10-a02}) in the asymptotic setting.

\subsection{Asymptotic analysis}\Label{s2b}
In the following, we focus on the $n$-fold discrete memoryless channels 
of the channels $W_{B}$ and $W_{E}$,
which are written as $W_{B}^{(n)}$ and $W_{E}^{(n)}$.
The $n$-independent and identical distribution $P_X^n$ of $P_X$
satisfies the additive equation
$I_s(X;B|W_{B}^{(n)} \times P_X^n)= n I_s(X;B|W_{B} \times P_X)$.
In this case, we assume that a random variable subject to the distribution $P_{L_n}$ on ${\cal L}_n:= \{1, \ldots, {\sL}_n\}$
is available.
Thus, there exists a code $\Phi_n[P_X]$ with an encoder $\Gamma_n[P_X]$ for any integer ${\sM}_n$,
and any probability distribution $P_X$ on $\cX$
such that
the code $\Phi_n$ only uses the distribution $P_{L_n}$ for mixing the input alphabet
and
\begin{align}
|\Phi_n| &={\sM}_n/4 \nonumber \\
\epsilon_{\max}(W_{B}^{(n)}| \Phi_n[P_X]) & \le 48 \min_{0\le s\le 1}({\sM}_n {\sL}_n)^{s}e^{n -s I_{-s}(X;B|W_{B} \times P_X)}
\Label{7-1-2-b-} 
\\
I_{\max}(W_{E}^{(n)}\circ \Gamma_n[P_X]) & \le  12
v_n \min_{0 \le s \le 1}
\frac{e^{-s H_{1+s}(P_{L_n}) +n s I_{s}(X;E|W_{E} \times P_X)}}{s}
\Label{7-1-2-a} 
\end{align}
and
\begin{align}
d_{1,\max}(W_{E}^{(n)}\circ \Gamma_n[P_X]) 
\le &
24 \min_{0 \le s \le 1}
\mu_{s,n}
e^{-\frac{s}{2}H_{1+s}(P_{L_n}) +n \frac{s}{2}I^{{\rm G}}_s(X;E|W_{E} \times P_X)},
\Label{7-1-2-b} \\
\mu_{s,n}:=&
\min\{
4+\sqrt{v_{s,n}},
(4+ \sqrt{\lceil n \lambda_{s} \rceil})e^{\frac{s}{2}} \},
\end{align}
where $v_n$ and $v_{s,n}$ are the numbers of eigenvalues of 
$(W_{E} \circ P_X)^{\otimes n}$
and
$(\sum_{x}P(x)(W_{E|x})^{1+s})^{\otimes n}$.

The numbers $v_n$ and $v_{s,n}$ are bounded by $(\dim {\cal H}_{E}-1)^{n+1}$.
Hence, due to (\ref{9-5-c1}),
when 
${\sM}_n {\sL}_n\cong e^{n I(X;B|W_{B} \times P_X)}$ and $e^{-s H_{1+s}(P_{L_n})} \cong {\sL}_n^s
\cong e^{n sI(X;E|W_{E} \times P_X)}$,
the values (\ref{7-1-2-b-}), (\ref{7-1-2-a}), and (\ref{7-1-2-a}) go to zero.
That is, the rate $\max_{P_X} I(X;B|W_{B}\times P_X)-I(X;E|W_{E} \times P_X)$ can be asymptotically attained, as shown by Devetak \cite{Deve}.
Now, we focus on the exponential decreasing rates of our upper bounds.
We assume that $H_{2}(P_{L_n}) \ge nR$, which implies that
$H_{1+s}(P_{L_n}) \ge H_{2}(P_{L_n}) \ge nR$ for $s\in [0,1]$.
In fact, 
$H_{2}(P_{L_n})$ can be regarded as the sacrifice information.

Now, we denote the ensemble of codes and encoders given in Subsection \ref{s2a} with the $n$-i.i.d. distribution $P_X^n$
by $\Phi[\bZ_n,P_X^n]$ and $\Gamma[\bZ_n,P_X^n]$, respectively.
Using the ensemble,
we define two kinds of the decreasing rates under the above code:
\begin{align*}
e_I(R|W_{E},P_X) &:=
\lim_{n \to \infty} \frac{-1}{n}\log 
\rE_{\bZ_n} I_{\mix}(M_n;E|W_{E}^{(n)}\circ \Gamma[\bZ_n,P_X^n]) \\
e_d(R|W_{E},P_X) &:=
\lim_{n \to \infty} \frac{-1}{n}\log 
\rE_{\bZ_n} d_{1,\mix}(M_n;E|W_{E}^{(n)} \circ \Gamma[\bZ_n,P_X^n]) .
\end{align*}
Inequality (\ref{7-1-2-a}) yields 
\begin{align*}
e_I(R|W_{E},P_X) 
\ge 
e_{{\rm R}}(R|W_{E},P_X):=
\max_{0\le s\le1} s R-s I_s(X;E|W_{E} \times P_X),
\end{align*}
and Inequality (\ref{7-1-2-b}) yields 
\begin{align*}
& e_d(R|W_{E},P_X) \\
\ge &
e_{{\rm G}}(R|W_{E},P_X):=
\max_{0\le s\le1} \frac{s}{2} R -\frac{s}{2} I^{{\rm G}}_s(X;E|W_{E} \times P_X) .
\end{align*}
Note that
the exponent $ e_{{\rm R}}(R|W_{E},P_X)$ with the commutative case
is the same as 
that by the previous paper \cite{H-leaked}.
However,
the exponent $e_{{\rm G}}(R|W_{E},P_X)$ with the commutative case
is smaller than that by the previous paper \cite{H-tight}.

In the quantum wiretap channel model, 
it is known that
a pre noisy processing $\Gamma:{\cal V} \to {\cal X}$ may improve the capacity
in the classical case,
where $\Gamma$ is a stochastic matrix from ${\cal V}$ to ${\cal X}$.
When we apply the pre noisy processing $\Gamma$,
the rate $\max_{P_V} I(V;B|W_{B}\circ \Gamma \times P_V)-I(V;E|W_{E}\circ \Gamma \times P_V)$ can be attained asymptotically,
where $P_V$ is the distribution on ${\cal V}$.
Applying our method to the pair of channels $W_{B}\circ \Gamma$ and $W_{E}\circ \Gamma$,
we obtain an upper bound for error probability and leaked information, which goes to zero exponentially
and can attain the rate $\max_{P_V,\Gamma} I(V;B|W_{B}\circ \Gamma \times P_V)-I(V;E|W_{E}\circ \Gamma \times P_V)$.
Further, by choosing $\Gamma$ as a stochastic matrix from ${\cal V}$ to ${\cal X}^n$,
the rate 
$\lim_{n \to \infty} \frac{1}{n}\max_{P_V,\Gamma} I(V;B|W_{B}^{(n)}\circ \Gamma \times P_V)-I(V;E|W_{E}^{(n)}\circ \Gamma \times P_V)$ 
can be attained asymptotically.

Now, we define the capacity: $C_{W_{B},W_{E}}$
\begin{align*}
& C_{W_{B},W_{E}} \\
:=& 
\sup_{\{\Phi_n ,\Gamma_n\}}
\left \{
\lim_{n \to \infty}\frac{\log |\Phi_n|}{n}
\left|
\begin{array}{l}
\lim_{n \to \infty} I_{\mix}(M_n;E|W_{E} \circ \Gamma_n)=0,\\
\lim_{n \to \infty} \epsilon_{B}(\Phi_n)=0
\end{array}
\right. \right\}.
\end{align*}
Then, we obtain 
\begin{align}
C_{W_{B},W_{E}}\ge
\lim_{n \to \infty} \frac{1}{n}\max_{P_V,\Gamma} I(V;B|W_{B}^{(n)}\circ \Gamma \times P_V)-I(V;E|W_{E}^{(n)}\circ \Gamma \times P_V).
\Label{8-23-1a}
\end{align}

In fact, the following proposition was shown by Devetak \cite{Deve}.
\begin{proposition}
\begin{align}
C_{W_{B},W_{E}} =
\lim_{n \to \infty} \frac{1}{n}\max_{P_V,\Gamma} I(V;B|W_{B}^{(n)}\circ \Gamma \times P_V)-I(V;E|W_{E}^{(n)}\circ \Gamma \times P_V).
\end{align}
\end{proposition}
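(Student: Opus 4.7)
The achievability direction $(\ge)$ is already contained in \eqref{8-23-1a}, which follows from the random coding bounds \eqref{7-1-2-b-}--\eqref{7-1-2-b} together with \eqref{9-5-c1}. The plan is therefore to focus exclusively on the converse $(\le)$.

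For the converse I would consider an arbitrary sequence $\{(\Phi_n,\Gamma_n)\}$ with $|\Phi_n|={\sM}_n$ satisfying $\epsilon_B(\Phi_n)\to 0$ and $I_{\mix}(M_n;E|W_E^{(n)}\circ\Gamma_n)\to 0$, and let $M_n$ be uniform on ${\cal M}_n=\{1,\ldots,{\sM}_n\}$. The first step is a quantum Fano / Holevo bound: pinching Bob's output along the outcome of the POVM decoder of $\Phi_n$ and applying classical Fano to the induced random variable yields
\begin{align*}
\log {\sM}_n \le I\bigl(M_n;B\bigm|W_B^{(n)}\circ\Gamma_n\times P_{\mix,{\cal M}_n}\bigr) + 1 + \epsilon_B(\Phi_n)\log {\sM}_n.
\end{align*}
The second step is to subtract Eve's information. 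Since $I_{\mix}(M_n;E|W_E^{(n)}\circ\Gamma_n) = I(M_n;E|W_E^{(n)}\circ\Gamma_n\times P_{\mix,{\cal M}_n}) =: \delta_n\to 0$ by assumption, adding and subtracting $\delta_n$ and absorbing the trailing term into an $o(n)$ error gives
\begin{align*}
\log {\sM}_n \le I\bigl(M_n;B\bigm|W_B^{(n)}\circ\Gamma_n\times P_{\mix,{\cal M}_n}\bigr) - I\bigl(M_n;E\bigm|W_E^{(n)}\circ\Gamma_n\times P_{\mix,{\cal M}_n}\bigr) + 1 + \epsilon_B(\Phi_n)\log {\sM}_n + \delta_n.
\end{align*}

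The third step is to identify the auxiliary variable in the capacity formula: setting ${\cal V}={\cal M}_n$, $P_V=P_{\mix,{\cal M}_n}$, and regarding $\Gamma_n$ as a stochastic matrix from ${\cal V}$ to ${\cal X}^n$, the bracketed difference is at most $\max_{P_V,\Gamma}\bigl[I(V;B|W_B^{(n)}\circ\Gamma\times P_V)-I(V;E|W_E^{(n)}\circ\Gamma\times P_V)\bigr]$. Dividing by $n$ and sending $n\to\infty$, all error terms become negligible and I obtain $C_{W_B,W_E}\le \lim_{n\to\infty}\frac{1}{n}\max_{P_V,\Gamma}\bigl[I(V;B|W_B^{(n)}\circ\Gamma\times P_V)-I(V;E|W_E^{(n)}\circ\Gamma\times P_V)\bigr]$; combined with \eqref{8-23-1a} this yields equality.

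The main obstacle I anticipate is the structural fact that $I(V;B)-I(V;E)$ is not additive in $n$ under the pair of channels $(W_B,W_E)$, so the converse cannot be single-letterized and must be carried out at the $n$-block level throughout; this is precisely why the proposition keeps the regularization $\lim_n\frac{1}{n}\max_{P_V,\Gamma}$ with $\Gamma$ ranging over stochastic matrices into ${\cal X}^n$. A subtler technical point, already handled in Devetak's original argument, is that the subtraction of $I(M_n;E)$ in step two inside a Holevo-type bound is not automatic and relies on the strong secrecy condition $I_{\mix}\to 0$, rather than a merely vanishing-rate condition, in order to absorb Eve's residual information uniformly into $o(1)$ errors.
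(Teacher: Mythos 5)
The paper gives no proof of this proposition: the achievability direction $(\ge)$ is established as (\ref{8-23-1a}) from the random coding bounds, and for the converse the paper simply cites Devetak \cite{Deve}. Your converse argument is correct, and it is essentially the same argument the paper itself carries out in the proof of Theorem~\ref{t8-23-1}: Fano's inequality together with the Holevo data-processing bound $I(M_n;\hat{M}_n)\le I(M_n;B|W_B^{(n)}\circ\Gamma_n\times P_{\mix,{\cal M}_n})$ give $(1-\epsilon_{B}(\Phi_n))\log\sM_n - h(\epsilon_{B}(\Phi_n)) \le I(M_n;B|W_B^{(n)}\circ\Gamma_n\times P_{\mix,{\cal M}_n})$, one then subtracts $I(M_n;E|W_E^{(n)}\circ\Gamma_n\times P_{\mix,{\cal M}_n})=\delta_n$, identifies $V=M_n$ with $\Gamma_n$ as the stochastic matrix from ${\cal M}_n$ into $\cX^n$ so that the difference is dominated by the $n$-block maximum, and divides by $n$. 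So your proposal buys nothing beyond what the citation or Theorem~\ref{t8-23-1} already contain, but it correctly fills the gap the paper leaves implicit. One small correction to your closing remark: there is nothing nontrivial to "handle" in subtracting $I(M_n;E)$ -- it is an algebraic identity -- and since you divide by $n$ before taking the limit, the converse bound is unchanged under the weaker condition $\delta_n=o(n)$; strong secrecy $\delta_n\to 0$ is part of the paper's definition of $C_{W_B,W_E}$ and is what the direct part delivers, but it plays no special technical role in the converse.
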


If there is a quantum channel $C$ such that
$C(W_{B|x})=W_{E|x} $, the quantum wiretap channel $W_{B},W_{E}$ is called degraded.
It is known that the degraded channel $W_{B},W_{E}$ satisfies that\cite{DevP}\cite[(9.62)]{Hayashi}
\begin{align*}
& \max_{P_X} I(X;B|W_{B} \times P_X)-I(X;E|W_{E} \times P_X)\\
=&
\frac{1}{n}\max_{P_V,\Gamma} I(V;B|W_{B}^{(n)}\circ \Gamma \times P_V)-I(V;E|W_{E}^{(n)}\circ \Gamma \times P_V).
\end{align*}
That is,
\begin{align*}
C_{W_{B},W_{E}} 
=
\max_{P_X} I(X;B| W_{B} \times P_X)-I(X;E|W_{E} \times P_X).
\end{align*}
The detail property of $C_{W_{B},W_{E}} $
has been studied by \cite{Wat} in this case.

In the classical case, 
Csisz\'{a}r et al \cite{CK79} showed 
\begin{align*}
& \max_{P_V,\Gamma} I(V;B|W_{B}\circ \Gamma \times P_V)-I(V;E|W_{E}\circ \Gamma \times P_V) \\
= &
\frac{1}{n}\max_{P_V,\Gamma} I(V;B|W_{B}^{(n)}\circ \Gamma \times P_V)-I(V;E|W_{E}^{(n)}\circ \Gamma \times P_V).
\end{align*}
That is,
\begin{align*}
C_{W_{B},W_{E}} 
=
\max_{P_V,\Gamma} I(V;B|W_{B}\circ \Gamma \times P_V)-I(V;E|W_{E}\circ \Gamma \times P_V).
\end{align*}

\subsection{Comparison}\Label{s2c}
When we replace the role of the pair of (\ref{8-26-10-a}) and (\ref{8-26-10-a02}) by that of (\ref{8-24-11}),
we obtain another bound
\begin{align*}
e_d(R|W_{E},P_X) 
\ge
\frac{1}{2}e_{{\rm R}}(R|W_{E},P_X).
\end{align*}
Similarly, replacing the role of (\ref{10-11-1}) by that of (\ref{8-24-12}), 
we obtain
\begin{align*}
e_I(R|W_{E},P_X) \ge e_{{\rm G}}(R|W_{E},P_X).
\end{align*}

For a comparison between these inequalities and 
Inequalities (\ref{7-1-2-b}) and (\ref{7-1-2-a}),
we have the following lemma.
\begin{lem}\Label{l-13}
In the general case, we have
\begin{align}
\frac{1}{2}e_{{\rm R}}(R|W_{E},P_X) \le e_{{\rm G}}(R|W_{E},P_X). 
\Label{9-2-1}
\end{align}
When $W_x$ are commutative each other,
we have
\begin{align}
e_{{\rm G}}(R|W_{E},P_X) 
\le e_{{\rm R}}(R|W_{E},P_X) .
\Label{9-2-2}
\end{align}
\end{lem}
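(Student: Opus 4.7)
\medskip

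\textbf{Plan for Lemma \ref{l-13}.}
The proof rests entirely on the two comparison inequalities between $I_s$ and $I_s^{{\rm G}}$ proved earlier: Lemma~\ref{l3}, which gives $I_s \ge I_s^{{\rm G}}$ for every $s\in(-1,\infty)$ in general, and Lemma~\ref{l2}, which gives $I_s \le I^{{\rm G}}_{s/(1-s)}$ for $s\in[0,1)$ in the commutative case. The structure of the two inequalities I must prove is that both $e_{{\rm R}}$ and $e_{{\rm G}}$ are maxima over $s\in[0,1]$ of expressions of the form (coefficient)$\cdot(R-\mathrm{info}_s)$, so both parts reduce to pointwise comparisons of integrands, possibly after a reparametrization.

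For the general bound (\ref{9-2-1}), I would argue directly: for every $s\in[0,1]$, Lemma~\ref{l3} yields $sR-sI_s(X;E|W_E\times P_X)\le sR-sI_s^{{\rm G}}(X;E|W_E\times P_X)$. Taking the maximum over $s\in[0,1]$ on both sides gives $e_{{\rm R}}(R|W_E,P_X)\le 2e_{{\rm G}}(R|W_E,P_X)$, which is exactly (\ref{9-2-1}). This direction is essentially immediate; no reparametrization is needed.

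For the commutative bound (\ref{9-2-2}), I would introduce the substitution $s=t/(1+t)$, which sends $t\in[0,1]$ to $s\in[0,1/2]\subset[0,1]$ and satisfies $s/(1-s)=t$. Lemma~\ref{l2} then gives $I_{t/(1+t)}\le I_t^{{\rm G}}$, so for each $t\in[0,1]$,
\begin{align*}
e_{{\rm R}}(R|W_E,P_X)\;\ge\;\tfrac{t}{1+t}\bigl(R-I_{t/(1+t)}\bigr)\;\ge\;\tfrac{t}{1+t}\bigl(R-I_t^{{\rm G}}\bigr),
\end{align*}
where the second inequality holds whenever $R-I_t^{{\rm G}}\ge 0$. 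The numerical fact $\tfrac{1}{1+t}\ge\tfrac{1}{2}$ on $[0,1]$ then upgrades this to $e_{{\rm R}}\ge\tfrac{t}{2}(R-I_t^{{\rm G}})$. When $R-I_t^{{\rm G}}<0$ the right-hand side of (\ref{9-2-2}) is nonpositive at that $t$, while $e_{{\rm R}}\ge 0$ trivially (from $s=0$), so the inequality $e_{{\rm R}}\ge\tfrac{t}{2}(R-I_t^{{\rm G}})$ holds for every $t\in[0,1]$. Taking the maximum over $t$ on the right gives $e_{{\rm R}}\ge e_{{\rm G}}$.

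The only mildly delicate point is bookkeeping around the sign of $R-I_t^{{\rm G}}$ in part~(ii); this is handled cleanly by a two-case split or by writing $(R-I_t^{{\rm G}})_+$, but it is not a real obstacle. The algebraic heart of the lemma is simply the identity $s/(1-s)=t$ together with $\tfrac{1}{1+t}\ge\tfrac{1}{2}$ on $[0,1]$, which matches the factor of $2$ difference between the definitions of $e_{{\rm R}}$ and $e_{{\rm G}}$.
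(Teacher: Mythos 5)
Your proof is correct and rests on the same two ingredients the paper uses: Lemma~\ref{l3} gives part~(i) immediately, and Lemma~\ref{l2} together with the reparametrization $s\mapsto s/(1+s)$ and the numerical fact $1+s\le 2$ gives part~(ii). For part~(ii), however, you misattribute which step carries the sign condition. The inequality $\tfrac{t}{1+t}(R-I_{t/(1+t)})\ge\tfrac{t}{1+t}(R-I_t^{{\rm G}})$ is unconditional (it is exactly Lemma~\ref{l2}); the step that actually requires $R-I_t^{{\rm G}}\ge 0$ is the ``upgrade'' from $\tfrac{t}{1+t}(R-I_t^{{\rm G}})$ to $\tfrac{t}{2}(R-I_t^{{\rm G}})$. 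Your subsequent two-case split then repairs this correctly, so the argument closes. The paper sidesteps the casework entirely by routing the slack through the \emph{index} of $I$ rather than the prefactor: since $I_s$ is monotone in $s$ and $s/(1+s)\ge s/2$ on $[0,1]$, it writes $\tfrac{s}{2}I^{{\rm G}}_s\ge\tfrac{s}{2}I_{s/(1+s)}\ge\tfrac{s}{2}I_{s/2}$, and then the substitution $t=s/2\in[0,1/2]\subset[0,1]$ gives $e_{{\rm G}}\le\max_{t\in[0,1/2]}t(R-I_t)\le e_{{\rm R}}$ with no sign discussion needed. Your version and the paper's are equivalent in content; the paper's packaging is slightly cleaner.
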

Hence, 
Inequality (\ref{7-1-2-b}) provides a better bound for $e_d(R|W_{E},P_X)$,
That is,
Inequality (\ref{10-11-1}) is a better evaluation for a sufficiently large number $n$.
In the commutative case,
Inequality (\ref{7-1-2-a}) provides a better bound for $e_I(R|W_{E},P_X)$.
That is,
Inequalities 
(\ref{8-26-10-a}) and (\ref{8-26-10-a02})
 are better evaluations for a sufficiently large number $n$.
These numerical comparisons for a non-commutative example
will be given in Section \ref{s10}.

\begin{proof}
Inequality (\ref{10-25-101}) of Lemma \ref{l3} yields that
\begin{align*}
& \frac{1}{2}e_{{\rm R}}(R|W_{E},P_X)
=
\max_{0\le s\le1} \frac{s}{2} R-\frac{s}{2} I_{s}(X;E|W_{E} \times P_X) \\
\le &
\max_{0\le s\le1} \frac{s}{2} R-\frac{s}{2}I^{{\rm G}}_s(X;E|W_{E} \times P_X)
=
e_{{\rm G}}(R|W_{E},P_X),
\end{align*}
which implies (\ref{9-2-1}).

Inequality (\ref{9-2-2}) is shown in the following way.
Since the map $s\mapsto I_{s}(X;E|W_{E} \times P_X)$
is monotonically increasing,
Lemma \ref{l2} yields  
\begin{align*}
& \frac{s}{2} I^{{\rm G}}_s(X;E|W_{E} \times P_X)
\ge
\frac{s}{2} I_{\frac{s}{1+s}}(X;E|W_{E} \times P_X) \\
\ge &
\frac{s}{2} I_{\frac{s}{2}}(X;E|W_{E} \times P_X).
\end{align*}
Then, 
\begin{align*}
& e_{{\rm G}}(R|W_{E},P_X)
=\max_{0\le s\le1} \frac{s}{2} R-\frac{s}{2}I^{{\rm G}}_s(X;E|W_{E} \times P_X) \\
\le &
\max_{0\le s\le1} \frac{s}{2} R- \frac{s}{2} I_{s/2}(X;E|W_{E} \times P_X) 
\le 
e_{{\rm R}}(R|W_{E},P_X),
\end{align*}
which implies (\ref{9-2-2}).
\end{proof}

On the other hand,
combining (9.79) and (9.53) in the book \cite{Hayashi-book},
we obtain 
\begin{align*}
e_I(R|W_{E},P_X) 
&\ge 
\max_{0\le s\le1} \frac{s R-s I_{s}(X;E|W_{E} \times P_X)}{2(1+s)} \\
e_d(R|W_{E},P_X) 
&\ge 
\max_{0\le s\le1} \frac{s R-s I_{s}(X;E|W_{E} \times P_X)}{2(1+s)} .
\end{align*}
Our lower bounds of exponents
$e_{{\rm R}}(R|W_{E}, P_X)$
and
$e_{{\rm G}}(R|W_{E},P_X)$
improve them.

Next, we compare the evaluations (\ref{10-11-1}) and (\ref{8-24-12})
for $\rE_{\Phi} D( W \circ \Phi \circ P_A \| W \circ {P_X}) $
when the number $n$ is not sufficiently large.
In this case, 
the polynomial factors $v$ and $\mu_s$
play an important role.
When $d_{E}=\dim {\cal H}_{E}$ is two,
$v$ increases only linearly.
However, when $d_{E} \ge 3$,
they increase with the order $n^{d_{E}-1}$.
This factor might be not negligible 
when $n$ is not sufficiently large.
However, when $d_{E}$ is finite,
$\lambda$ increases linearly. 
In the evaluation (\ref{8-24-12}),
we have other factors, i.e.,  
$\log d_{E}$ and the logarithm of the upper bound given in (\ref{8-26-10-a}) or 
(\ref{8-26-10-a02}).
These factors also increase linearly.
Hence, in the evaluation (\ref{8-24-12}),
all of the polynomial factors 
increase in the order $n^{3/2}$
while
the polynomial factor in the evaluation (\ref{10-11-1})
increases in the order $n^{s(d_{E}-1)}$.
Now, we fix the optimal $s$ in the evaluation (\ref{10-11-1}).
Then, when $d_{E} > \frac{3}{2s}+1$,
the polynomial factor 
in the evaluation (\ref{8-24-12})
is smaller than that in the evaluation (\ref{10-11-1}).
Hence, when $n$ is not sufficiently large,
the evaluation (\ref{8-24-12}) might be better than 
the evaluation (\ref{10-11-1}).
 

Finally, we compare these evaluations 
when the channel $W^{(n)}$ is not stationary memoryless
and the distribution $P_{L_n}$ is not independent and identical. 
In this case,
the speeds of increase of 
$v$ and $v_s$
are not polynomial in general.
Hence, even though $n$ is sufficiently large,
the factor $v$ and $v_s$ are not negligible.
However, 
$\lambda_s$
increases linearly 
when the logarithm of the minimum eigenvalue of 
$(\sum_x P_X(x)^{(n)} (W_x^{(n)})^{1+s})^{\frac{1}{1+s}}/\Tr (\sum_x P_X(x)^{(n)} (W_x^{(n)})^{1+s})^{\frac{1}{1+s}}$
behaves linearly.
Hence, the evaluations (\ref{8-26-10-a02}) and 
(\ref{8-24-12}) work well under the above weak assumption.

\section{Equivocation rate}\Label{s2-1}
When the coding rate $R$ is larger than the capacity,
the Eve's information does not go to zero.
In this case, it is usual in the classical setting to evaluate the limit 
of $ \frac{I_{\mix}(M_n;E|W_E^{(n)}\circ \Gamma_n)}{n }$ with a sequence of encoders $\Gamma_n$.
In the same construction as Section \ref{s2},
by using (\ref{8-04-a}) and the convexity of $x \mapsto e^x$,
(\ref{9-20-1}) yields the inequalities
\begin{align}
& \rE_{\bZ} e^{s I_{\mix}(M;E|W_{E}\circ \Gamma[\bZ]) } \Label{8-09-a} \\
\le
& \rE_{\bZ} e^{s{I}(M;E|W_{E} \circ \Gamma[\bZ] \times P_{\mix, {\cal M}} \|W_{E} \circ P_X) } \nonumber \\
= &
\rE_{\bZ} e^{\sum_{m=1}^{\sM} \frac{s}{{\sM}}{D}(W_{E}\circ {\Gamma[\bZ]_m} \|W_{E} \circ P_X) } \nonumber \\
\le &
\rE_{\bZ} \sum_{m=1}^{\sM} \frac{1}{{\sM}}
e^{s{D}(W_{E}\circ {\Gamma[\bZ]_m} \|W_{E} \circ P_X) } \Label{8-09-b} \\
= &
\sum_{m=1}^{\sM} \frac{1}{{\sM}}
\rE_{\bZ} e^{s{D}(W_{E}\circ {\Gamma[\bZ]_m} \|W_{E} \circ P_X) } \nonumber \\
\le &
\sum_{m=1}^{\sM} \frac{1}{{\sM}}
v^s(1+ e^{-s H_{1+s}(P_{L})} e^{s{I}_s(X;E|W_{E} \times P_X)})\Label{8-09-c}\\
=&
v^s(1+ e^{-s H_{1+s}(P_{L})} e^{s{I}_s(X;E|W_{E} \times P_X)})
\Label{2-25-1-d}
\end{align}
for $0 < s \le 1$,
where $v$ is the number of eigenvalues of $W \circ P_X$.
Here, (\ref{8-09-a}), (\ref{8-09-b}), and (\ref{8-09-c}) follow from 
(\ref{8-04-a}), the convexity of $x \mapsto e^x$,
and (\ref{9-20-1}), respectively.

Since the Markov inequality guarantees the inequality
\begin{align*}
\rP_{\bZ} 
\{ \epsilon(W_{B}|\Phi[\bZ]) > 2 \rE_{\bZ} \epsilon(W_{B}|\Phi[\bZ]) \}^c
&< \frac{1}{2} \\
\rP_{\bZ} 
\{ e^{s I_{\mix}(M;E|W_{E}\circ \Gamma[\bZ])} > 2 \rE_{\bZ} e^{s I_{\mix}(M;E|W_{E}\circ \Gamma[\bZ])}\}^c
&< \frac{1}{2} ,
\end{align*}
there exists a code $\Phi$ with an encoder $\Gamma$ such that 
$\epsilon(W_{B}|\Phi) \le 2 \rE_{\bZ} \epsilon(W_{B}|\Phi[\bZ])$
and
$e^{s I_{\mix}(M;E|W_{E} \circ \Gamma)} \le 
2 \rE_{\bZ} e^{s I_{\mix}(M;E|W_{E}\circ \Gamma[\bZ])}$.
Since the relations
\begin{align*}
& \frac{1}{s}\log 2 v^s (1+ e^{-s H_{1+s}(P_{L})} e^{s I_{s}(X;E|W_{E} \times P_X)} ) \nonumber \\
\le &
\log v 
+ \frac{1}{s}
(\log 4+ [ \log e^{-s H_{1+s}(P_{L})} e^{s I_{s}(X;E|W_{E} \times P_X)}]_+)
\nonumber \\
= &
\log v 
+ \frac{1}{s}(\log 4+ [   s I_{s}(X;E|W_{E} \times P_X)-s H_{1+s}(P_{L})  ]_+)
\end{align*}
hold, 
the existence of a good code is guaranteed in the following way.

\begin{thm}\Label{3-6-t}
Assume that a random variable $L$ 
subject to the distribution $P_{L}$ on $\{1, \ldots, {\sL}\}$
is available for an auxiliary random number.
There exists a code $\Phi$ with an encoder $\Gamma$ for any integer ${\sM}$,
and any probability distribution $P_X$ on $\cX$
such that
the encoder $\Gamma$ uses only the distribution $P_{L}$ for mixing the input alphabet and
\begin{align}
|\Phi| &={\sM} \nonumber \\
\epsilon(W_{B}| \Phi) & \le 8 \min_{0\le s'\le 1}({\sM}{\sL})^{s'}e^{-s'I_{-s'}(X;B|W_{B} \times P_X)}
\Label{3-8-1-b}\\
I_{\mix}(M;E|W_{E}\circ \Gamma) & \le 
\log v 
+ \frac{1}{s}(\log 4+ [  s I_{s}(X;E|W_{E} \times P_X)-s H_{1+s}(P_{L})  ]_+),
\Label{7-1-2-b-c} 
\end{align}
for $0 < s \le 1$, 
where $v$ is the number of eigenvalues of $W_{E} \circ P_X$.
\end{thm}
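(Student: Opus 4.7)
The plan is to combine the random coding bound on Bob's error probability already used in Theorem \ref{3-6} with a new \emph{exponential-moment} estimate for Eve's mutual information, and then to extract a single deterministic code via Markov plus union. The asymmetry compared to Theorem \ref{3-6} is essential: bounding the mean of $I_{\mix}$ would only yield a useful statement below capacity, whereas controlling $\rE_{\bZ} e^{s I_{\mix}}$ lets the bound remain meaningful after the positive-part $[\,\cdot\,]_+$ appears, which is exactly the regime past capacity where an equivocation estimate is the only thing one can hope for.

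First, I would reuse the random ensemble $\Phi[\bZ]$ of Subsection \ref{s2a}: a codebook of ${\sL}{\sM}$ i.i.d.\ symbols drawn from $P_X$, mixed by the auxiliary distribution $P_L$, with the decoder inherited from treating every $(l,m)$ pair as a message. The argument of \cite{expo-chan}, applied to a possibly non-uniform prior, gives $\rE_{\bZ}\epsilon(W_B|\Phi[\bZ]) \le 4\min_{0\le s'\le 1}({\sM}{\sL})^{s'}e^{-s' I_{-s'}(X;B|W_B\times P_X)}$, exactly as in Theorem \ref{3-6}.

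Second, and this is where the novelty lies, I would bound $\rE_{\bZ} e^{s I_{\mix}(M;E|W_E\circ \Gamma[\bZ])}$ by running through (\ref{8-09-a})--(\ref{2-25-1-d}). The key steps are: (i) replace $I_{\mix}$ by the larger quantity $I(M;E|\cdot\|W_E\circ P_X)$ via (\ref{8-04-a}); (ii) rewrite it as the message-uniform average $\sum_m \tfrac{s}{{\sM}} D(W_E\circ \Gamma[\bZ]_m\|W_E\circ P_X)$; (iii) apply Jensen to the convex map $x\mapsto e^{sx}$ to pull that averaging out of the exponential; (iv) take the $\bZ$-expectation and apply the per-message channel-resolvability bound (\ref{9-20-1}) from Lemma \ref{Lee1}. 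The output is $\rE_{\bZ} e^{s I_{\mix}}\le v^s(1+e^{-sH_{1+s}(P_L)}e^{s I_s(X;E|W_E\times P_X)})$. I expect step (iii) to be the main conceptual obstacle, since without the convexity trick one is forced to work with $\rE_{\bZ} I_{\mix}$ and loses the regime $R>C$.

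Third, I would apply Markov separately to the two nonnegative variables $\epsilon(W_B|\Phi[\bZ])$ and $e^{s I_{\mix}(M;E|W_E\circ \Gamma[\bZ])}$: each exceeds twice its mean with probability strictly less than $1/2$, so their union has probability strictly less than $1$, and hence some deterministic $\bZ$ simultaneously avoids both bad events. This yields a code $\Phi$ with $\epsilon(W_B|\Phi)\le 8\min_{s'}(\cdot)$ and $e^{s I_{\mix}(M;E|W_E\circ \Gamma)}\le 2 v^s(1+e^{-sH_{1+s}(P_L)}e^{s I_s(X;E|W_E\times P_X)})$. Finally, I would take $\frac{1}{s}\log$ of the latter and bound $\log(1+x)\le \log 2 + [\log x]_+$, which after collecting the $\log v$ and the two $\log 2$ terms produces the advertised estimate $\log v + \frac{1}{s}\bigl(\log 4 + [\,s I_s(X;E|W_E\times P_X)-s H_{1+s}(P_L)\,]_+\bigr)$. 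Everything after the exponential-moment bound is algebraic; the real work is the chain (\ref{8-09-a})--(\ref{2-25-1-d}) that puts $I_{\mix}$ inside an exponential in the first place.
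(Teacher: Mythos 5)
Your proposal is correct and follows the paper's own proof almost verbatim: the chain (\ref{8-09-a})--(\ref{2-25-1-d}) yields $\rE_{\bZ} e^{s I_{\mix}} \le v^s(1+e^{-sH_{1+s}(P_L)}e^{sI_s})$, Markov at factor $2$ plus union on the two bad events produces a good $\bZ$, and the elementary estimate $\log(1+x)\le \log 2 + [\log x]_+$ after dividing by $s$ gives (\ref{7-1-2-b-c}). Nothing is missing and nothing differs from the paper's route.
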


In the $n$-fold discrete memoryless channels $W_{B}^{(n)}$ and $W_{E}^{(n)}$ 
of the channels $W_{B}$ and $W_{E}$,
the additive equation
$I_s(X;B|W_{B}^{(n)} \times P_X^n)= n I_s(X;B|W_{B} \times P_X)$ 
holds.
Assume that a random variable subject to the distribution $P_{L_n}$ on $\{1, \ldots, {\sL}_n\}$
is available.
Thus, for any integer ${\sM}_n$ and any probability distribution $P_X$ on $\cX$,
there exists a code $\Phi_n$ with an encoder $\Gamma_n$ such that
the encoder $\Gamma_n$ only uses the distribution $P_{L_n}$ for mixing the input alphabet
and
\begin{align}
|\Phi_n| =& {\sM}_n \nonumber \\
\epsilon_{B}(\Phi) \le  & 8
\min_{0\le s\le 1}
({\sM}_n {\sL}_n)^{s}e^{-n s I_{-s}(X;B|W_{B} \times P_X)}
 \nonumber \\
\frac{I_{\mix}(M_n;E|W_{E}^{(n)}\circ \Gamma_n)}{n} \le &  
\frac{\log v_n}{n} 
+ \frac{1}{ns}(\log 4+ [   ns I_{s}(X;E|W_{E} \times P_X)-s H_{1+s}(P_{L_n}) ]_+),
\Label{7-1-2-c} 
\end{align}
for $0 < s \le 1$, 
where $v_n$ is the number of eigenvalues of ${W_{E} \circ P_X}^{\otimes n}$.

When the sacrifice information rate is $R_0$, i.e., $s H_{1+s}(P_{L_n})\cong snR_0$,
the above code $\Phi_n$ satisfies 
\begin{align}
\lim_{n \to \infty}\frac{I_{\mix}(M_n;E|W_{E}^{(n)}\circ \Gamma_n)}{n} \le  
[{ I_{s}(X;E|W_{E} \times P_X)}-R_0]_+
\Label{7-1-2-d} 
\end{align}
for $0 < s \le 1$.
Since the function $s \mapsto I_{s}(X;E|W_{E} \times P_X)$ is monotone increasing, 
$\inf_{0<s \le 1} I_{s}(X;E|W_{E} \times P_X)
= \lim_{s \to 0} I_{s}(X;E|W_{E} \times P_X)
= I(X;E|W_{E} \times P_X)$.
Therefore, we obtain 
\begin{align}
\lim_{n \to \infty}\frac{I_{\mix}(M_n;E|W_{E}^{(n)}\circ \Gamma_n)}{n} \le  
I(X;E|W_{E} \times P_X)-R_0 .
\Label{7-1-2-e} 
\end{align}

Now, we define the leaked information rate:
\begin{align*}
& I_{W_{B},W_{E}}(R) \\
:=&
\inf_{\{\Phi_n\}}
\left\{\left.\lim_{n \to \infty}\frac{I_{\mix}(M_n;E|W_{E}^{(n)}\circ \Gamma_n)}{n} 
\right|
\epsilon_{B}(\Phi_n)\to 0,
\frac{\log {\sM}_n}{n}\to R
\right\}.
\end{align*}

We assume ${\sM}_n = e^{nR-\sqrt{n}}, {\sL}_n= e^{n R_0}$ and 
$I(X;B|W_{B} \times P_X)=R+R_0$,
the error probability $\epsilon_{B}(\Phi_n)$ goes to zero
in the above construction.
Then,
\begin{align*}
& \lim_{n \to \infty}\frac{I_{\mix}(M_n;E|W_{E}^{(n)}\circ \Gamma_n)}{n} \le  
I(X;E|W_{E} \times P_X)-( I(X;B|W_{B} \times P_X) - R) \\
= &
R- (I(X;B|W_{B} \times P_X)-I(X;E|W_{E} \times P_X)),
\end{align*}
which implies
\begin{align}
I_{W_{B},W_{E}}(R) 
\le R- (I(X;B|W_{B} \times P_X)-I(X;E|W_{E} \times P_X)).
\Label{10-10-2}
\end{align}

Define
\begin{align}
H(R):=
\lim_{n \to \infty}
\max_{ \{ (P_{V_n},\Gamma_n) \} }
\frac{I(V_n;B|W_{B}^{(n)}\circ {\Gamma}_n \times P_{V_n})-I(V;E|W_{E}^{(n)}\circ {\Gamma_n} \times P_{V_n})}{n}
\Label{10-10-1},
\end{align}
where we take the maximum under the condition 
$R \le
\frac{I(V_n;B|W_{B}^{(n)}\circ {\Gamma_n} \times P_{V_n})}{n}$.
Note that
the limit in RHS of (\ref{10-10-1})
equals 
the limit infimum in RHS of (\ref{10-10-1}).
Now, we choose a pair $(P_{V_n},\Gamma_n)$ such that
$R \le
\frac{I(V_n;B|W_{B}^{(n)}\circ {\Gamma_n} \times P_{V_n})}{n}$.
Applying (\ref{10-10-2}) to 
the channel pair 
$(W_{B}^{(n)}\circ {\Gamma_n}, W_{E}^{(n)}\circ {\Gamma_n})$
with the distribution $P_{V_n}$,
we obtain
\begin{align*}
I_{W_{B},W_{E}}(R) \le  R - \frac{I(V_n;B| W_{B}^{(n)}\circ {\Gamma_n} \times P_{V_n})-
I(V_n;E|W_{E}^{(n)}\circ {\Gamma_n} \times P_{V_n})}{n}, 
\end{align*}
which implies that
\begin{align}
I_{W_{B},W_{E}}(R) \le  R - H(R). 
\Label{8-23-5}
\end{align}

In fact, we obtain the following theorem.
\begin{thm}\Label{t8-23-1}
\begin{align}
I_{W_{B},W_{E}}(R) = R- H(R).\Label{8-23-6}
\end{align}
\end{thm}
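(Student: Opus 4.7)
The inequality $I_{W_{B},W_{E}}(R)\le R-H(R)$ has already been established as (\ref{8-23-5}), so only the matching converse $I_{W_{B},W_{E}}(R)\ge R-H(R)$ remains. My plan is a standard Fano-based converse: take an arbitrary admissible sequence of codes, use Fano plus Holevo's bound to lower-bound Bob's quantum mutual information, and then view the code itself as a feasible candidate in the maximisation (\ref{10-10-1}) defining $H(R)$.

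Fix any admissible sequence $\{\Phi_n,\Gamma_n\}$ with $\epsilon_{B}(\Phi_n)\to 0$ and $\log|\Phi_n|/n\to R$. Since $M_n$ is uniform on $\mathcal M_n$, classical Fano applied to the output $\hat M_n$ of Bob's POVM decoder gives $H(M_n|\hat M_n)\le h(\epsilon_{B}(\Phi_n))+\epsilon_{B}(\Phi_n)\log|\mathcal M_n|$, and Holevo's inequality yields $I(M_n;B|W_{B}^{(n)}\circ\Gamma_n\times P_{\mix,\mathcal M_n})\ge I(M_n;\hat M_n)$, hence
\begin{align*}
I(M_n;B\,|\,W_{B}^{(n)}\circ\Gamma_n\times P_{\mix,\mathcal M_n})\;\ge\;(1-\epsilon_{B}(\Phi_n))\log|\mathcal M_n|-h(\epsilon_{B}(\Phi_n)).
\end{align*}
Thus $I(M_n;B)/n\ge R-\delta_n$ with $\delta_n\to 0$; combined with the trivial upper bound $I(M_n;B)/n\le\log|\mathcal M_n|/n\to R$, this forces $I(M_n;B)/n\to R$. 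Next, the pair $(P_{V_n},\Gamma_n):=(P_{\mix,\mathcal M_n},\Gamma_n)$ is a feasible candidate for the maximisation in (\ref{10-10-1}) at threshold $R-\delta_n$. Writing
\[
h_n(r):=\max_{(P_{V_n},\Gamma_n):\,I(V_n;B)/n\ge r}\frac{I(V_n;B|W_{B}^{(n)}\circ\Gamma_n\times P_{V_n})-I(V_n;E|W_{E}^{(n)}\circ\Gamma_n\times P_{V_n})}{n},
\]
so that $H(r)=\lim_n h_n(r)$ and both $h_n,H$ are non-increasing in $r$, feasibility yields
\[
\frac{I(M_n;B|W_{B}^{(n)}\circ\Gamma_n\times P_{\mix})-I_{\mix}(M_n;E|W_{E}^{(n)}\circ\Gamma_n)}{n}\;\le\;h_n(R-\delta_n).
\]
Letting $n\to\infty$, using $I(M_n;B)/n\to R$, and taking the infimum over admissible sequences gives $I_{W_{B},W_{E}}(R)\ge R-\lim_{\varepsilon\downarrow 0}H(R-\varepsilon)$, which is the desired bound provided $H$ is left-continuous at $R$.

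The principal obstacle I expect is exactly this last continuity step. Because $h_n$ is non-increasing in $r$, the inequality $h_n(R-\delta_n)\ge h_n(R)$ points the wrong way and the $\delta_n$-slack cannot be absorbed for free. I would resolve it either by (i) proving left-continuity of $H$ directly from concavity of the mutual-information difference in the input distribution, exploiting the pre-noisy-processing structure already used in (\ref{8-23-1a}), or by (ii) a time-sharing/padding construction that adjoins a vanishing zero-rate fragment to the given code, pushing Bob's Holevo information strictly above $nR$ so that $(P_{\mix,\mathcal M_n},\Gamma_n)$ becomes feasible at the exact threshold $R$ while leaving the per-letter leakage unchanged in the limit. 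Option (ii) is preferable since it bypasses any regularity assumption on $H$.
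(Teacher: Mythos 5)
Your proof takes essentially the same Fano-plus-feasibility route as the paper's own argument, which likewise lower-bounds $I(M_n;B|W_B^{(n)}\circ\Gamma_n\times P_{\mix,{\cal M}_n})$ via Fano and Holevo and then treats $(P_{\mix,{\cal M}_n},\Gamma_n)$ as a candidate in the maximisation defining $H(R)$. The left-continuity issue you flag is a genuine subtlety: the paper's proof asserts $\lim_n \frac{I(M_n;B)-I(M_n;E)}{n}\le H(R)$ without addressing the fact that the per-$n$ feasibility threshold is only $R-\delta_n$ rather than $R$, so your explicit identification of the $\delta_n$ slack (and the proposed padding or regularity fix) is a refinement of, not a departure from, the paper's method.
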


The proof of this theorem will be given in the end of this subsection.
According to \cite{Wyner}, we define the {\it equivocation rate}:
\begin{align*}
& H_{W_{B},W_{E}}(R) \\
:=&
\inf_{\{\Phi_n\}}
\left\{\left.
\lim_{n \to \infty}\frac{H(M_n|E| W_{E}^{(n)}\circ {\Gamma_n} \times P_{\mix,{\cal M}_n})}{n} 
\right|
\begin{array}{l}
\epsilon_{B}(\Phi_n)\to 0 \\
\frac{\log {\sM}_n}{n}\to R
\end{array}
\right\},
\end{align*}
where $M_n$ is the random variable to be sent and $\Phi_n=({\sM}_n, \{\Gamma_{n}\},\{{\cal D}_{n}\} )$.
Since $H(M_n|E| W_{E}^{(n)}\circ {\Gamma_n} \times P_{\mix,{\cal M}_n})= 
\log {\sM}_n - I_{\mix}(M;E|W_{E}^{(n)}\circ \Gamma_n)$,
we obtain
\begin{align}
H_{W_{B},W_{E}}(R)=H(R).
\end{align}

We define the critical rate
\begin{align}
R^*:= \sup_{\{(P_{V_n},\Gamma_n)\}}
I(V_n;B|W_{B}^{(n)}\circ {\Gamma_n} \times P_{V_n}),
\end{align}
where we take the supremum under the condition
$\lim_{n \to \infty} 
\frac{I(V_n;B|W_{B}^{(n)}\circ {\Gamma_n} \times P_{V_n})-I(V_n;E|W_{E}^{(n)}\circ {\Gamma_n} \times P_{V_n})}{n}
=C_{W_{B},W_{E}}$.
Then, we have
$H(R)=C_{W_{B},W_{E}}$ for $R \le R^*$.
For $R > R^*$,
$H(R)$ is smaller than the capacity $C_{W_{B},W_{E}}$.
We also have the following lemma.
\begin{lem}\Label{l8-23-1}
In the case of degraded channel,
$H(R)$ is calculated as follows.
\begin{align}
H(R)=
\max_{ P_X: I(X;B|W_{B}\times P_X) \ge R }
I(X;B|W_{B}\times P_X)-I(X;E|W_{E}\times P_X).\Label{8-23-7b}
\end{align}
\end{lem}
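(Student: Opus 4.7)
\begin{proofof}{Lemma \ref{l8-23-1} (sketch)}
The lower bound $H(R) \ge \max_{P_X : I(X;B|W_B \times P_X) \ge R} I(X;B|W_B \times P_X) - I(X;E|W_E \times P_X)$ is the easy direction. Given any single-letter distribution $P_X$ meeting the constraint, I would choose $\cV_n = \cX^n$, $P_{V_n} = P_X^n$, and take $\Gamma_n$ to be the identity map. Then by additivity of mutual information under tensor products, $\frac{1}{n} I(V_n;B|W_B^{(n)} \circ \Gamma_n \times P_{V_n}) = I(X;B|W_B \times P_X) \ge R$, so the pair is feasible in the definition \eqref{10-10-1}, and $\frac{1}{n}[I(V_n;B) - I(V_n;E)] = I(X;B|W_B\times P_X) - I(X;E|W_E\times P_X)$. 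Taking the supremum gives the claim.

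For the upper bound, the plan is a single-letterization using a Csiszár-type chain rule together with the degraded structure $W_{E|x} = C(W_{B|x})$. Fix a feasible pair $(P_{V_n}, \Gamma_n)$ and write $X^n$ for the induced classical input, with $B_1, \ldots, B_n$ (resp.\ $E_1,\ldots,E_n$) the per-letter outputs of $W_B^{(n)}$ (resp.\ $W_E^{(n)}$). Because $V$ and the sequence of classical registers $X^n$ are classical, the classical-quantum mutual informations $I(V;B_n)$ and $I(V;E_n)$ split via the ordinary chain rule into sums $\sum_i I(V;B_i|B^{i-1})$ and $\sum_i I(V;E_i|E^{i-1})$. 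The degraded hypothesis $E_i = C(B_i)$ applied tensorwise makes $V \to B^n \to E^n$ a Markov chain, so the standard trick of inserting auxiliary variables $U_i := (V, B^{i-1}, E_{i+1}^n)$ (Csiszár sum identity) yields
\begin{align*}
I(V;B_n) - I(V;E_n) = \sum_{i=1}^n \bigl[ I(U_i;B_i) - I(U_i;E_i) \bigr].
\end{align*}
Introducing a time-sharing variable $T$ uniform on $\{1,\ldots,n\}$ and setting $U := (U_T, T)$, $X := X_T$, one rewrites the average of the RHS as $I(U;B|T) - I(U;E|T)$ for the single-letter channel. Since the channel pair is degraded and $U \to X \to (B,E)$ is Markov, data processing lets us replace $U$ by $X$: $I(U;B|T) - I(U;E|T) \le I(X;B|T) - I(X;E|T)$, which is in turn at most $\max_{P_X} [I(X;B|W_B\times P_X) - I(X;E|W_E\times P_X)]$ because the map $P_X \mapsto I(X;B) - I(X;E)$ is concave on the degraded class. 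The feasibility constraint $R \le \frac{1}{n}I(V_n;B_n) \le I(X;B|T)$ is preserved under this reduction (again using concavity of $P_X \mapsto I(X;B)$), so the optimizing $P_X$ can be chosen in the feasible set of the RHS.

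The principal obstacle is justifying the Csiszár sum step and the subsequent data-processing reduction in the classical-quantum setting, since conditional quantum mutual informations are delicate when the conditioning register is quantum. The saving grace is that all conditioning variables $V, B^{i-1}, E_{i+1}^n$ remain classical (or more precisely, are diagonalizable in a product basis together with $X^n$), so the chain rule and Markov-chain data processing go through exactly as in the classical Csiszár--K\"orner proof; only the outputs $B_i, E_i$ are genuinely quantum, and for those the needed inequalities are instances of the quantum data-processing inequality \eqref{8-21-7}. A small technical point is continuity in $R$: the feasibility constraint on $P_X$ must be closed, which follows from the continuity of $P_X \mapsto I(X;B|W_B\times P_X)$ on the simplex. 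Combining the two bounds gives \eqref{8-23-7b}.
\end{proofof}
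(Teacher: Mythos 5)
Your lower-bound direction matches the paper in spirit (product distribution, identity map, additivity). For the upper bound, however, you have chosen a genuinely different route from the paper, and the route as you describe it contains a factual error that undermines your justification.

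You claim that the auxiliary registers $B^{i-1}$ and $E_{i+1}^n$ ``remain classical (or more precisely, are diagonalizable in a product basis together with $X^n$).'' This is false. In the classical-quantum wiretap model, only $X^n$ is classical; each $B_i$ and $E_i$ is a genuine quantum system carrying the states $W_{B|x_i}$ and $W_{E|x_i}$, which for different $x_i$ are generically non-commuting. Consequently your auxiliary register $U_i = (V, B^{i-1}, E_{i+1}^n)$ is a hybrid classical-quantum object, and the blanket appeal to ``the classical Csisz\'ar--K\"orner argument going through exactly'' does not stand: the Csisz\'ar sum manipulations and the data-processing step $I(U_i;B_i) - I(U_i;E_i) \le I(X_i;B_i) - I(X_i;E_i)$ must each be re-justified using monotonicity of quantum conditional mutual information under a CPTP map on the conditioned output (this does hold, but your stated ``saving grace'' is the wrong reason). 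Also, your display asserts equality $I(V;B^n) - I(V;E^n) = \sum_i [I(U_i;B_i) - I(U_i;E_i)]$; when $U_i$ is taken to be the full triple $(V,B^{i-1},E_{i+1}^n)$ rather than conditional versions, what one actually gets is an inequality $\le$, with a nonnegative correction term $\sum_i[I(B^{i-1},E_{i+1}^n;B_i) - I(B^{i-1},E_{i+1}^n;E_i)]$ that is discarded by degradedness. The inequality is still in the useful direction for an upper bound, but an equality claim is wrong and would mislead a reader.

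The paper avoids Csisz\'ar-sum bookkeeping entirely. It first removes the auxiliary classical variable $V_n$ by the degraded-channel inequality $I(V_n;B) - I(V_n;E) \le I(X^n;B) - I(X^n;E)$ (cited as an exercise in \cite{Hayashi}), then uses a tensorization inequality for products of classical-quantum channels to split the $n$-letter secrecy difference into per-letter differences with the marginal input distributions, and finally invokes concavity of $P_X \mapsto I(X;B|W_B\times P_X)$ and of $P_X \mapsto I(X;B|W_B\times P_X) - I(X;E|W_E\times P_X)$ (the latter proved by writing the difference as a sum of relative entropies and using the degraded map) to show the i.i.d.\ marginal is optimal subject to the rate constraint. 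That route never introduces quantum auxiliary registers, so it sidesteps the delicacies you flagged. You do arrive at the same concavity conclusion in your last paragraph, so the endpoint is common; the divergence, and the soft spot in your writeup, is in the $n$-to-1-letter reduction.
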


In the general classical case,
$H(R)$ can be single-letterized
by using two auxiliary random variables\cite{Wyner,CK79}.
In the general quantum case,
the converse part for the single-letterized formula
has a crucial difficulty,
and the direct part requires 
a quantum analogue of the superposition coding. 
Thus, we do not treat the single-letterization of $H(R)$.

\begin{proofof}{Lemma \ref{l8-23-1}}
Since the channel is degraded,
the inequality
\begin{align*}
& I(V_n;B|W_{B}^{(n)}\circ {\Gamma_n} \times P_{V_n})-I(V_n;E|W_{E}^{(n)}\circ {\Gamma_n} \times P_{V_n}) \\
\le &
I(X^n;B|W_{B}^{(n)} \times \Gamma_n \circ P_{V_n})-I(X^n;E|W_{E}^{(n)} \times \Gamma_n \circ P_{V_n})
\end{align*}
holds \cite[Exercise 9.19]{Hayashi}. 
Further, 
for two classical-quantum channels
$W_{B}$ and $\tilde{W}_{B}$ with the input classical system
${\cal X}$ and $\tilde{\cal X}$,
we define the classical-quantum channel
$W_{B} \otimes \tilde{W}_{B}$
by
$(W_{B} \otimes \tilde{W}_{B})_{x,\tilde{x}}:=
W_{B|x} \otimes \tilde{W}_{B|\tilde{x}}$.
For a distribution $P_{X,\tilde{X}}$ on ${\cal X} \times \tilde{\cal X}$,
we choose $P_X$ and $P_{\tilde{X}}$ to be the marginal distributions of $P_{X,\tilde{X}}$.
Then,
we have \cite[Exercise 9.13]{Hayashi}
\begin{align*}
& I(X,\tilde{X};B| W_{B} \otimes \tilde{W}_{B} \times P_{X,\tilde{X}})-
I(X,\tilde{X};E|W_{E} \otimes \tilde{W}_{E} \times P_{X,\tilde{X}}) \\
\le &
I(X;B|W_{B} \times P_X)-
I(X;E|W_{E} \times P_X)
+
I(\tilde{X};B|\tilde{W}_{B} \times P_{\tilde{X}})-
I(\tilde{X};E|\tilde{W}_{E} \times P_{\tilde{X}})
\end{align*}
and
\begin{align*}
I(X,\tilde{X};B|W_{B} \otimes \tilde{W}_{B} \times P_{X,\tilde{X}})
\le
I(X;B|W_{B}\times P_X)
+
I(\tilde{X};B|\tilde{W}_{B}\times P_{\tilde{X}}).
\end{align*}
Hence, for $0\le k \le n$,
\begin{align*}
&\max_{ (P_V,\Gamma) : 
I(V;B|W_{B}^{(n)}\circ {\Gamma} \times P_V) \ge R}
I(V;B|W_{B}^{(n)}\circ {\Gamma} \times P_V)
-
I(V;E|W_{E}^{(n)}\circ {\Gamma} \times P_V)
\\
=&
\max_{ P_{X^n}: I(X^n;B|W_{B}^{(n)} \times P_X) \ge R}
I(X^n;B|W_{B}^{(n)} \times P_{X^n})-I(X;E|W_{E}^{(n)} \times P_{X^n}) \\
=&
\max_{ (P_{X^k},P_{X^{n-k}}): I(X^k;B|W_B^{(k)} \times P_{X^k})+I(X^{n-k};B|W_B^{(n-k)}\times P_{X^{n-k}}) \ge R}
(I(X^k;B|W_B^{(k)}\times P_{X^k})-I(X^k;E|W_E^{(k)}\times P_{X^{k}})) \\
&\quad +
(I(X^{n-k};B|W_B^{(n-k)}\times P_{X^{n-k}})-I(X^{n-k};E|W_E^{(n-k)}\times P_{X^{n-k}})) \\
=&
\max_{ (P_{X,1}, \ldots, P_{X,n}): \sum_i I(X;B|W_{B}\times P_{X,i}) \ge R}
\sum_{i=1}^n (I(X;B|W_{B}\times P_{X,i})-I(X;E|W_{E}\times P_{X,i})).
\end{align*}
Since, 
as is shown latter,
\begin{align}
&\sum_{i=1}^n I(X;B|W_{B}\times P_{X,i}) \le n I(X;B|W_{B}\times \sum_{i=1}^n \frac{1}{n} P_{X,i}) ,
\Label{10-11-1-b}
\\
&\sum_{i=1}^n (I(X;B|W_{B}\times P_{X,i})-I(X;E|W_{E}\times P_{X,i})) \nonumber\\
\le &
n (I(X;B|W_{B}\times \sum_{i=1}^n \frac{1}{n} P_{X,i})- I(X;E|W_{E}\times \sum_{i=1}^n \frac{1}{n} P_{X,i})),
\Label{10-11-2}
\end{align}
the maximum is realized when $P_{X,j}=\sum_{i=1}^n \frac{1}{n} P_{X,i}$.
Therefore, we obtain (\ref{8-23-7b}).

Now, we show (\ref{10-11-1-b}) and (\ref{10-11-2}).
(\ref{10-11-1-b}) is shown from the concavity of von Neumann entropy.
The proof of (\ref{10-11-2}) is more difficult.
It is enough to show 
\begin{align*}
&\lambda (I(X;B|W_{B}\times P_X)-I(X;E|W_{E}\times P_X))+(1-\lambda)(I(X;B|W_{B}\times P_X')-I(X;E|W_{E}\times P_X')) \\
\le &
I(X;B|W_{B}\times  (\lambda P_X+ (1-\lambda)P_X'))- I(X;E|W_{E}\times (\lambda P_X+ (1-\lambda)P_X')).
\end{align*}

We choose a TP-CP map $\Lambda$ such that 
$W_{E|x}=\Lambda(W_{B|x})$.
Then, we obtain
\begin{align}
& D( W_{E} \circ P_X \|  W_{E} \circ (\lambda P_X+ (1-\lambda)P_X')) 
\nonumber \\
=&
D(\Lambda (W_{B} \circ P_X) \|
\Lambda (W_{B} \circ (\lambda P_X+ (1-\lambda)P_X'))) \nonumber \\
\le &
D(W_{B} \circ P_X \| W_{B} \circ (\lambda P_X + (1-\lambda)P_X')).
\Label{10-11-3}
\end{align}
Using (\ref{10-11-3}), we obtain
\begin{align*}
&I(X;B|W_{B}\times (\lambda P_X+ (1-\lambda)P_X'))- I(X;E|W_{E}\times (\lambda P_X+ (1-\lambda)P_X')) \\
& -
\lambda (I(X;B|W_{B}\times P_X)-I(X;E|W_{E}\times P_X))+(1-\lambda)(I(X;B|W_{B}\times P_X')-I(X;E|W_{E}\times P_X')) \\
=&
H( W_{B} \circ (\lambda P_X+ (1-\lambda)P_X'))\\
&-H( W_{E} \circ (\lambda P_X+ (1-\lambda)P_X')) \\
&-\lambda H(W_{B} \circ P_X) 
-(1-\lambda) H(W_{B} \circ P_X') \\
&+\lambda H( W_{E} \circ P_X) 
+(1-\lambda) H( W_{E} \circ P_X') \\
=&
\lambda D( W_{B} \circ P_X \| W_{B} \circ (\lambda P_X + (1-\lambda)P_X'))\\
&+
(1-\lambda) D(W_{B} \circ P_X' \| W_{B} \circ (\lambda P_X+ (1-\lambda)P_X'))
\\
&+\lambda D( W_{E}\circ P_X \| W_{E} \circ (\lambda P_X+ (1-\lambda)P_X'))\\
&+
(1-\lambda) D( W_{E} \circ P_X' \| W_{E} \circ (\lambda P_X+ (1-\lambda)P_X'))
\\
\ge & 0
\end{align*}
\end{proofof}

\begin{proofof}{Theorem \ref{t8-23-1}}
It is sufficient to show the inequality opposite to (\ref{8-23-5}).
Let $\Phi_n=({\sM}_n, \{\Gamma_{n}\},\{{\cal D}_{n}\} )$ be the sequence attaining $I_{W_{B},W_{E}}(R)$.
Then, 
when $M_n$ is the random variable to be sent
and $\hat{M}_n$ is the random variable to be received,
Fano inequality implies that
\begin{align*}
h(\epsilon(W_B^{(n)}|\Phi_n)) + \epsilon(W_B^{(n)}|\Phi_n)\log {\sM}_n \ge H(M_n|\hat{M}_n).
\end{align*}
That is,
\begin{align*}
& -h(\epsilon(W_B^{(n)}|\Phi_n)) + (1-\epsilon(W_B^{(n)}|\Phi_n))\log {\sM}_n \\
\le & I(M_n;\hat{M}_n) \le I(M_n;B|W_{B}^{(n)}\circ {\Gamma_n} \times P_{\mix, {\cal M}_n}).
\end{align*}
Thus,
\begin{align*}
\frac{1}{n}\log {\sM}_n
\le
\frac{I(M_n;B|W_{B}^{(n)}\circ {\Gamma_n}\times P_{\mix, {\cal M}_n}) + h(\epsilon(W_B^{(n)}|\Phi_n))}{n (1-\epsilon(W_B^{(n)}|\Phi_n))}.
\end{align*}
Taking the limit, we obtain
\begin{align*}
R
\le &
\lim_{n \to \infty}
\frac{I(M_n;B|W_{B}^{(n)}\circ {\Gamma_n}\times P_{\mix, {\cal M}_n}) + h(\epsilon(W_B^{(n)}|\Phi_n))}{n (1-\epsilon(W_B^{(n)}|\Phi_n))}
\\
=&\lim_{n \to \infty} \frac{I(M_n;B|W_{B}^{(n)}\circ {\Gamma_n}\times P_{\mix, {\cal M}_n}) }{n}.
\end{align*}
Further,
\begin{align*}
\lim_{n \to \infty} \frac{I(M_n;B|W_{B}^{(n)}\circ {\Gamma_n}\times P_{\mix, {\cal M}_n}) -I(M_n;E|W_{E}^{(n)}\circ {\Gamma_n}\times P_{\mix, {\cal M}_n}) }{n} 
\le H(R).
\end{align*}
Since $I_{W_{B},W_{E}}(R)= \lim_{n \to \infty} \frac{I(M_n;E|W_{E}^{(n)}\circ {\Gamma_n}\times P_{\mix, {\cal M}_n}) }{n}$,
\begin{align*}
& R- I_{W_{B},W_{E}}(R) 
\le 
\lim_{n \to \infty} 
\frac{I(M_n;B|W_{B}^{(n)}\circ {\Gamma_n}\times P_{\mix, {\cal M}_n}) -I(M_n;E|W_{E}^{(n)}\circ {\Gamma_n}\times P_{\mix, {\cal M}_n})}{n} \\
\le & H(R).
\end{align*}
That is 
\begin{align*}
I_{W_{B},W_{E}}(R) \ge 
R- H(R),
\end{align*}
which implies (\ref{8-23-6}).
\end{proofof}

\section{Wire-tap channel with linear coding}\Label{s3}
\subsection{The case when uniform distribution is available}\Label{s3a}
\subsubsection{General case}
In a practical sense, 
we need to take into account the decoding time.
For this purpose, we often restrict our error correcting codes to linear codes.
While the constructions of codes in this section are different from 
those in Section \ref{s2},
the bounds obtained in this section 
are similar to those in Section \ref{s2}.
Hence, the evaluation in the $n$-fold discrete memoryless case
can be derived in the same way as that in Section \ref{s2}
from the single-shot case by substituting the channel $W_E~{(n)}$ into the channel $W_E$.
The source-universality also can be shown in the same way.
Therefore, this section discusses only the single-shot case.

In the following, we consider the case where the sender's space $\cX$ has the structure of a module $\bF_q^m$.
First, we regard a submodule $C_1\subset \cX$ 
as the set of transmitted message,
and focus on its decoding $\{\cD_x\}_{x\in C_1}$ by the authorized receiver.
In the following, for any element $x \in {\cal X}$, 
$[x]_{C_2}$ denotes the coset concerning the quotient by $C_2$,
and 
$[x]_{C_1}$ denotes the coset concerning the quotient by $C_1$.
When the code $C_2$ is fixed, $[x]_{C_2}$ is simplified to $[x]$.
Based on a submodule $C_2$ of $C_1$,
we construct a code for a quantum wiretap channel
$\Phi_{C_1,C_2}= (|C_1/C_2|,
\{\Gamma_{C_1,C_2|[x]}\}_{[x]\in C_1/C_2},
\{\cD_{[x]}\}_{[x]\in C_1/C_2})$
as follows.
The encoding distribution corresponding to the message $[x]\in C_1/C_2$
is given as the uniform distribution $P_{\mix,[x]}$ on the coset $[x]:=x+C_2$, 
and 
the decoding $\cD_{[x]}$ is given as the subset 
$\sum_{x'\in x+C_2} \cD_{x'}$.
As a generalization, for $[y]_{C_1}\in {\cal X}/C_1$,
we consider
a code $\Phi_{[y]_{C_1},C_2}= (|C_1/C_2|,
\{\Gamma_{[y]_{C_1},C_2|[x]}\}_{[x]\in [y]_{C_1}/C_2},
\{\cD_{[x]}\}_{[x]\in [y]_{C_1}/C_2})$
as follows.
The encoding distribution corresponding to the message $[x]\in [y]_{C_1}/C_2$
is given as the uniform distribution on the coset $[x]:=x+C_2$, 
and 
the decoding $\cD_{[x]}$ is given as the subset 
$\sum_{x'\in x+C_2} \cD_{x'}$.
Sometimes, this type code is useful.

Now, we treat the ensemble of the code pairs $\{C_2[\bZ] \subset C_1[\bZ]\}$ 
with the size $|C_1[\bZ]|={\sM}{\sL}$ and $|C_2[\bZ]|={\sL}$.
We identify the set $[y]_{C_1[\bZ]}/C_2[\bZ]$ with the set of messages ${\cal M}=\{1, \ldots, \sM\}$.
We also define $v$,$v_s$, $\lambda_s$ with $P_X=P_{\mix,{\cal X}}$ similar to Subsection \ref{s2a}.


\begin{lem}\Label{L06a}
We suppose 
that the coset $[\bY]_{C_1[\bZ]} \in {\cal X}/C_1[\bZ]$ 
is chosen with the uniform distribution for the given code pair $C_2[\bZ] \subset C_1[\bZ]$
and that
the ensemble $\{C_2[\bZ]\}$ satisfies Condition \ref{C2} as a subset of $\cX$.
The relations 
\begin{align}
& \rE_{\bZ,\bY}
I_{\mix}(M;E|W_{E} \circ \Gamma_{[\bY]_{C_1[\bZ]},C_2[\bZ]}) \nonumber \\
\le &
v^s \frac{e^{s I_{s}(X;E|W_{E} \times P_{\mix,{\cal X}})}}{{\sL}^s s},
\Label{4-27-2-c} \\
& \rE_{\bZ,\bY}
e^{s I_{\mix}(M;E|W_{E} \circ \Gamma_{[\bY]_{C_1[\bZ]},C_2[\bZ]})} \nonumber \\
\le &
v^s (1+ \frac{e^{s I_{s}(X;E|W_{E} \times P_{\mix,{\cal X}})}}{{\sL}^s} )
\Label{2-25-1-c-2} \\
& \rE_{\bZ,\bY} d_{1,\mix}(M;E|W_{E} \circ \Gamma_{[\bY]_{C_1[\bZ]},C_2[\bZ]} ) \nonumber \\
\le &
2 \frac{\mu_s}{{\sL}^{s/2}} e^{\frac{s}{2}I^{{\rm G}}_s(X;E|W_{E} \times P_{\mix,{\cal X}})} 
\Label{8-06-d}
\end{align}
hold
for $0 < s \le 1$ and
$\mu_s := 
\min\{4 +\sqrt{v_s}, (4+ \sqrt{\lceil \lambda_s \rceil})e^{\frac{s}{2}} \}$.
\end{lem}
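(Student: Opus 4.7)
The plan is to reduce all three inequalities to the results of Lemma \ref{Lee2} and the privacy amplification machinery already used in Subsection \ref{s2a}. The key observation is that, for each fixed $\bZ$, if $\bY$ is drawn uniformly from $\cX/C_1[\bZ]$ and $M$ is drawn uniformly from the $\sM$ cosets of $C_2[\bZ]$ inside $[\bY]_{C_1[\bZ]}$, then the resulting coset $M+C_2[\bZ]$ is uniformly distributed over $\cX/C_2[\bZ]$. Consequently, averaging the per-message output state $W_E\circ P_{\mix,[x]+C_2[\bZ]}$ over $(\bY,M)$ produces the same distribution as averaging $W_E\circ P_{\mix,C_2[\bZ]+\tilde{X}}$ over $\tilde{X}$ uniform on $\cX$, which is exactly the setting of Lemma \ref{Lee2}.

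For (\ref{4-27-2-c}) I would combine the variational bound (\ref{8-04-a}) with reference state $\sigma_E=W_E\circ P_{\mix,\cX}$ to obtain
\begin{align*}
I_{\mix}(M;E|W_E\circ\Gamma_{[\bY]_{C_1[\bZ]},C_2[\bZ]}) \le \frac{1}{\sM}\sum_{[x]} D(W_E\circ P_{\mix,[x]+C_2[\bZ]}\|W_E\circ P_{\mix,\cX}),
\end{align*}
take the expectation over $(\bZ,\bY)$, apply the reparametrization above to rewrite the right-hand side as $\rE_{\bZ,\tilde{X}}D(W_E\circ P_{\mix,C_2[\bZ]+\tilde{X}}\|W_E\circ P_{\mix,\cX})$, and invoke (\ref{9-19-1-a}). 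For (\ref{2-25-1-c-2}) the same opening step together with convexity of $t\mapsto e^{st}$ yields
\begin{align*}
\rE_{\bZ,\bY}e^{sI_{\mix}(\cdots)} \le \rE_{\bZ,\tilde{X}}e^{sD(W_E\circ P_{\mix,C_2[\bZ]+\tilde{X}}\|W_E\circ P_{\mix,\cX})},
\end{align*}
to which (\ref{9-20-1-a}) applies directly.

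The main obstacle is the $L_1$ bound (\ref{8-06-d}), because it involves the Gallager-type exponent $I^{{\rm G}}_s$ together with the polynomial factor $\mu_s$, and therefore cannot be read off from Lemma \ref{Lee2} alone. For this step I would adapt the hashing argument used to prove (\ref{8-26-10-a}) in Subsection \ref{s2a}: introduce an auxiliary random bijection of the message set so that the composite encoder fits Condition \ref{C1} of Appendix \ref{s4-1}, apply the privacy amplification bounds (\ref{8-26-13-a}) and (\ref{8-26-13-f2}) with Lemma \ref{l3}'s optimal reference state $\sigma_E\propto(\sum_x P_{\mix,\cX}(x)W_{E|x}^{1+s})^{1/(1+s)}$, and only afterwards average over $\bZ$ using Condition \ref{C2} to reduce each summand to the single-symbol trace $\Tr\sum_x P_{\mix,\cX}(x)W_{E|x}^{1+s}\sigma_E^{-s}$. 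The delicate point is to verify that after this composite randomization the relevant hash family retains the almost-universality needed to invoke (\ref{8-26-13-a}); this is precisely where Condition \ref{C2} on $\{C_2[\bZ]\}$, together with the independence of $\bY$ from $\bZ$, enters.
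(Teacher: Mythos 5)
Your treatment of (\ref{4-27-2-c}) and (\ref{2-25-1-c-2}) is correct and coincides with the paper's argument: apply (\ref{8-04-a}) with reference state $W_E\circ P_{\mix,\cX}$, expand the resulting quantity as an average of relative entropies $D(W_E\circ P_{\mix,[x]}\|W_E\circ P_{\mix,\cX})$ over $C_2[\bZ]$-cosets, observe that averaging over $([\bY],M)$ is the same as averaging over $\tilde{X}$ uniform on $\cX$, and invoke (\ref{9-19-1-a}) and (\ref{9-20-1-a}) (with convexity of $x\mapsto e^x$ for the second one).

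For (\ref{8-06-d}), however, your sketch takes a wrong turn and has a gap. The paper does \emph{not} introduce an auxiliary random bijection of the message set; there is no need for one, because the hash family is already in hand: the quotient map $X\mapsto[X]_{C_2[\bZ]}$ is itself a universal$_2$ family (Condition~\ref{C1}) precisely because $\{C_2[\bZ]\}$ satisfies Condition~\ref{C2} as a subset of $\cX$. The step you are missing is how to pass from the quantity $\rE_{[\bY]|\bZ}\,d_{1,\mix}(M;E|W_E\circ\Gamma_{[\bY]_{C_1[\bZ]},C_2[\bZ]})$, in which Eve effectively also holds the $C_1[\bZ]$-coset label $[\bY]$, to a quantity to which Lemma~\ref{L-15} applies. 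This is done by rewriting the left-hand side as $d_1'([X];[X]_{C_1[\bZ]},E|W_E\times P_{\mix,\cX})$ and invoking inequality~(\ref{8-09-o}), which, since $[X]_{C_1[\bZ]}$ is a function of $[X]_{C_2[\bZ]}$, bounds this by $2\,d_1'([X];E|W_E\times P_{\mix,\cX})$ --- this is where the prefactor $2$ in (\ref{8-06-d}) comes from. Only then does one apply Lemma~\ref{L-15} with the optimal $\sigma_E$ from Lemma~\ref{l3} and use $I_s(X;E|W_E\times P_{\mix,\cX}\|\sigma_E)=\log|\cX|-H_{1+s}(X|E|W_E\times P_{\mix,\cX}\|\sigma_E)$ to land on the Gallager exponent $I_s^{\mathrm{G}}$. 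Your proposed auxiliary bijection of the $\sM$ message labels neither creates the required universality (the relevant hash is from $\cX$, not from the message set) nor addresses the fact that $[\bY]$ is being revealed; so as written the plan would not close. The ingredients you name --- Condition~\ref{C2}, Lemma~\ref{l3}'s optimal reference state, and (\ref{8-26-13-a})/(\ref{8-26-13-f2}) --- are the right ones, but you need (\ref{8-09-o}) and the direct universality of the coset map in their place.
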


\begin{lem}\Label{L06b}
We suppose 
that the coset $[\bY]_{C_1[\bZ]} \in {\cal X}/C_1[\bZ]$ 
is chosen with the uniform distribution for the given code pair $C_2[\bZ] \subset C_1[\bZ]$
and that
the ensemble $\{C_1[\bZ]\}$ satisfies Condition \ref{C2} as a subset of $\cX$.
The relation
\begin{align}
& 
\rE_{\bZ,\bY}
\epsilon(W_{B}| \Phi_{[\bY]_{C_1[\bZ]},C_2[\bZ]}) \nonumber \\
\le & 4({\sM}{\sL})^{s}e^{-s I_{-s}(X;B|W_{B} \times P_{\mix,{\cal X}})} \Label{8-06-e}
\end{align}
holds for $0 < s \le 1$.
\end{lem}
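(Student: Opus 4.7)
The plan is to reduce to a standard random-coding bound for an auxiliary ${\sM}{\sL}$-message code and then to verify that the linear ensemble $\{C_1[\bZ]\}$ under Condition \ref{C2}, combined with the uniformly random shift $\bY$, produces codewords whose pairwise statistics match the i.i.d.\ uniform case used in \cite{expo-chan}. First, since $\cD_{[x]}=\sum_{x'\in x+C_2[\bZ]}\cD_{x'}\ge \cD_{x}$ for any decoder $\{\cD_x\}_{x\in[\bY]_{C_1[\bZ]}}$ on the auxiliary code, averaging over the uniform distribution on each coset $x+C_2[\bZ]$ yields
\begin{align*}
\epsilon(W_B|\Phi_{[\bY]_{C_1[\bZ]},C_2[\bZ]})
\le \frac{1}{M}\sum_{x\in [\bY]_{C_1[\bZ]}}\Tr W_{B|x}(I-\cD_x),
\end{align*}
with $M:={\sM}{\sL}$. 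I would then take $\{\cD_x\}$ to be the Hayashi--Nagaoka square-root decoder from \cite{expo-chan}; the Hayashi--Nagaoka inequality bounds the right-hand side by
\begin{align*}
\frac{2}{M}\sum_{x\in C}\Tr W_{B|x}(I-T(x))
+\frac{4}{M}\sum_{x\ne x'\in C}\Tr W_{B|x}\,T(x'),
\end{align*}
where $C:=[\bY]_{C_1[\bZ]}$ and $T(\cdot)$ are the Gallager-type test operators used in \cite{expo-chan}.

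The heart of the argument is to control the $\bZ,\bY$-expectation of each of these two terms. The uniformity of $\bY$ immediately gives $\rE_{\bZ,\bY}\sum_{x\in C}f(x)=M\,\rE_{X\sim P_{\mix,\cX}}f(X)$, so the diagonal term matches the i.i.d.\ value exactly. For the off-diagonal term, a direct calculation using the independence of $\bY$ from $\bZ$ and the submodule structure gives, for any distinct $x,x'\in\cX$,
\begin{align*}
\Pr_{\bZ,\bY}[x,x'\in C]
&=\Pr_{\bZ,\bY}[x-\bY\in C_1[\bZ],\ x-x'\in C_1[\bZ]] \\
&=\frac{M}{|\cX|}\Pr_{\bZ}[x-x'\in C_1[\bZ]]
\le \Bigl(\frac{M}{|\cX|}\Bigr)^{\!2},
\end{align*}
where the final inequality uses Condition \ref{C2}. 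Summing,
\begin{align*}
\rE_{\bZ,\bY}\sum_{x\ne x'\in C}g(x,x')
\le M^2\,\rE_{X,X'\sim P_{\mix,\cX}\otimes P_{\mix,\cX}}g(X,X'),
\end{align*}
which dominates the i.i.d.\ off-diagonal sum $M(M-1)\,\rE\, g(X,X')$ and hence suffices for the derivation in \cite{expo-chan}. Plugging both bounds into the Gallager--Hayashi--Nagaoka estimate there reproduces (\ref{8-06-e}) with the same constant $4$ as in (\ref{3-8-1}).

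The main obstacle is the final bookkeeping step: one must verify that the pairwise estimate above enters the specific Gallager $E_0$-style manipulation of \cite{expo-chan} without introducing any extra constant. This is routine once Condition \ref{C2} and the uniformity of $\bY$ have been used as above; the argument is entirely parallel to the bookkeeping already carried out in the proof of Lemma \ref{Lee2} for the complementary resolvability side.
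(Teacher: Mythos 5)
Your proof is correct and follows the same route as the paper, which simply asserts that ``the proof given in \cite{expo-chan} is valid with $P_X=P_{\mix,\cX}$'' because Condition \ref{C2} holds. You supply the justification the paper leaves implicit: the reduction $\epsilon(W_B|\Phi_{[\bY],C_2})\le \frac{1}{M}\sum_{x\in[\bY]}\Tr W_{B|x}(I-\cD_x)$ via $\cD_{[x]}\ge\cD_x$, followed by the verification that the uniform shift $\bY$ gives the exact $P_{\mix,\cX}$ marginal and that Condition \ref{C2} bounds the pairwise collision probability by $(M/|\cX|)^2$, which is precisely what the Hayashi--Nagaoka/Gallager derivation in \cite{expo-chan} needs (it uses only first and second moments, and the $M^2$ versus $M(M-1)$ slack is already absorbed there).
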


Here, 
when ${\cal X}$ is given as a vector space $\bF_q^k$ of a finite field $\bF_q$,
we construct an ensemble of the code pairs $\{C_2[\bZ] \subset C_1[\bZ]\}$ 
such that the ensemble of submodules $\{C_2[\bZ]\}$ and $\{C_1[\bZ]\}$ satisfies Condition \ref{C2} as subsets of $\cX$.
In the following construction, we choose $C_1[\bZ]$ and $C_2[\bZ]$ as $l_1$-dimensional and $l_2$-dimensional subspaces.
First, we fix $ (k-l_2) \times (l_1-l_2)$ matrix $D$ with the rank $l_2-l_1$.
Let $\bZ$ be the Toeplitz matrix of the size $(k-l_2) \times l_2$, which contains
$k -1$ random variables taking values in the finite field $\bF_q$.
When the codes $C_1[\bZ]$ and $C_2[\bZ]$
are given by the images of the random matrix 
$\left(
\begin{array}{cc}
I   &   0 \\
\bZ & D
\end{array}
\right)$
and
$\left(
\begin{array}{c}
I \\
\bZ 
\end{array}
\right)$,
$\{C_2[\bZ]\}$ satisfies Condition \ref{C2}.

Furthermore, 
let $\bZ'$ be the Toeplitz matrix of the size $(k-l_1) \times (l_1-l_2)$, which contains
$k -l_2 -1$ random variables taking values in the finite field $\bF_q$.
When $C_1[\bZ,\bZ']$ and $C_2[\bZ]$
are given by the images of the random matrix 
$\left(
\begin{array}{cc}
I   &   0 \\
\bZ & 
\begin{array}{c}
I \\
\bZ'
\end{array}\\
\end{array}
\right)$
and
$\left(
\begin{array}{c}
I \\
\bZ 
\end{array}
\right)$,
the ensemble $\{C_1[\bZ,\bZ']\}$ 
satisfies Condition \ref{C2} as subsets of $\cX$ as well as $\{C_2[\bZ]\}$. 

\begin{proofof}{Lemma \ref{L06a}}
Using (\ref{8-04-a}) and (\ref{9-19-1-a}),
we obtain 
\begin{align}
& \rE_{\bZ} \rE_{[\bY]_{C_1[\bZ]}} 
I_{\mix}(M;E|W_{E} \circ \Gamma_{[\bY]_{C_1[\bZ]},C_2[\bZ]}) \nonumber \\
\le &
\rE_{\bZ} \rE_{[\bY]_{C_1[\bZ]}} 
I(M;E|W_{E} \circ \Gamma_{[\bY]_{C_1[\bZ]},C_2[\bZ]} \times P_{\mix, [\bY]_{C_1[\bZ]}/C_2[\bZ]}\| W_{E} \circ {P_{\mix,{\cal X}}}) \nonumber \\
= &
 \rE_{\bZ} \rE_{[\bY]_{C_1[\bZ]}} 
\frac{1}{{\sM}}\sum_{ [x]\in [\bY]_{C_1[\bZ]}/C_2[\bZ]} D(W_{E} \circ {P_{\mix,[x]}} 
\|W_{E} \circ {P_{\mix,{\cal X}}}) \nonumber \\
= & \rE_{\bZ} 
\frac{{\sL}}{|{\cal X}|}
\sum_{ [x]\in {\cal X}/C_2[\bZ]} D(W_{E} \circ {P_{\mix,[x]}} \|W_{E} \circ {P_{\mix,\cX}}) \nonumber \\
= & \rE_{\bZ} \rE_{Y} 
D(W_{E} \circ {P_{\mix, C_2[\bZ]+Y}} \|W_{E} \circ {P_{\mix,\cX}}) \nonumber \\
\le &
v^s \frac{e^{s I_{s}(X;E|W_{E} \times P_X)}}{{\sL}^s s},
\end{align}
which implies (\ref{4-27-2-c}).
Similarly, using (\ref{9-20-1-a}) and the convexity of $x \mapsto e^x$,
we obtain
\begin{align}
&
\rE_{\bZ} \rE_{[\bY]_{C_1[\bZ]}} 
e^{s {I}(M;E|W_{E}\circ \Gamma_{[\bY]_{C_1[\bZ]},C_2[\bZ]} \times P_{\mix, [\bY]_{C_1[\bZ]}/C_2[\bZ]}\| W_{E} \circ {P_{\mix,{\cal X}}}) } \nonumber \\
= &
\rE_{\bZ} \rE_{[\bY]_{C_1[\bZ]}} 
e^{s \sum_{ [x]\in [\bY]_{C_1[\bZ]}} \frac{1}{{\sM}}D(W \circ {P_{\mix,[x]}} \|W \circ {P_{\mix,{\cal X}}}) }\nonumber  \\
\le &
\rE_{\bZ} \rE_{[\bY]_{C_1[\bZ]}} 
\sum_{ [x]\in [\bY]_{C_1[\bZ]}} \frac{1}{{\sM}} e^{s  D(W \circ {P_{\mix,[x]}} \|W \circ {P_{\mix,{\cal X}}}) }\Label{8-09-e} \\
=&
\rE_{\bZ} \rE_{Y} 
e^{s  D(W \circ {P_{\mix, C_2[\bZ] +Y}} \|W \circ {P_{\mix,{\cal X}}}) }\nonumber  \\
\le &
v^s (1+ \frac{e^{s {I}_{s}(X;E|W_{E} \times P_{\mix,{\cal X}})}}{{\sL}^s} )
\Label{2-25-1-c-1}
\end{align}
for $0 < s \le 1$,
where (\ref{8-09-e}) and (\ref{2-25-1-c-1}) follow from 
the convexity of $x \mapsto e^x$ and (\ref{9-20-1-a}), respectively.
Relation (\ref{8-04-a}) guarantees that
\begin{align}
& e^{s I_{\mix}(M;E|W_{E} \circ \Gamma_{[\bY]_{C_1[\bZ]},C_2[\bZ]})} 
\le
e^{s I(M;E|W_{E} \circ \Gamma_{[\bY]_{C_1[\bZ]},C_2[\bZ]} \times P_{\mix, [\bY]_{C_1[\bZ]}/C_2[\bZ]}\| W_{E} \circ {P_{\mix,{\cal X}}})} .
\Label{8-06-a}
\end{align}
Thus, combination of (\ref{2-25-1-c-1}) and (\ref{8-06-a}) implies (\ref{2-25-1-c-2}).

Next, we show (\ref{8-06-d}).
For a given value $\bZ$, (\ref{8-09-o}) implies that
\begin{align}
& \rE_{[\bY]_{C_1[\bZ]}|\bZ} 
d_{1,\mix}(M;E|W_{E} \circ \Gamma_{[\bY]_{C_1[\bZ]},C_2[\bZ]}) \nonumber \\
= & d_1'([X] ;[X]_{_{C_1[\bZ]}} ,E|W_{E} \times P_{\mix,\cX}) \nonumber\\
\le & 2 d_1'([X],[X]_{_{C_1[\bZ]}} ; E|W_{E} \times P_{\mix,\cX}) 
=  2 d_1'([X] ; E|W_{E} \times P_{\mix,\cX}) ,
\Label{8-06-c}
\end{align}
where the final equation follows from the fact that
the random variable $[X]_{_{C_1[\bZ]}} $ is given as a function of $[X]$.
Since the ensemble of submodules $\{C_2[\bZ]\}$ satisfies Condition \ref{C2} as a subset of $\cX$,
the ensemble of the hash functions $X \mapsto [X]_{C_2[\bZ]}$ satisfies Condition \ref{C1}.
Hence, Lemmas \ref{l3} and \ref{L-15} guarantee
\begin{align}
\rE_{\bZ} d_1'([X] ; E|W_{E} \times P_{\mix,\cX}) 
\le & \mu_s 
(|\cX|/\sL)^{\frac{1}{2}} 
e^{\frac{s}{2} H_{1+s}(X|E|W_{E} \times P_{\mix,\cX}\|\sigma_E)} \nonumber \\
= &
\frac{\mu_s e^{\frac{s}{2} I_{s}(X;E|W_{E} \times P_{\mix,\cX}\|\sigma_E)} 
}{\sL^{\frac{1}{2}}} \nonumber \\
= &
\frac{\mu_s e^{\frac{s}{2} I_{s}^{{\rm G}}(X;E|W_{E} \times P_{\mix,\cX})} 
}{\sL^{\frac{1}{2}}} ,
\Label{8-06-b}
\end{align}
where $\sigma_E$ is 
$(\sum_x P_{\mix,\cX}(x) W_x^{1+s})^{\frac{1}{1+s}}/\Tr (\sum_x P_{\mix,\cX}(x) W_x^{1+s})^{\frac{1}{1+s}}$.
Combination of (\ref{8-06-c}) and (\ref{8-06-b}) yields (\ref{8-06-d}).
\end{proofof}

\begin{proofof}{Lemma \ref{L06b}}

Since the ensemble of submodules $\{C_1[\bZ]\}$ satisfies Condition \ref{C2} as a subset of $\cX$,
the proof given in \cite{expo-chan} is valid with $P_X=P_{\mix,\cX}$.
Thus, the ensemble expectation of the average error probability concerning decoding the input message is 
bounded by $4({\sM}{\sL})^{s}e^{-s I_{-s}(X;B|W_{B} \times P_{\mix,{\cal X}})}$ for $0 \le s \le 1$.
That is, 
\begin{align}
\rE_{\bZ} \rE_{\bY\in {\cal X}/C_1[\bZ]} 
\epsilon(W_{B}| \Phi_{[\bY]_{C_1[\bZ]},C_2[\bZ]})
\le 4({\sM}{\sL})^{s}e^{-s I_{-s}(X;B|W_{B} \times P_{\mix,{\cal X}})}.
\end{align}
\end{proofof}

\subsubsection{Additive case}
Next, we consider the case of additive channels.
Assume that the channel $W_{E}$ is called additive, i.e., 
the set ${\cal X}$ has a structure of module and 
there exist a state $\rho$ and a projective representation $U$ of ${\cal X}$ such that $W_{E|x}= U_x \rho U_x^\dagger$.
In this case, the relations
\begin{align*}
I_{\mix}(M;E|W_{E} \circ \Gamma_{C_1,C_2}) 
&=I_{\mix}(M;E|W_{E} \circ \Gamma_{[y]_{C_1},C_2}) \\
d_{1,\mix}(M;E|W_{E}\circ \Gamma_{C_1,C_2}) 
&=d_{1,\mix}(M;E|W_{E} \circ \Gamma_{[y]_{C_1},C_2}) 
\end{align*}
hold for $y \in {\cal X}$.
Since $W_{E} \circ \Gamma_{C_1,C_2}$ is an additive channel, 
(\ref{9-4-c1}) guarantees that
\begin{align}
I_{\max}(M;E|W_{E} \circ \Gamma_{C_1,C_2}) 
=
I_{\mix}(M;E|W_{E} \circ \Gamma_{C_1,C_2}) .
\Label{9-4-a8}
\end{align}
Further, any additive channel $W_{B}$ satisfies that
\begin{align}
\epsilon(W_{B}| \Phi_{[\bY]_{C_1[\bZ]},C_2[\bZ]})
=
\epsilon(W_{B}| \Phi_{C_1[\bZ],C_2[\bZ]}).
\end{align}

Hence, (\ref{4-27-2-c}), (\ref{2-25-1-c-2}), (\ref{8-06-d}), and (\ref{8-06-e}) are simplified to
\begin{align}
\rE_{\bZ} 
I_{\max}(M;E|W_{E} \circ \Gamma_{C_1[\bZ],C_2[\bZ]}) 
\le &
v^s \frac{e^{s I_s(X;E|W_{E} \times P_{\mix,\cX})}}{{\sL}^s s},
\Label{4-27-2-c-d} \\
\rE_{\bZ} 
e^{s I_{\max}(M;E|W_{E} \circ \Gamma_{C_1[\bZ],C_2[\bZ]}) }
\le &
v^s(1+ \frac{e^{{I}_s(X;E|W_{E} \times P_{\mix,\cX})}}{{\sL}^s} )
\Label{2-25-1-c-1-d} \\
\rE_{\bZ} d_{1,\mix}(M;E|W_{E} \circ \Gamma_{C_1[\bZ],C_2[\bZ]}) 
\le &
\frac{2 \mu_s}{{\sL}^{s/2}} e^{\frac{s}{2}I^{{\rm G}}_s(X;E|W_{E} \times P_{\mix,{\cal X}})} 
\Label{8-06-d2} \\
\rE_{\bZ}
\epsilon(W_{B}| \Phi_{C_1[\bZ],C_2[\bZ]}) 
\le & 4({\sM}{\sL})^{s}e^{-s I_{-s}(X;B|W_{B} \times P_{\mix,{\cal X}})} \Label{8-06-e2}
\end{align}
for $0 < s \le 1$.

Further, $I_s(X;E|W_{E} \times P_{\mix,\cX})$ 
and $I^{{\rm G}}_s(X;E|W_{E} \times P_{\mix,\cX})$ can be calculated as
\begin{align*}
s I_s(X;E|W_{E} \times P_{\mix,\cX})
=&
\log \Tr \rho^{1+s} \overline{\rho}^{-s}, \\
I(X;E|W_{E} \times P_{\mix,\cX})
=&
H(\overline{\rho})-H(\rho) \\
\frac{s}{1+s}
I^{{\rm G}}_s(X;E|W_{E} \times P_{\mix,\cX})
=&
\log \Tr (\sum_{x\in {\cal X}}\frac{1}{|{\cal X}|} U_x\rho^{1+s}U_x^\dagger)^{1/(1+s)} , 
\end{align*}
where $\overline{\rho}= \sum_{x \in \cX} \frac{1}{|{\cal X}|} U_x \rho U_x^\dagger$.
Especially, when $\rho$ is pure,
\begin{align}
s I_s(X;E|W_{E} \times P_{\mix,\cX})
& =
\log \Tr \overline{\rho}^{1-s} \Label{9-5-c2}\\
I(X;E|W_{E} \times P_{\mix,\cX})
&=
H(\overline{\rho}) \Label{9-5-c3} \\
\frac{s}{1+s}I^{{\rm G}}_s(X;E|W_{E} \times P_{\mix,{\cal A}})
& =
\log \Tr \overline{\rho}^{1/(1+s)}.\Label{9-5-c4}
\end{align}

Now, we consider the case when the code $C_1$ is fixed
and only $C_2$ is randomly chosen.
\begin{lem}\Label{thm10-21}
Assume that the ensemble $\{C_2[\bZ]\}$ satisfies Condition \ref{C2} as a subset of $C_1$.
When the channels $W_{B}$ and $W_{E}$ are additive,
the relations
\begin{align}
\rE_{\bZ} 
I_{\max}(M;E|W_{E} \circ \Gamma_{C_1,C_2[\bZ]}) 
\le &
\eta(
\frac{\mu_s}{{\sL}^{s/2}} e^{\frac{s}{2}I^{{\rm G}}_s(X;E|W_{E} \times P_{\mix,{\cal X}})} ,\log d_E),
\Label{4-27-2-c-d3} \\
\rE_{\bZ} d_{1,\mix}(M;E|W_{E} \circ \Gamma_{C_1,C_2[\bZ]}) 
\le &
\frac{\mu_s}{{\sL}^{s/2}} e^{\frac{s}{2}I^{{\rm G}}_s(X;E|W_{E} \times P_{\mix,{\cal X}})} 
\Label{8-06-d3} 
\end{align}
hold for $0 < s \le 1$.
\end{lem}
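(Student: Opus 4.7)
The plan is to establish the $d_{1,\mix}$ bound (\ref{8-06-d3}) first via privacy amplification, and then to derive the $I_{\max}$ bound (\ref{4-27-2-c-d3}) by using additivity to identify $I_{\max}$ with $I_{\mix}$, combined with the Fannes-type inequality (\ref{8-26-9-a}) and Jensen's inequality.

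For the $d_{1,\mix}$ bound, I would first rewrite the quantity in universal-hashing form. Since the inner randomness $X$ is uniform on $C_1$, one has $d_{1,\mix}(M;E|W_{E}\circ\Gamma_{C_1,C_2[\bZ]})= d_1'([X]_{C_2[\bZ]};E|W_{E}\times P_{\mix,C_1})$. The hypothesis that $\{C_2[\bZ]\}$ satisfies Condition \ref{C2} as a subset of $C_1$ is precisely the statement that the random coset map $X\mapsto [X]_{C_2[\bZ]}$ on $C_1$ obeys Condition \ref{C1}. Applying the privacy-amplification bound (Lemma \ref{L-15}) in the same manner as in the derivation of (\ref{8-06-b}), and choosing $\sigma_E$ optimally via Lemma \ref{l3}, yields $\rE_{\bZ}d_{1,\mix}(M;E|W_{E}\circ\Gamma_{C_1,C_2[\bZ]})\le \frac{\mu_s}{\sL^{s/2}}\,e^{\frac{s}{2}I^{{\rm G}}_s(X;E|W_{E}\times P_{\mix,C_1})}$. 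It remains to replace $P_{\mix,C_1}$ by $P_{\mix,\cX}$. For any coset $y+C_1$, the identity $W_{E|x}=U_x\rho U_x^\dagger$ gives $\sum_{x\in y+C_1}\frac{1}{|C_1|}W_{E|x}^{1+s}=U_y\bigl(\sum_{x\in C_1}\frac{1}{|C_1|}W_{E|x}^{1+s}\bigr)U_y^\dagger$, so the functional $P_X\mapsto \Tr(\sum_x P_X(x)W_{E|x}^{1+s})^{1/(1+s)}$ takes the same value at every $P_{\mix,y+C_1}$. Since $P_{\mix,\cX}=\sum_{[y]\in\cX/C_1}\frac{|C_1|}{|\cX|}P_{\mix,y+C_1}$ and Lemma \ref{l1} asserts this functional is concave in $P_X$ for $s\in[0,\infty)$, its value at $P_{\mix,\cX}$ dominates its common value on the cosets. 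Taking $\frac{1+s}{s}\log$ then gives $I^{{\rm G}}_s(X;E|W_{E}\times P_{\mix,C_1})\le I^{{\rm G}}_s(X;E|W_{E}\times P_{\mix,\cX})$ for $s\in(0,1]$, which upgrades the previous bound to (\ref{8-06-d3}).

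For the $I_{\max}$ bound, I would observe that $W_E\circ\Gamma_{C_1,C_2[\bZ]}$ inherits additivity from $W_E$, so (\ref{9-4-a8}) gives $I_{\max}=I_{\mix}$. The Fannes-type inequality (\ref{8-26-9-a}) then supplies $I_{\mix}(M;E|W_{E}\circ\Gamma_{C_1,C_2[\bZ]})\le \eta(d_{1,\mix}(M;E|W_{E}\circ\Gamma_{C_1,C_2[\bZ]}),\log d_E)$. Since $\eta(\cdot,\log d_E)$ is concave in its first argument, Jensen's inequality pulls the expectation inside, and monotonicity of $\eta$ in the relevant range allows substitution of the previously obtained bound on $\rE_{\bZ}d_{1,\mix}$, producing (\ref{4-27-2-c-d3}).

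The hard part will be the additivity-based comparison $I^{{\rm G}}_s(X;E|W_{E}\times P_{\mix,C_1})\le I^{{\rm G}}_s(X;E|W_{E}\times P_{\mix,\cX})$: not the inequality itself, which is a clean consequence of the coset-invariance produced by $U_y$ and the concavity in Lemma \ref{l1}, but the recognition that this is precisely the conversion required to express the privacy-amplification bound in the form stated in the lemma (involving $P_{\mix,\cX}$ rather than the natural $P_{\mix,C_1}$). A secondary, routine check is verifying that $\rE_{\bZ}d_{1,\mix}$ lies in the monotone-increasing regime of $\eta(\cdot,\log d_E)$, which holds whenever the bound does not exceed $d_E/e$ and is satisfied in the asymptotically relevant parameter range.
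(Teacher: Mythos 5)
Your proof is correct and follows essentially the same route as the paper: rewrite $d_{1,\mix}$ in universal-hashing form, apply Lemma~\ref{L-15} with the optimal $\sigma_E$ from Lemma~\ref{l3}, use additivity plus the concavity of Lemma~\ref{l1} to upgrade $P_{\mix,C_1}$ to $P_{\mix,\cX}$ (the paper averages over all translates $C_1+x$, $x\in\cX$, rather than over coset representatives, but this is the same identity), then combine the Fannes-type bound with Jensen's inequality and $I_{\max}=I_{\mix}$ from~(\ref{9-4-a8}). Your explicit remark on needing monotonicity of $\eta(\cdot,\log d_E)$ when substituting the upper bound on $\rE_{\bZ}d_{1,\mix}$ is a legitimate technical point the paper glosses over; otherwise the two arguments are the same.
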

Therefore, even if we fixed our error correcting code to $C_1$, 
we can find a subcode $C_2 \subset C_1$ satisfying that
\begin{align*}
I_{\max}(M;E|W_{E} \circ \Gamma_{C_1,C_2}) 
\le &
2
\eta(
\frac{\mu_s}{{\sL}^{s/2}} e^{\frac{s}{2}I^{{\rm G}}_s(X;E|W_{E} \times P_{\mix,{\cal X}})} ,\log d_E),
\\
d_{1,\mix}(M;E|W_{E} \circ \Gamma_{C_1,C_2[\bZ]}) 
\le &
2\frac{\mu_s}{{\sL}^{s/2}} e^{\frac{s}{2}I^{{\rm G}}_s(X;E|W_{E} \times P_{\mix,{\cal X}})} 
\end{align*}
for $0 < s \le 1$.

\begin{proof}
In this case, we have
$d_{1,\mix}(M;E|W_{E} \circ \Gamma_{C_1,C_2[\bZ]}) =d_1'([X] ; E|W_{E} \times P_{\mix,C_1})$.
Hence, (\ref{8-06-b}) implies that
\begin{align}
\rE_{\bZ} 
d_{1,\mix}(M;E|W_{E} \circ \Gamma_{C_1,C_2[\bZ]}) 
\le & 
\frac{\mu_s e^{\frac{s}{2} I_{s}^{{\rm G}}([X];E|W_{E} \times P_{\mix,C_1})} 
}{\sL^{\frac{1}{2}}} .
\Label{8-08-b}
\end{align}
Since $W_{E}$ is additive,
$I_{s}^{{\rm G}}([X];E|W_{E} \times P_{\mix,C_1})=
I_{s}^{{\rm G}}([X];E|W_{E} \times P_{\mix,C_1+x})$.
Hence, Lemma \ref{l1} yields 
\begin{align}
e^{\frac{s}{1+s} I_{s}^{{\rm G}}([X];E|W_{E} \times P_{\mix,C_1})} 
= &
\sum_{x \in \cX}
\frac{1}{|\cX|}
e^{\frac{s}{1+s} I_{s}^{{\rm G}}([X];E|W_{E} \times P_{\mix,C_1+x})} \nonumber\\
\le &
e^{\frac{s}{1+s} I_{s}^{{\rm G}}([X];E|W_{E} \times P_{\mix,\cX})} , \nonumber
\end{align}
which implies that
\begin{align}
I_{s}^{{\rm G}}([X];E|W_{E} \times P_{\mix,C_1})
\le I_{s}^{{\rm G}}([X];E|W_{E} \times P_{\mix,\cX}). \Label{8-08-c}
\end{align}
The combination of 
(\ref{8-08-b}) and (\ref{8-08-c}) yields (\ref{8-06-d3}).
Further,
attaching (\ref{8-26-9-e}) to (\ref{8-06-d3}),
we obtain 
\begin{align}
\rE_{\bZ} 
I_{\mix}(M;E|W_{E} \circ \Gamma_{C_1,C_2[\bZ]}) 
\le 
\eta(
\frac{\mu_s}{{\sL}^{s/2}} e^{\frac{s}{2}I^{{\rm G}}_s(X;E|W_{E} \times 
P_{\mix,{\cal X}})} ,\log d_E),
\Label{9-4-a9}
\end{align}
for $0 < s \le 1$
due to the concavity of 
the map $x \mapsto \eta(x, \log d_E)$.
Finally, (\ref{9-4-a8}) and (\ref{9-4-a8}) imply (\ref{4-27-2-c-d3}). 
\end{proof}

Indeed, when the fixed submodule $C_1$ is isomorphic to a vector space of a finite field $\bF_q$,
we can construct an ensemble of submodules $\{C_2[\bZ]\}$ of $C_1$ satisfying Condition \ref{C2}
by the same method as that given in Section \ref{sV}.



\subsection{The case when uniform distribution is unavailable}\Label{s3b}
Now, we consider the case when the uniform distribution is not available for encoding the message $[x]\in C_1/C_2$.
In this case, we assume that 
a module ${\cal A}$ with the cardinality ${\sL}$ and
a distribution $P_{A}$ on the module ${\cal A}$ is available for this purpose.
We employ a submodule $C_1$ of ${\cal X}$ with the cardinality ${\sM}{\sL}$
and an injective homomorphism $f$ from ${\cal A}$ to $C_1$.
We fix a set of representatives $\{x_1,\ldots, x_{{\sM}}\}$ of all elements of $C_1/f({\cal A})$.
Then, we can define the affine map $f_{|x}(a):=f(a)+x$.

Based on the above structure,
we construct a code for a quantum wiretap channel
$\Phi_{C_1,f,\{x_1,\ldots, x_{{\sM}}\} } = $\par\noindent
$({\sM},
\{ f_{|x_m} \circ P_{A}  \}_{[x_m]\in C_1/f({\cal A})},
\{\cD_{[x]}\}_{[x]\in C_1/f({\cal A})})$
as follows.
The encoding distribution corresponding to the message $[x_m]\in C_1/f({\cal A})$
is given as the distribution $f_{|x_m} \circ P_{A} $ on the coset $[x_m]=x_m+f({\cal A})$, 
and 
the decoding $\cD_{[x]}$ is given as the subset $\sum_{x'\in x+f({\cal A}))} \cD_{x'}$.
As a generalization, for $x \in {\cal X}$,
we consider
a code 
$\Phi_{C_1,f,\{x+x_1,\ldots, x+x_{{\sM}}\} } = 
({\sM},
\{f_{|x+x_m}\circ P_{A} \}_{[x+x_m]\in C_1/f({\cal A})},
\{\cD_{[x']}\}_{[x']\in C_1+x/f({\cal A})})$ as follows.
The encoding distribution corresponding to the message $[x+x_m]\in C_1/f({\cal A})$
is given as the distribution $f_{|x+x_m} \circ P_{A}$ 
on the coset $[x+x_m]=x+x_m+f({\cal A})$, 
and 
the decoding $\cD_{[x']}$ is given as the subset $\sum_{x''\in x'+f({\cal A}) } \cD_{x''}$.
When the hash function $f_{\bZ}$ is randomly chosen as
a homomorphism from ${\cal A}$ to ${\cal X}$
according to a random variable $X$,
the code $C_1[\bZ]$ is also chosen satisfying that 
$f_{\bZ}({\cal A}) \subset C_1[\bZ]$ and $|C_1[\bZ]/f_{\bZ}({\cal A})|=\sM$.

Since the distribution $f_{x+x_M} \circ P_A$ on $\cX$ depends on the random variable $M$ on $\cM$,
it can be regarded as 
the transition matrix $m \mapsto f_{x+x_m} \circ P_A$.
In order to clarify this point, 
this transition matrix is denoted by $P_{X|M}[f_{x+x_M} \circ P_A]$.
Hence,
the transition matrix $(\tilde{x},m) \mapsto f_{\tilde{x}+x_m} \circ P_A$
is denoted by $P_{X|\tilde{X},M}[f_{\tilde{X}+x_M} \circ P_A]$.

\begin{lem}\Label{L09-1}
When the random variable $\tilde{X} \in \tilde{\cal X}={\cal X}$ is subject to the uniform distribution
and $\{f_{\bZ}\}$ satisfies Condition \ref{C2-b},
we obtain the following.
\begin{align}
\rE_{\bZ,\tilde{X}} 
I_{\mix} (M;E|W_{E} \circ P_{X|M}[f_{\bZ|\tilde{X}+x_M} \circ P_A] ) 
\le & \frac{v^s e^{-s H_{1+s}(P_{A})} e^{s I_s(X;E|W_E \times P_{\mix,\cX})}}{s}
\Label{8-09-h}
\\
\rE_{\bZ,\tilde{X}} 
e^{s I_{\mix} (M;E|W_{E} \circ P_{X|M}[f_{\bZ|\tilde{X}+x_M} \circ P_A] ) } 
\le & v^s(1+ e^{-s H_{1+s}(P_{A})}e^{{I}_s(X;E|W_E \times P_{\mix,\cX})})
\Label{8-09-i}\\
\rE_{\bZ,\tilde{X}} 
d_{1,\mix} (M;E|W_{E} \circ P_{X|M}[f_{\bZ|\tilde{X}+x_M} \circ P_A] ) 
\le & 
2 \mu_s e^{\frac{s}{2} (I_s^{{\rm G}}(X;E|W_E \times P_{\mix,\cX})-H_{1+s}(P_{A}))}.
\Label{8-09-l}
\end{align}
\end{lem}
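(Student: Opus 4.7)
The strategy is to exploit the translation invariance of the uniform distribution of $\tilde{X}$ to reduce each inequality to the corresponding resolvability bound of Lemma \ref{Lee3}. The key observation is that for any fixed $m$, the shifted variable $\tilde{X}+x_m$ is again uniformly distributed on $\cX$; hence, after taking $\rE_{\bZ,\tilde{X}}$, every $m$-dependent term reduces to the $m=0$ case, effectively replacing $f_{\bZ|\tilde{X}+x_m}$ by $f_{\bZ|\tilde{X}}$. This matches the object bounded in Lemma \ref{Lee3}.

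For inequalities (\ref{8-09-h}) and (\ref{8-09-i}), I would first invoke (\ref{8-04-a}) with reference state $W_E\circ P_{\mix,\cX}$ to obtain
\[
I_{\mix}(M;E|W_E\circ P_{X|M}[f_{\bZ|\tilde{X}+x_M}\circ P_A]) \le \frac{1}{\sM}\sum_{m=1}^{\sM} D(W_E\circ f_{\bZ|\tilde{X}+x_m}\circ P_A\,\|\,W_E\circ P_{\mix,\cX}).
\]
Taking $\rE_{\bZ,\tilde{X}}$ and applying the shift invariance, each summand equals $\rE_{\bZ,\tilde{X}} D(W_E\circ f_{\bZ|\tilde{X}}\circ P_A\,\|\,W_E\circ P_{\mix,\cX})$; applying (\ref{9-19-1-b}) gives (\ref{8-09-h}). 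For (\ref{8-09-i}), I would use the convexity of $x\mapsto e^{sx}$ ($s>0$) to interchange the exponential with the average over $m$, so that
\[
\rE_{\bZ,\tilde{X}}\, e^{sI_{\mix}(\cdots)} \le \rE_{\bZ,\tilde{X}}\frac{1}{\sM}\sum_m e^{sD(\cdots)} = \rE_{\bZ,\tilde{X}}\,e^{sD(W_E\circ f_{\bZ|\tilde{X}}\circ P_A\,\|\,W_E\circ P_{\mix,\cX})},
\]
and then apply (\ref{9-20-1-b}).

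For (\ref{8-09-l}), I would first apply the triangle inequality with $W_E\circ P_{\mix,\cX}$ as the pivot to obtain
\[
\|W_E\circ\Gamma_m - W_E\circ\Gamma\circ P_{\mix,\cM}\|_1 \le \|W_E\circ\Gamma_m-W_E\circ P_{\mix,\cX}\|_1+\frac{1}{\sM}\sum_{m'}\|W_E\circ\Gamma_{m'}-W_E\circ P_{\mix,\cX}\|_1,
\]
producing the factor $2$ after averaging over $m$. Translation invariance then reduces the estimate to $\rE_{\bZ,\tilde{X}}\|W_E\circ f_{\bZ|\tilde{X}}\circ P_A - W_E\circ P_{\mix,\cX}\|_1$, which I would bound by the privacy-amplification argument used in (\ref{8-25-1})--(\ref{8-26-10}): under Condition \ref{C2-b}, the ensemble $\{f_{\bZ}\}$ together with the uniform shift by $\tilde{X}$ supplies the universal hashing structure, and the optimization over the auxiliary state $\sigma$ via Lemma \ref{l3} yields the target factor $\mu_s\,e^{\frac{s}{2}(I_s^{\rm G}(X;E|W_E\times P_{\mix,\cX})-H_{1+s}(P_A))}$.

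\textbf{Main obstacle.} The first two bounds are essentially direct rewritings through translation invariance plus (\ref{8-04-a}) and Lemma \ref{Lee3}. The main difficulty lies in (\ref{8-09-l}): the relative-entropy bounds of Lemma \ref{Lee3} cannot be converted into the desired $L_1$ bound in the form involving $I_s^{\rm G}$ (the naive route via Pinsker produces $I_s$ instead, with the wrong exponent and extra polynomial factors). The required $L_1$ resolvability bound must therefore be obtained from a hashing-style argument paralleling (\ref{8-25-1})--(\ref{8-26-10}); the nontrivial step is verifying that $\{f_{\bZ|\tilde{X}}\}$ inherits from Condition \ref{C2-b} a universality strong enough to invoke the privacy-amplification machinery used for the uniform case, which should hold because conditionally on $\bZ$, any two distinct outputs $f_{\bZ}(a)+\tilde{X}$ and $f_{\bZ}(a')+\tilde{X}$ form a uniformly shifted pair in $\cX$.
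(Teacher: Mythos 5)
For (\ref{8-09-h}) and (\ref{8-09-i}) your route is exactly the paper's: (\ref{8-04-a}), convexity of $e^{sx}$ across the average over $m$, translation invariance of the uniform $\tilde{X}$ to collapse the $m$-average, then Lemma \ref{Lee3} (eqs.~(\ref{9-19-1-b}), (\ref{9-20-1-b})). For (\ref{8-09-l}), your triangle-inequality pivot at $W_E\circ P_{\mix,\cX}$ is precisely the proof of (\ref{8-09-o}) given in Appendix \ref{a8-12-1}, which the paper cites rather than re-derives, so the factor of $2$ comes out the same way; the paper then applies Lemma \ref{L-15} together with Lemma \ref{l3}, which is the privacy-amplification route you gesture at.

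Where your sketch needs fixing is the universality claim in the last paragraph: it has the randomness on the wrong side. The ensemble $\{a\mapsto f_{\bZ}(a)+\tilde{X}\}$ with randomness $(\bZ,\tilde{X})$ is trivially collision-free (injectivity of $f_{\bZ}$), but its output is the channel input $X$ that Eve directly observes, so hashing onto it extracts nothing. What the paper actually feeds to Lemma \ref{L-15}: since $\tilde{X}$ is uniform and independent of $A$, the pair $(X,A)$ is distributed as $P_{\mix,\cX}\times P_A$ with Eve's state $W_{E|X}$; the hash family is $\{(x,a)\mapsto x - f_{\bZ}(a)\}$ from $\cX\times\cA$ onto $\cX$, randomized over $\bZ$ alone (output equal to $\tilde{X}$); and Condition \ref{C2-b} yields Condition \ref{C1} for this family (for $a_1\neq a_2$ the collision event is $f_{\bZ}(a_1-a_2)=x_1-x_2$, with probability at most $1/(|\cX|-1)$; for $a_1=a_2$ there is no collision). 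Lemma \ref{L-15} then gives $\rE_{\bZ}d_1'(\tilde{X};E|\cdot)\le\mu_s|\cX|^{s/2}e^{-\frac{s}{2}H_{1+s}(A,X|E|W_E\times P_{\mix,\cX}\times P_A\|\sigma_E)}$, the conditional R\'{e}nyi entropy factors additively across the independent $A$ and $X$, and Lemma \ref{l3}'s choice of $\sigma_E$ turns $I_s(X;E|\cdot\|\sigma_E)$ into $I_s^{{\rm G}}(X;E|W_E\times P_{\mix,\cX})$, giving (\ref{8-09-l}). So the required universality is over $\bZ$ with $(X,A)$ as hash input and $\tilde{X}$ as output --- not a statement holding ``conditionally on $\bZ$'' with $\tilde{X}$ as hash randomness.
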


Here, we construct 
an ensemble $\{(C_1[\bZ],f_{\bZ},\{x_{1,\bZ},\ldots, x_{{\sM},\bZ}\})\}$ such that $\{f_{\bZ}\}$ satisfies Condition \ref{C2-b}
when ${\cal X}$ is given as a vector space $\bF_q^k$ of a finite field $\bF_q$.
In the following construction, we choose $C_1[\bZ]$ as an $l_1$-dimensional subspace
and ${\cal A}$ as the $l_2$-dimensional space $\bF_q^{l_2}$.
First, we fix $ (k-l_2) \times (l_1-l_2)$ matrix $D$ with the rank $l_2-l_1$.
Let $\bZ$ be the Toeplitz matrix of the size $(k-l_2) \times l_2$, which contains
$k -1$ random variables taking values in the finite field $\bF_q$, and
$\bZ'$ be the random matrix taking values in the set of invertible matrixes of the size $l_2 \times l_2$
with the uniform distribution.
We choose $f_{\bZ',\bZ}$ to be the multiplication of the random matrix $(\bZ',\bZ)^T$
with two independent random variables $\bZ'$ and $\bZ$,
and
$C_1[\bZ,\bZ']$ to be the image of 
the random matrix 
$\left(
\begin{array}{cc}
\bZ'& 0 \\
\bZ & D
\end{array}
\right)$.
The coset representatives are chosen to be the image of
the matrix 
$\left(
\begin{array}{c}
0 \\
D
\end{array}
\right)$.
Then, $\{f_{\bZ',\bZ}\}$ satisfies Condition \ref{C2-b}.

\begin{proofof}{Lemma \ref{L09-1}}
\begin{align}
& \rE_{\tilde{X}} 
I_{\mix} (M;E|W_{E} \circ P_{X|M}[f_{|\tilde{X}+x_M} \circ P_A] ) \nonumber \\
\le 
& \rE_{\tilde{X}} 
I(M;E|W_{E} \circ P_{X|M}[f_{|\tilde{X}+x_M} \circ P_A] \times P_{\mix,\cM}
\| W_{E} \circ {P_{\mix,{\cal X}}}) \nonumber \\
= &
  \rE_{\tilde{X}} 
\frac{1}{{\sM}}\sum_{m=1}^{\sM} D(W_{E} \circ f_{|\tilde{X}+x_m} \circ P_{A}  \|
W_{E} \circ {P_{\mix,{\cal X}}}) \nonumber \\
= &
\rE_{\tilde{X}} 
D(W_{E} \circ f_{|\tilde{X}} \circ P_{A}  \|W_{E} \circ {P_{\mix,{\cal X}}}) .
\Label{2-25-1-f-1} 
\end{align}
Similarly, we obtain
\begin{align}
& \rE_{\tilde{X}} 
e^{s I_{\mix} (M;E|W_{E} \circ P_{X|M}[f_{|\tilde{X}+x_M} \circ P_A] ) } \nonumber \\
\le &
 \rE_{\tilde{X}} 
e^{s  
I(M;E|W_{E} \circ P_{X|M}[f_{|\tilde{X}+x_M} \circ P_A] \times P_{\mix,\cM}
\| W_E \circ {P_{\mix,{\cal X}}}) } \nonumber \\
= &
 \rE_{\tilde{X}} 
e^{s \frac{1}{{\sM}}\sum_{m=1}^{\sM} {D}(W_E \circ f_{|\tilde{X}+x_m} \circ P_{A}  
 \|W_E \circ {P_{\mix,{\cal X}}})} \nonumber \\
\le &
 \rE_{\tilde{X}} 
\frac{1}{{\sM}}\sum_{m=1}^{\sM} e^{s {D}(W_E \circ f_{|\tilde{X}+x_m}\circ P_{A}   
\|W_E \circ {P_{\mix,{\cal X}}})} \Label{8-09-j} \\
= &
\frac{1}{{\sM}}\sum_{m=1}^{\sM} 
\rE_{\tilde{X}} e^{s {D}(W_E \circ f_{|\tilde{X}+x_m}\circ P_{A}   
\|W_E \circ {P_{\mix,{\cal X}}})} \nonumber \\
= &
 \rE_{\tilde{X}} 
e^{s {D}(W_E \circ f_{|\tilde{X}} \circ P_{A} 
 \|W_E \circ {P_{\mix,{\cal X}}})},
\Label{2-25-1-f-3}
\end{align}
where (\ref{8-09-j}) follows from the convexity of $x \mapsto e^x$.
These upper bounds do not depend on the choice of the set of representatives $\{x_1,\ldots, x_{{\sM}}\}$.

Then, (\ref{9-19-1-b}) and (\ref{9-20-1-b}) imply that
\begin{align}
& 
\rE_{\bZ} \rE_{\tilde{X}} 
D(W_E\circ f_{\bZ|\tilde{X}} \circ P_{A} 
\|W \circ {P_{\mix,\tilde{\cal X}}}) \nonumber \\
\le &
\frac{v^s e^{-s H_{1+s}(P_{A})} e^{s I_s(X;E|W_E \times P_{\mix,\tilde{\cX}})}}{s},
\Label{2-25-1-g-1} \\
& 
\rE_{\bZ} 
 \rE_{\tilde{X}} 
e^{s {D}(W_E \circ f_{\bZ|\tilde{X}} \circ P_{A} 
 \|W_E \circ {P_{\mix,\tilde{\cal X}}})} \nonumber\\
\le & 
v^s(1+ e^{-s H_{1+s}(P_{A})}e^{{I}_s(\tilde{X};E|W_E \times P_{\mix,\tilde{\cX}})}) .
\Label{2-25-1-g-3}
\end{align}
Hence, combination of (\ref{2-25-1-f-1}) and (\ref{2-25-1-g-1}) yields (\ref{8-09-h}),
and
combination of (\ref{2-25-1-f-3}) and (\ref{2-25-1-g-3}) yields (\ref{8-09-i}).

Relation (\ref{8-09-o}) yields that
\begin{align}
& \rE_{\tilde{X}} 
d_{1,\mix} (M;E|W_{E} \circ P_{X|M}[f_{\bZ|\tilde{X}+x_M} \circ P_A] )\nonumber \\
= & 
d_{1}' (M;E,\tilde{X}| W_{E} \circ P_{X|\tilde{X},M}[f_{\bZ|\tilde{X}+x_M} \circ P_A] \times P_{\mix,{\cal M}\times \tilde{\cal X}}) \nonumber\\
\le & 
2 d_{1}' (M,\tilde{X};E| W_{E} \circ P_{X|\tilde{X},M}[f_{\bZ|\tilde{X}+x_M} \circ P_A] \times P_{\mix,{\cal M}\times \tilde{\cal X}}) \nonumber\\
= & 
2 \sum_{\tilde{x} \in \tilde{\cX}, m \in {\cal M}}
\frac{1}{|\tilde{\cX}|\cdot |{\cal M}|}
\| W_{E} \circ f_{\bZ|\tilde{x}+x_m} \circ P_A - 
W_{E} \circ P_{\mix,\tilde{\cX}} \|_1 \nonumber\\
=&
2 \sum_{\tilde{x} \in \tilde{\cX}}
\frac{1}{|\tilde{\cX}|}
\| W_{E} \circ f_{\bZ|\tilde{x}} \circ P_A 
- W_{E} \circ P_{\mix,\tilde{\cX}} \|_1\nonumber \\
=&
2 d_{1}' (\tilde{X};E| W_{E} \circ P_{X|\tilde{X}}[f_{\bZ|\tilde{X}} \circ P_A] \times P_{\mix,\tilde{\cal X}}) .
\Label{8-09-n}
\end{align}
Since the ensemble of the hash functions $(x,a) \mapsto f_{\bZ}(a)+ x \in \cX$
satisfies Condition \ref{C1}, Lemma \ref{L-15} yields that
\begin{align}
& \rE_{\bZ} 
d_{1}' (\tilde{X};E| W_{E} \circ P_{X|\tilde{X}}[f_{\bZ|\tilde{X}} \circ P_A] \times P_{\mix,\tilde{\cal X}}) \nonumber \\
\le &
\mu_s |\cX|^{\frac{s}{2}}
e^{-\frac{s}{2}H_{1+s}(A,X|E| W_E \times P_{\mix,{\cal X}} \times P_A \|\sigma_E) }
\nonumber\\
= &
\mu_s |\cX|^{\frac{s}{2}}
e^{-\frac{s}{2}(H_{1+s}(P_A)+H_{1+s}(X|E| W_E \times P_{\mix,{\cal X}} \|\sigma_E) )}\nonumber \\
=&
\mu_s e^{\frac{s}{2} (I_s(X;E|W_E \times P_{\mix,\cX}\|\sigma_E)-H_{1+s}(P_{A}))}\nonumber\\
=&
\mu_s e^{\frac{s}{2} (I_s^{{\rm G}}(X;E|W_E \times P_{\mix,\cX})-H_{1+s}(P_{A}))}
\Label{8-09-m} ,
\end{align}
where $\sigma_E$ is 
$(\sum_x P_{\mix,\cX}(x) W_{E|x}^{1+s})^{\frac{1}{1+s}}/\Tr (\sum_x P_{\mix,\cX}(x) W_{E|x}^{1+s})^{\frac{1}{1+s}}$.
Hence, combination of (\ref{8-09-n}) and (\ref{8-09-m}) implies (\ref{8-09-l}).
\end{proofof}

When the channel $W_{E}$ is additive,
the relations
\begin{align}
I_{\mix} 
(M;E|W_{E} \circ P_{X|M} [f_{x_M} \circ P_A] ) 
&=I_{\mix} 
(M;E|W_{E} \circ P_{X|M} [f_{x_M+x} \circ P_A] ) 
\nonumber\\
d_{1,\mix}(M;E|W_{E} \circ P_{X|M} [f_{x_M} \circ P_A] ) 
&=d_{1,\mix} 
(M;E|W_{E} \circ P_{X|M} [f_{x_M+x} \circ P_A] ) 
\nonumber
\end{align}
hold for any $x \in \cX$.
Hence, Lemma \ref{L09-1} can be simplified to
\begin{align}
\rE_{\bZ} 
I_{\mix} (M;E|W_{E} \circ P_{X|M}[f_{\bZ|x_M} \circ P_A] ) 
\le & \frac{v^s e^{-s H_{1+s}(P_{A})} e^{s I_s(X;E|W_E \times P_{\mix,\cX})}}{s}
\Label{8-09-h1}
\\
\rE_{\bZ} 
e^{s I_{\mix} (M;E|W_{E} \circ P_{X|M}[f_{\bZ|x_M} \circ P_A] ) } 
\le & v^s(1+ e^{-s H_{1+s}(P_{A})}e^{s{I}_s(X;E|W_E \times P_{\mix,\cX})})
\Label{8-09-i1}\\
\rE_{\bZ} 
d_{1,\mix} (M;E|W_{E} \circ P_{X|M}[f_{\bZ|x_M} \circ P_A] ) 
\le & 
2 \mu_s e^{\frac{s}{2} (I_s^{{\rm G}}(X;E|W_E \times P_{\mix,\cX})-H_{1+s}(P_{A}))}.
\Label{8-09-l1}
\end{align}
Now, similar to Lemma \ref{thm10-21}, we consider the case when the code $C_1$ is fixed
and only $C_2$ is randomly chosen.
\begin{lem}\Label{thm10-212}
Assume that 
$\{f_{\bZ}\}$ is an ensemble of functions from ${\cal A}$ to $C_1$ and
satisfies Condition \ref{C2}.
When the channels $W_{B}$ and $W_{E}$ are additive,
the relations
\begin{align}
\rE_{\bZ} 
I_{\mix} (M;E|W_{E} \circ P_{X|M}[f_{\bZ|x_M} \circ P_A] ) 
\le &
\eta (2\mu_s e^{\frac{s}{2} (I_s^{{\rm G}}(X;E|W_E \times P_{\mix,\cX})-H_{1+s}(P_{A}))}, \log d_E)
\Label{4-27-2-c-d4} \\
\rE_{\bZ} 
d_{1,\mix} (M;E|W_{E} \circ P_{X|M}[f_{\bZ|x_M} \circ P_A] ) 
\le &
2\mu_s e^{\frac{s}{2} (I_s^{{\rm G}}(X;E|W_E \times P_{\mix,\cX})-H_{1+s}(P_{A}))}.
\Label{8-06-d4} 
\end{align}
hold for $0 < s \le 1$.
\end{lem}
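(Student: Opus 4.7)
The plan is to mirror the proof of Lemma \ref{thm10-21}, exploiting the additivity of $W_E$ to reduce the setting to the one already handled in Lemma \ref{L09-1}. The key observation is that while in Lemma \ref{L09-1} the auxiliary shift $\tilde X$ ranged over all of $\cX$, here the image of $f_{\bZ}$ is constrained to lie in the fixed submodule $C_1$; however, additivity makes averaging over a uniform translation by $\tilde X \in \cX$ free, and this average immediately embeds our situation into that of Lemma \ref{L09-1}.

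Concretely, I would first establish (\ref{8-06-d4}). By the remark preceding the lemma (that $d_{1,\mix}$ is unchanged when $f_{x_M}$ is replaced by $f_{x_M + x}$ for any $x \in \cX$ in the additive case), the quantity $d_{1,\mix}(M;E|W_E \circ P_{X|M}[f_{\bZ|x_M} \circ P_A])$ coincides with its average over a uniform $\tilde X$ on $\cX$. Then the chain of identities used to derive (\ref{8-09-n}) — starting from (\ref{8-09-o}) — bounds this by $2 d_1'(\tilde X; E | W_E \circ P_{X|\tilde X}[f_{\bZ|\tilde X} \circ P_A] \times P_{\mix,\cX})$. The hash family $(a,\tilde x) \mapsto f_{\bZ}(a) + \tilde x$ on $\cA \times \cX$ inherits Condition \ref{C1} from the assumed universality of $\{f_{\bZ}\}$, so Lemma \ref{L-15} combined with Lemma \ref{l3} (choosing $\sigma_E$ optimally as in the proof of Lemma \ref{L09-1}) yields exactly $2\mu_s e^{\frac{s}{2}(I_s^{{\rm G}}(X;E|W_E \times P_{\mix,\cX}) - H_{1+s}(P_A))}$, which is the RHS of (\ref{8-06-d4}).

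For (\ref{4-27-2-c-d4}), I would apply the Fannes-type inequality (\ref{8-26-9-a}) pointwise in $\bZ$ to obtain $I_{\mix}(M;E|\cdot) \le \eta(d_{1,\mix}(M;E|\cdot), \log d_E)$. Taking $\rE_{\bZ}$ and using that $x \mapsto \eta(x, \log d_E)$ is concave (the same trick used to pass from (\ref{8-06-d3}) to (\ref{9-4-a9}) in the proof of Lemma \ref{thm10-21}), the expectation commutes inside $\eta$, after which substituting the bound (\ref{8-06-d4}) already established gives (\ref{4-27-2-c-d4}) immediately.

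The main obstacle will be the bookkeeping around the ``Condition \ref{C2}'' hypothesis: since $f_{\bZ}$ is a function rather than a submodule, the intended reading must be the function-theoretic analogue (Condition \ref{C2-b}) suitably adapted with co-domain $C_1$, and one must verify that the extended family $(a,\tilde x) \mapsto f_{\bZ}(a) + \tilde x$ on $\cA \times \cX$ — where $\tilde x$ ranges across cosets of $C_1$ in $\cX$ — still meets the universality assumption of Lemma \ref{L-15}. This is the nontrivial step where the fact that $f_{\bZ}$ only hits $C_1$ is absorbed into the translation structure; once it is in place, nothing beyond a direct composition of Lemmas \ref{L-15}, \ref{l3}, and the additivity identity used in (\ref{8-08-c}) is required.
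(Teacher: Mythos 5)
Your derivation of (\ref{4-27-2-c-d4}) from (\ref{8-06-d4}) --- the Fannes-type bound (\ref{8-26-9-a}) followed by concavity of $x\mapsto\eta(x,\log d_E)$ under $\rE_{\bZ}$ --- is exactly the paper's argument, so that half is fine.

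The gap is in (\ref{8-06-d4}). You propose to apply Lemma~\ref{L-15} to the hash family $(a,\tilde x)\mapsto f_{\bZ}(a)+\tilde x$ from $\cA\times\cX$ to $\cX$, and you yourself flag its universality as ``the nontrivial step.'' In fact it fails. For distinct $(a_1,\tilde x_1),(a_2,\tilde x_2)$ with $a_1\neq a_2$ and $0\neq \tilde x_2-\tilde x_1\in C_1$, the collision event is $\{f_{\bZ}(a_1-a_2)=\tilde x_2-\tilde x_1\}$, and since $f_{\bZ}$ maps into $C_1$ the hypothesis only controls this probability at scale $1/|C_1|$, not $1/|\cX|$. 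With $|C_1|<|\cX|$, Condition~\ref{C1} with $M=|\cX|$ is violated on exactly these pairs; the fact that pairs with $\tilde x_2-\tilde x_1\notin C_1$ have zero collision probability does not rescue it, because Condition~\ref{C1} is a per-pair bound, not an average one. The ``absorbed into the translation structure'' heuristic therefore cannot be made rigorous as stated.

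The paper takes a different and shorter route that avoids this problem. It applies the additive, nonuniform bound (\ref{8-09-l1}) (the simplification of Lemma~\ref{L09-1}) \emph{with the ambient space taken to be $C_1$ itself}, so the hash $(a,\tilde x)\mapsto f_{\bZ}(a)+\tilde x$ lives on $\cA\times C_1\to C_1$ and is $2$-universal with $M=|C_1|$. This yields the bound with $I^{\rm G}_s(X;E|W_E\times P_{\mix,C_1})$ in the exponent, and the paper then invokes the monotonicity (\ref{8-08-c}), namely $I^{\rm G}_s([X];E|W_E\times P_{\mix,C_1})\le I^{\rm G}_s([X];E|W_E\times P_{\mix,\cX})$, which was already established in the proof of Lemma~\ref{thm10-21} using additivity of $W_E$ and the concavity in Lemma~\ref{l1}. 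Your sketch never mentions (\ref{8-08-c}), and that inequality is precisely the ingredient needed to pass from $P_{\mix,C_1}$ to $P_{\mix,\cX}$ once one has (correctly) confined the hashing argument to $C_1$.
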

\begin{proof}
Relation (\ref{8-06-d4}) can be shown as follows.
Relation (\ref{8-06-d2}) with $\cX=C_1$ implies that
\begin{align*}
& 
\rE_{\bZ} 
d_{1,\mix} (M;E|W_{E} \circ P_{X|M}[f_{\bZ|x_M} \circ P_A] ) \\
\le & 
2 \mu_s e^{\frac{s}{2} (I_s^{{\rm G}}(X;E|W_E \times P_{\mix,C_1})-H_{1+s}(P_{A}))}.
\end{align*}
Hence, using (\ref{8-08-c}), we obtain (\ref{8-06-d4}).
Relation (\ref{4-27-2-c-d4}) 
can be obtained in the same way as (\ref{4-27-2-c-d3}) in Lemma \ref{thm10-21}.
\end{proof}

Therefore, even though the uniform distribution is not available, 
if a distribution close to the uniform distribution is available,
there exists a code with a performance similar to Lemma \ref{thm10-21}.

\section{Application to Pauli channel}\Label{s10}
As a simple example, we treat 
a Pauli channel in the $d$-dimensional system ${\cal H}$.
First, we define 
the discrete Weyl-Heisenberg representation $W$ for $\bbZ_d^2$:
\begin{align}
\sX &:= \sum_{j=1}^d |j+1 \rangle \langle j| , \quad
\sZ := \sum_{j=1}^d \omega^j |j \rangle \langle j| \\
\sW(x,z) &:= \sX^x \sZ^z,
\end{align}
where $\omega$ is the root of the unity with the order $d$.
Using this representation and a probability distribution $P_{X,Z}$ on $\bbZ_d^2$, 
we can define the Pauli channel:
\begin{align}
\Lambda_{P_{X,Z}}(\rho):= \sum_{(x,z)\in \bbZ_d^2}
P_{X,Z}(x,z) \sW(x,z) \rho \sW(x,z)^\dagger .
\end{align}
In the following, we assume that the eavesdropper can access all of the environment of the channel $\Lambda_{P_{X,Z}}$.
When the state $|j\rangle$ is input to the channel $\Lambda_{P_{X,Z}}$, 
the environment system is spanned by the basis $\{|x,z \rangle \}$ and 
the state $W_{E|j}$ of the environment 
is given as
\begin{align}
W_{E|j} &=\sum_{z}
P_Z(z) |j,z:P_{X,Z}\rangle \langle j,z:P_{X,Z}|\\
|j,z:P_{X,Z}\rangle &:= \sum_{x=1}^d \omega^{j x}
\sqrt{P_{X|Z}(x|z)} |x,z \rangle.
\end{align}
Then, the average state is
\begin{align}
\overline{\rho}_{E}=
\sum_{x,z} P_{X,Z}(x,z) |x,z \rangle\langle x,z |
=
\sum_{z} P_{Z}(z) \overline{\rho}_{E|z},
\end{align}
where
\begin{align}
\overline{\rho}_{E|z}:=
\sum_{x} P_{X|Z}(x|z) |x,z \rangle\langle x,z |.
\end{align}
Then, $I_s(X;E| W_{E} \times P_{\mix,\cX})$ and $I(X;E|W_{E} \times P_{\mix,\cX})$ 
are calculated by using the conditional entropy $H (X|Z|P_{X,Z})$ and the conditional R\'{e}nyi entropy $H_{1-s} (X|Z|P_{X,Z})$ as
\begin{align*}
& I(X;E|W_{E} \times P_{\mix,\cX}) =
\frac{1}{d}\sum_{j} D(W_{E|j}\| \overline{\rho}_{E})
=
\frac{1}{d}\sum_{j} 
\sum_{z} P_Z(z)
D( |j,z:P_{X,Z}\rangle \langle j,z:P_{X,Z}| \| \overline{\rho}_{E|z}) \\
=&
\frac{1}{d}\sum_{j} 
\sum_{z} P_Z(z) H(X|P_{X|Z=z})
=
H (X|Z|P_{X,Z}), \\
& e^{s I_s(X;E| W_{E} \times P_{\mix,\cX})} = 
\frac{1}{d}\sum_{j} \Tr W_{E|j}^{1+s} \overline{\rho}_{E}^{-s} 
= \frac{1}{d}\sum_{j} 
\sum_{z} P_Z(z)
\Tr |j,z:P_{X,Z}\rangle \langle j,z:P_{X,Z}|^{1+s} \overline{\rho}_{E|z}^{-s} \\
=& \frac{1}{d}\sum_{j} 
\sum_{z} P_Z(z)
\langle j,z:P_{X,Z}| \overline{\rho}_{E|z}^{-s} |j,z:P_{X,Z}\rangle 
= \frac{1}{d}\sum_{j} 
\sum_{z} P_Z(z)
\sum_{x}P_{X|Z}(x|z)^{1-s} 
=e^{s H_{1-s}(X|Z|P_{X,Z})}, 
\end{align*}
which implies $I_s(X;E| W_{E} \times P_{\mix,\cX})= H_{1-s}(X|Z|P_{X,Z})$.
Similarly, 
we obtain the simplification:
\begin{align*}
& e^{sI^{{\rm G}}_s(X;E| W_{E} \times P_{\mix,{\cal X}})} \\
= &
(\Tr
(
\frac{1}{d}
\sum_j \sum_z P_Z(z)^{1+s}|j,z:P_{X,Z}\rangle \langle j,z:P_{X,Z}| 
)^{\frac{1}{1+s}}
)^{1+s} \\
= &
(\Tr
(
\sum_{z,x}P_{Z}(z)^{1+s} P_{X|Z}(x|z) |x,z\rangle \langle x,z|
)^{\frac{1}{1+s}}
)^{1+s} \\
=&
(\sum_{x,z}P_Z(z)P_{X|Z}(x|z)^{\frac{1}{1+s}} )^{1+s} \\
=&
e^{s H_{\frac{1}{1+s}}(X|Z|P_{X,Z})},
\end{align*}
which implies
\begin{align*}
I^{{\rm G}}_s(X;E| W_{E} \times P_{\mix,{\cal X}})
=
H_{\frac{1}{1+s}}(X|Z|P_{X,Z}).
\end{align*}

When we employ the same ensemble of codes as in Subsection \ref{s3a},
Inequalities (\ref{4-27-2-c-d}), (\ref{2-25-1-c-1-d}), and 
(\ref{8-06-d2}) imply that
\begin{align}
\rE_{\bZ} I_{\max}(M;E|W_{E} \circ \Gamma_{C_1[\bZ],C_2[\bZ]}) 
\le &
v^s \frac{e^{s H_{1-s}(X|Z|P_{X,Z})}}{{\sL}^s s},
\Label{4-27-2-e} \\
\rE_{\bZ} 
e^{s I_{\max}(M;E|W_{E}\circ \Gamma_{C_1[\bZ],C_2[\bZ]}) } 
\le &
v^s (1+ \frac{e^{s H_{1-s}(X|Z|P_{X,Z})}}{{\sL}^s} )
\Label{2-25-1-e-2} \\
\rE_{\bZ} d_{1,\mix}(M;E|W_{E}\circ \Gamma_{C_1[\bZ],C_2[\bZ]}) 
\le &
2 \mu_s \frac{ e^{\frac{s}{2} H_{\frac{1}{1+s}}(X|Z|P_{X,Z})}}{{\sL}^{s/2}}
\Label{4-27-2-e-3} 
\end{align}
for $0 < s \le 1$, where $v$ 
and $v_s$
are the number of 
eigenvalues of $\overline{\rho}_{E}$ and $ \sum_{j} W_{E|j}^{1+s}$.

So, the asymptotic required sacrifice rate is $H (X|Z|P_{X,Z})$.
When the the sacrifice rate is $R$,
the exponential decreasing rate of the upper bounds
defined in Subsection \ref{s2b} are calculated as
\begin{align}
e_{{\rm R}}(R|W_{E},P_{\mix,\cX})
&=
\max_{0< s \le 1}
sR -s H_{1-s}(X|Z|P_{X,Z})  \Label{10-21-103}\\
e_{{\rm G}}(R|W_{E},P_{\mix,\cX})
&=
\max_{0< s \le 1}
\frac{s}{2}R - \frac{s}{2} H_{1/(1+s)}(X|Z|P_{X,Z}) .
\Label{10-21-104}
\end{align}

When the random variables $X$ and $Z$ are independent under the distribution $P_{X,Z}$, i.e.,
$P_{X,Z}(x,z)=P_X(x)P_Z(z)$,
the state of the environment is written in the form
\begin{align}
W_{E|j} &=
|j:P_X\rangle \langle j:P_X| \otimes 
\sum_{z} P_Z(z) |z\rangle\langle z|
\\
|j:P_X\rangle &:= \sum_{x=1}^d \omega^{j x}
\sqrt{P_X(x)} |x \rangle.
\end{align}
Since $\sum_{x} P_X(x) |x \rangle\langle x|$
does not depend on $j$,
the state $\rho_{E|j}$ can be essentially regarded as $|j:P_X\rangle \langle j:P_X|$.
Then, the average state is
\begin{align}
\overline{\rho}_{E}=
\sum_{x} P_X(x) |x \rangle\langle x |.
\end{align}
In this case, 
\begin{align}
I_s(X;E| W_{E}\times P_{\mix,\cX}) &= H_{1-s}(P_X) \\
I(X;E|W_{E} \times P_{\mix,\cX}) & =H (P_X).
\end{align}

Then, in the $n$-fold memoryless extension of the channel 
$\Lambda_{P_{X,Z}}$, 
when the asymptotic sacrifice rate $R$ is greater than
$H (P_X)$, our communication becomes secure.
As is mentioned in Subsection \ref{s2b},
we obtain the lower bound of the exponential decreasing rate of the eavesdropper's Holevo information:
\begin{align}
e_{{\rm R}}(R|W_{E},P_{\mix,\cX})=\max_{0< s \le 1} sR -s H_{1-s}(P_X) ,
\Label{8-19-1}
\end{align}
and
\begin{align}
e_{{\rm G}}(R|W_{E},P_{\mix,\cX})=
\max_{0 \le s \le 1} \frac{s}{2}R  - \frac{s}{2} H_{1/(1+s)}(P_X).
\end{align}
As is shown in Lemma \ref{l-13}, $e_{{\rm R}}(R|W_{E},P_{\mix,\cX})$ is better than $e_{{\rm G}}(R|W_{E},P_{\mix,\cX})$.
A similar analysis has been done by 
the stabilizer formalism by Tsurumaru et al\cite{Tsuru}.
Their approach evaluates the virtual phase error probability, which is less than
\begin{align}
\epsilon(n,R):=
\min_{0 \le s \le 1} e^{n (-sR  +s H_{1/(1+s)}(P_X))}.\Label{8-19-1-b}
\end{align}
Using the relation between the eavesdropper's Holevo information and the virtual phase error probability\cite{H-qkd},
we obtain
\begin{align}
\rE_{\bZ} I_{\mix}(M;E|W_{E} \circ \Gamma_{C_1[\bZ],C_2[\bZ]}) 
\le
h(\epsilon(n,R))+ n \epsilon(n,R) \log d,
\end{align}
where $h(x):= -x \log x-(1-x) \log (1-x)$.
Using their approach, 
we obtain
the lower bound of
the exponential decreasing rate of the eavesdropper's Holevo information:
\begin{align}
\max_{0 \le s \le 1} sR  -s H_{1/(1+s)}(P_X)
=2 e_{{\rm G}}(R|W_{E},P_{\mix,\cX}).
\Label{8-19-2}
\end{align}
As is mentioned in Subsection \ref{s22-2}, the function $s \mapsto H_{1+s}(P_X)$ is monotone decreasing, i.e.,
$s H_{1-s}(P_X) \ge s H_{1/(1+s)}(P_X)$.
Hence, the exponent $2 e_{{\rm G}}(R|W_{E},P_{\mix,\cX})$ by Tsurumaru et al\cite{Tsuru}
is greater than our exponents 
$e_{{\rm R}}(R|W_{E},P_{\mix,\cX})$ and $e_{{\rm G}}(R|W_{E},P_{\mix,\cX})$.
Its numerical verification is given in Fig \ref{f1}.

\begin{figure}[htbp]
\begin{center}
\scalebox{0.7}{\includegraphics[scale=1]{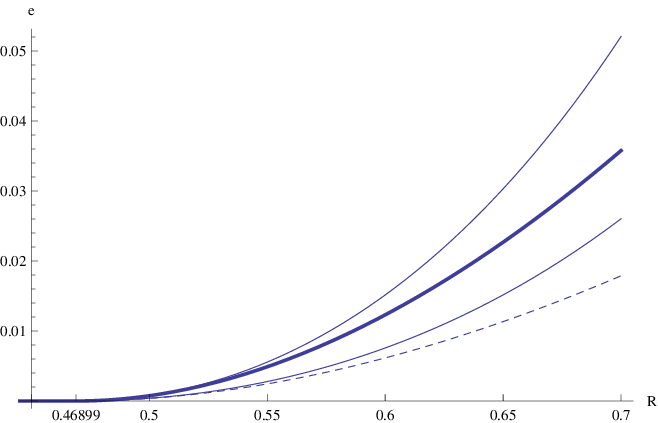}}
\end{center}
\caption{
Lower bounds of exponent.
Thick line: $e_{{\rm R}}(R|W_{E},P_{\mix,\cX})$,
Upper Normal line: $2 e_{{\rm G}}(R|W_{E},P_{\mix,\cX})$\cite{Tsuru},
Lower Normal line: $e_{{\rm G}}(R|W_{E},P_{\mix,\cX})$,
Dashed line: $\frac{1}{2}e_{{\rm R}}(R|W_{E},P_{\mix,\cX})$
with $p=0.1$, 
$H(P_X)$=0.46899.
}
\Label{f1}
\end{figure}%

Next, we focus on $L_1$ distinguishability.
As is mentioned in Subsection \ref{s2b}, we have two lower bounds of exponents 
$\frac{1}{2}e_{{\rm R}}(R|W_{E},P_{\mix,\cX})$ and $e_{{\rm G}}(R|W_{E},P_{\mix,\cX})$ under this criterion,
and the inequality
$\frac{1}{2}e_{{\rm R}}(R|W_{E},P_{\mix,\cX}) \le e_{{\rm G}}(R|W_{E},P_{\mix,\cX})$
holds.
Using the relation between the universal composability and the virtual phase error probability\cite{H-qkd2},
we obtain
\begin{align}
\rE_{\bZ} d_{1,\mix}(M;E|W_{E} \circ \Gamma_{C_1[\bZ],C_2[\bZ]}) 
\le
2\sqrt{2} \sqrt{\epsilon(n,R)}.
\end{align}
Using their approach, 
we obtain
the lower bound of
the exponential decreasing rate of $L_1$ distinguishability:
\begin{align}
\frac{1}{2}\max_{0 \le s \le 1} sR  - s H_{1/(1+s)}(P_X)
=e_{{\rm G}}(R|W_{E},P_{\mix,\cX}).
\Label{8-19-2-a}
\end{align}
That is, the exponent by Tsurumaru et al\cite{Tsuru}
is the same as our better exponent $e_{{\rm G}}(R|W_{E},P_{\mix,\cX})$.
Hence, we can conclude that our method is not better than 
the stabilizer formalism 
when the phase error occurs independently of the bit error in the Pauli channel.

However, when the phase error depends on the bit error in the Pauli channel,
the stabilizer formalism cannot provide the error exponent so clearly.
In this case, in the evaluation of phase error probability, 
we have to take account into error probability in the error correction concerning the bit error.
So, it is not easy to derive a simple bound for the virtual phase error probability
as (\ref{8-19-1-b}).
Our bounds (\ref{4-27-2-e}) and (\ref{4-27-2-e-3})
derive lower bounds (\ref{10-21-103}) and (\ref{10-21-104}) of the exponential decreasing rates 
of the eavesdropper's information and $L_1$ distinguishability.
Therefore, our method has an advantage over the stabilizer formalism when the phase error depends on the bit error. 
Further, by using (\ref{2-25-1-e-2}), our method provides the equivocation rate
while the stabilizer formalism cannot derive the equivocation rate even in the independent case.

\section{Conclusion}
We have given a protocol for quantum wiretap channel with 
an auxiliary random variable subject to a non-uniform distribution.
Then, 
when the distribution of the auxiliary random variable 
is not the uniform distribution but is close to the uniform distribution,
we have derived an upper bound for exponential decreasing rate 
of leaked information in the quantum mutual information criterion 
and $L_1$ distinguishability.
Further, we have derived the equivocation rate for 
a given quantum wiretap channel.
For a practical construction, 
we have proposed a code for quantum wiretap channel that requires only a pair of a linear code and an auxiliary random variable subject to a non-uniform distribution.
We have also derived an upper bound for the leaked information for this protocol.
The organization of the case with a linear code is different from the case with a non-linear code with respect to the construction of the code ensemble and the evaluation of average performance.
We can apply the same discussion from the evaluation of average performance
to the existence of codes and the asymptotic analysis.
Hence, we have omitted the latter part in Section \ref{s3}.
The average performances of the code ensemble based on linear codes
have been evaluated only with the single-shot form.
These results are summarized in Table \ref{table1}.
Finally, we have treated Pauli channel as a typical example.

Further, 
in the evaluations (\ref{8-26-10-a02}) and (\ref{8-24-12}),
the factor $ \lambda_s$ increases only linearly under a general setting.
Hence, 
a part of our results derived from them
are expected to be applied to the non-stationary and non-memoryless case,
e.g., the Markovian case.
Unfortunately, we could not prove Lemma \ref{l2} for the non-commutative case.
The extension of this lemma to the non-commutative case
is a future study.

\begin{table}[htb]
  \caption{Summary of obtained bounds}
\begin{center}
  \begin{tabular}{|l|l|c|c|c|c|c|c|c|} 
\hline
\multicolumn{2}{|l|}{Error correcting code} 
& general random code 
& \multicolumn{4}{|c|}{randomized linear code} 
& \multicolumn{2}{|c|}{fixed linear code} 
\\ \hline
\multicolumn{2}{|l|}{Auxiliary random number} 
& non-uniform
& \multicolumn{2}{|c|}{uniform} 
& \multicolumn{2}{|c|}{non-uniform} 
& non-uniform
& uniform
\\ \hline
\multicolumn{2}{|l|}{Channel}
& general
& general
& additive 
& general
& additive 
& additive 
& additive 
\\ \hline
\multirow{3}{*}{Bound of average}
& $ I_{\mix}(W_E \circ \Gamma) $
&(\ref{2-25-1}) 
&(\ref{4-27-2-c}) 
&(\ref{4-27-2-c-d}) 
&(\ref{8-09-h}) 
&(\ref{8-09-h1}) 
&(\ref{4-27-2-c-d3}) 
&(\ref{4-27-2-c-d4}) 
\\ \cline{2-9}
& $ d_{1,\mix}(W_E \circ \Gamma) $
&(\ref{8-26-10-a}), (\ref{8-26-10-a02}) 
&(\ref{8-06-d}) 
&(\ref{8-06-d2}) 
&(\ref{8-09-l}) 
&(\ref{8-09-l1}) 
&(\ref{8-06-d3}) 
&(\ref{8-06-d4})
\\ \cline{2-9}
& $ e^{s I_{\mix}(W_E \circ \Gamma)} $
&(\ref{2-25-1-d})
&(\ref{2-25-1-c-2}) 
&(\ref{8-06-d2}) 
&(\ref{8-09-i}) 
&(\ref{8-09-i1}) 
& --
& --
\\ \hline
\multirow{3}{*}{Key inequality}
& $ I_{\mix}(W_E \circ \Gamma) $ 
&(\ref{9-19-1})
&\multicolumn{2}{|c|}{(\ref{9-19-1-a})}
&\multicolumn{2}{|c|}{(\ref{9-19-1-b})}
&(\ref{9-19-1-a})
&(\ref{9-19-1-b})
\\ \cline{2-9}
& $ d_{1,\mix}(W_E \circ \Gamma) $
& \multicolumn{7}{|c|}{(\ref{8-26-13-a}), (\ref{8-26-13-f2})}
\\ \cline{2-9}
& $ e^{s I_{\mix}(W_E \circ \Gamma)} $
&(\ref{9-20-1})
&\multicolumn{2}{|c|}{(\ref{9-20-1-a})}
&\multicolumn{2}{|c|}{(\ref{9-20-1-b})}
& --
& --
\\ \hline
\end{tabular}
\end{center}

\vspace{2ex}

Key inequality expresses the key inequality for deriving the evaluation.
-- means that there is no specialized discussion.
\Label{table1}
\end{table}

\section*{Acknowledgments}
The author is grateful to the reviewer for giving helpful comments for the first version.
He is partially supported by a MEXT Grant-in-Aid for Young Scientists (A) No. 20686026 and Grant-in-Aid for Scientific Research (A) No. 23246071.
He is partially supported by the National Institute of Information and Communication Technolgy (NICT), Japan.
The Centre for Quantum Technologies is funded by the
Singapore Ministry of Education and the National Research Foundation
as part of the Research Centres of Excellence programme. 

\bibliographystyle{IEEE}

\begin{thebibliography}{99}
\bibitem{Hayashi-book}
M. Hayashi,
Quantum Information: An Introduction, Springer (2006).

\bibitem{CK79} I. Csisz\'{a}r and J. K\"{o}rner,
``Broadcast channels with confidential messages,'' 
{\em IEEE Trans. Inform. Theory}, 
vol. {\bf 24}, No. 3, 339--348, 1978.

\bibitem{Carter}
L. Carter and M. Wegman,
``Universal classes of hash functions,''
{\em J. Comput. Sys. Sci.,}
vol. {\bf 18}, No. 2, 143--154, 1979.

\bibitem{Krawczyk}
H. Krawczyk. LFSR-based hashing and authentication. Advances in Cryptology --- CRYPTO '94. Lecture Notes in Computer Science, vol. 839, Springer-Verlag, pp 129--139, 1994.


\bibitem{Hayashi}
M. Hayashi, ``General non-asymptotic and asymptotic formulas in channel resolvability and identification capacity and its application to wire-tap channel,'' 
{\em IEEE Trans. Inform. Theory}, 
vol. {\bf 52}, No. 4, 1562--1575, 2006. 

\bibitem{Wyner}
A. D. Wyner, ``The wire-tap channel,'' {\em Bell. Sys. Tech. Jour.}, vol. 54, 1355--1387, 1975.

\bibitem{Deve}
I. Devetak,
``The private classical information capacity and quantum information 
capacity of a quantum channel,''
{\em IEEE Trans. Inform. Theory}, vol. {\bf 51}, No. 1, 44--55, 2005. 

\bibitem{WNI}
A. Winter, A. C. A. Nascimento, and
H. Imai,
``Commitment Capacity of Discrete Memoryless Channels,''
{\em Proc. 9th Cirencester Crypto and Coding Conf.}, LNCS 2989, pp 35-51, Springer, Berlin 2003;
cs.CR/0304014 (2003)

\bibitem{Gal}
R.\ G.\ Gallager, 
{\em Information Theory and Reliable Communication}, 
John Wiley \& Sons, 1968.

\bibitem{Ren05} 
R. Renner, ``Security of Quantum Key Distribution,'' 
PhD thesis, Dipl. Phys. ETH, Switzerland, 2005. 
arXiv:quantph/0512258.

\bibitem{Csiszar} 
I. Csisz\'{a}r, 
``Almost Independence and Secrecy Capacity,''
{\em Problems of Information Transmission}, vol. {\bf 32}, no.1, pp.40-47, 1996.

\bibitem{Tsuru}
T. Tsurumaru, M. Hayashi, ``Dual universality of hash functions and its applications to quantum cryptography,''
{\em IEEE Trans. Inform. Theory}, 
vol. {\bf 59}, No. 7, 4700-4717, (2013).

\bibitem{expo-chan}
M. Hayashi, ``Error exponent in asymmetric quantum hypothesis testing and its application to classical-quantum channel coding,'' 
{\em Physical Review A}, vol. {\bf 76}, 062301 (2007). 

\bibitem{DevP}
I. Devetak, P. W. Shor, 
``The capacity of a quantum channel for simultaneous transmission of classical and quantum information,''
{\em Comm. Math. Phys.}, {\bf 256}, 287 (2005).

\bibitem{H2001}
M. Hayashi,
``Optimal sequence of POVMs in the sense of Stein's lemma in quantum
hypothesis,''
{\em J. Phys. A: Math. and Gen.}, {\bf 35}, 10759-10773 (2002).

\bibitem{H-qkd}
M. Hayashi, ``Practical Evaluation of Security for Quantum Key Distribution," 
{\em Physical Review A}, vol. {\bf 74}, 022307 (2006).


\bibitem{H-qkd2}
M. Hayashi, ``Upper bounds of eavesdropper's performances in finite-length code with the decoy method,'' 
{\em Physical Review A}, vol. {\bf 76}, 012329 (2007); 
{\em Physical Review A}, vol. {\bf 79}, 019901(E) (2009).


\bibitem{H-cq}
M. Hayashi, 
``Large deviation analysis for quantum security 
via smoothing of R\'{e}nyi entropy of order 2,''
arXiv:1202.0322 (2012). 

\bibitem{H-tight}
M. Hayashi, 
``Tight exponential analysis of universally composable privacy amplification and its applications,''
arXiv:1010.1358 (2010). 
Accepted in {\em IEEE Trans. Inform. Theory}.

\bibitem{H-leaked}
M. Hayashi, 
``Exponential decreasing rate of leaked information in universal random privacy amplification,''
{\em IEEE Trans. Inform. Theory}, 
vol. {\bf 57}, No. 6, 3989-4001, (2011).

\bibitem{han-verdu}
T. S. Han and S. Verd\'{u}, 
``Approximation theory of output statistics,''
{\em IEEE Trans. Inform. Theory}, 
vol. {\bf 39}, no. 3, pp. 752-772, (1993).

\bibitem{SMC-non}
M. Hayashi and R. Matsumoto,
``Secure Multiplex Coding with Dependent and
Non-Uniform Multiple Messages''
arXiv:1202.1332 (2012). 

\bibitem{H-precise}
M. Hayashi, 
``Precise evaluation of leaked information with
universal$_2$ privacy amplification in the presence of
quantum attacker,''
{\em Proceedings of the IEEE International Symposium on Information Theory (ISIT 2012)}, Cambridge, MA, July 2012.
arXiv:1202.0601

\bibitem{Hay1}
M. Hayashi, ``Information Spectrum Approach to Second-Order Coding Rate in Channel Coding,'' 
{\em IEEE Trans. Inform. Theory}, 
vol. {\bf 55}, No.11, 4947 - 4966 (2009); 

\bibitem{Pol}
Y. Polyanskiy, H.V. Poor, and S. Verd\'{u}, 
``Channel coding rate in the finite blocklength regime,'' 
{\em IEEE Trans. Inform. Theory}, 
vol. {\bf 56}, 2307 - 2359 (2010).

\bibitem{Wat}
S. Watanabe
``Private and Quantum Capacities of More Capable and Less Noisy Quantum Channels,''
{\em Phys. Rev. A} {\bf 85}, 012326 (2012) 

\bibitem{korner80}
J.~K\"orner and A.~Sgarro, ``Universally attainable error exponents for
  broadcast channels with degraded message sets,'' \emph{IEEE Trans.\ Inform.\
  Theory}, vol.~26, no.~6, pp. 670--679, Nov. 1980.

\bibitem{Ogawa-Nagaoka}
T.~Ogawa and H.~Nagaoka,
``Strong converse and {Stein's} lemma in quantum hypothesis testing,''
{\em IEEE Trans. Inform. Theory}, {\bf 46}, 2428--2433 (2000).




\end{thebibliography}

\appendices

\section{Proof of Lemma \ref{l3}}\Label{sl3}
When $s \ge 0$,
in order to show (\ref{8-12-a}), it is enough to show that
\begin{align*}
\min_{\sigma_E} 
\sum_x P_X(x) \Tr W_x^{1+s} \sigma_E^{-s}
= ( \Tr  ( \sum_x P_X(x) W_x^{1+s} )^{\frac{1}{1+s}})^{1+s}.
\end{align*}
The reverse operator Holder inequality
\begin{align*}
\Tr XY \ge (\Tr X ^{1/(1+s)})^{1+s} (\Tr Y^{-1/s})^{-s}
\end{align*}
holds for two non-negative matrixes $X$ and $Y$.
Then,
\begin{align*}
& \Tr (\sum_x P_X(x) W_x^{1+s}) \sigma_E^{-s} 
\ge 
( \Tr  ( \sum_x P_X(x) W_x^{1+s} )^{\frac{1}{1+s}})^{1+s}
(\Tr \sigma_E^{-s\cdot -1/s})^{-s} 
= 
( \Tr  ( \sum_x P_X(x) W_x^{1+s} )^{\frac{1}{1+s}})^{1+s}.
\end{align*}
The equality holds when $\sigma_E= (\sum_x P_X(x) W_x^{1+s})^{\frac{1}{1+s}}/\Tr (\sum_x P_X(x) W_x^{1+s})^{\frac{1}{1+s}}$.

When $s < 0$,
in order to show (\ref{8-12-a}), it is enough to show that
\begin{align*}
\max_{\sigma_E} 
\sum_x P_X(x) \Tr W_x^{1+s} \sigma_E^{-s}
= ( \Tr  ( \sum_x P_X(x) W_x^{1+s} )^{\frac{1}{1+s}})^{1+s}.
\end{align*}
The operator Holder inequality
\begin{align*}
\Tr XY \ge (\Tr X ^{1/(1+s)})^{1+s} (\Tr Y^{-1/s})^{-s}
\end{align*}
holds for two non-negative matrixes $X$ and $Y$.
Similarly, we have
\begin{align*}
& \Tr (\sum_x P_X(x) W_x^{1+s}) \sigma_E^{-s} 
\le 
( \Tr  ( \sum_x P_X(x) W_x^{1+s} )^{\frac{1}{1+s}})^{1+s}.
\end{align*}
The equality holds when $\sigma_E= (\sum_x P_X(x) W_x^{1+s})^{\frac{1}{1+s}}/\Tr (\sum_x P_X(x) W_x^{1+s})^{\frac{1}{1+s}}$.

\section{Proof of (\ref{8-09-o})}\Label{a8-12-1}
Define 
$W_{y}:= \sum_{x \in \cX}\frac{1}{|\cX|} W_{x,y}$.
Then, 
\begin{align*}
& \| W_{y} -W \circ P_{\mix, \cX\times \cY }\|_1 \\
\le &
\| \sum_{x \in \cX}\frac{1}{|\cX|} W_{x,y} -W \circ P_{\mix, \cX\times \cY }\|_1 \\
\le &
\sum_{x} \frac{1}{|\cX|} \| W_{x,y} -W \circ P_{\mix, \cX\times \cY }\|_1 .
\end{align*}
Thus,
\begin{align*}
& d_1'(X;Y,E|\rho_{X,Y,E})
=
\sum_{x} P_{\mix,\cX}(x)
\sum_{y} P_{\mix,\cY}(y) \| W_{x,y} -W_{y}\|_1 \\
\le &
\sum_{x} P_{\mix,\cX}(x)
\sum_{y} P_{\mix,\cY}(y) \| W_{x,y} -W \circ P_{\mix, \cX\times \cY }\|_1 \\
& +
\sum_{x} P_{\mix,\cX}(x)
\sum_{y} P_{\mix,\cY}(y) \| W_{y} -W \circ P_{\mix, \cX\times \cY }\|_1 \\
=&
\sum_{x} P_{\mix,\cX}(x)
\sum_{y} P_{\mix,\cY}(y) \| W_{x,y} -W \circ P_{\mix, \cX\times \cY }\|_1 \\
&+
\sum_{y} P_{\mix,\cY}(y) \| W_{y} -W \circ P_{\mix, \cX\times \cY }\|_1 \\
\le &
2 \sum_{x} P_{\mix,\cX}(x)
\sum_{y} P_{\mix,\cY}(y) \| W_{x,y} -W \circ P_{\mix, \cX\times \cY }\|_1 \\
=& 2 d_1'(X,Y;E|\rho_{X,Y,E}).
\end{align*}

\section{Proof of (\ref{9-4-c1})}\Label{a9-4-c1}
For an element $x' \in \cX$, 
we denote the addition map $x \mapsto x+x'$ by $\Add_{x'}$.
For a distribution $P_X$ on $\cX$,
we consider the distribution $\Add_{x} \circ P_X$ on $\cX$.
The symmetry of the channel $W$, 
we have
\begin{align}
I(X;E|W\times P_X)
=
I(X;E|W\times (\Add_{x'} \circ P_X))\Label{9-4-c2}
\end{align}
for any $x' \in \cX$.
Using (\ref{8-04-a}), we also have
\begin{align}
& \sum_{x'\in \cX} \frac{1}{|\cX|}
I(X;E|W\times (\Add_{x'} \circ P_X))
\le
\sum_{x'\in \cX} \frac{1}{|\cX|}
I(X;E|W\times (\Add_{x'} \circ P_X)\|
W\circ P_{\mix,\cX}) \nonumber \\
=&
\sum_{x'\in \cX} \frac{1}{|\cX|}
\sum_{x\in \cX} P_X(x-x')
D(W_{x}\| W\circ P_{\mix,\cX}) 
=
\frac{1}{|\cX|}
\sum_{x\in \cX} 
D(W_{x}\| W\circ P_{\mix,\cX}) 
=I_{\mix}(W).\Label{9-4-c3}
\end{align}
Combining (\ref{9-4-c2}) and (\ref{9-4-c3}),
we obtain 
$I(X;E|W\times P_X) \le I_{\mix}(X;E|W)$,
which implies (\ref{9-4-c1}).

\section{Secret key generation with universal composability}\Label{s4-1}
We assume that Alice and Bob share a common classical random number $a \in \cA$,
and Eve has a quantum state $\rho_a \in {\cal H}_{E}$, which is correlated to the random number $a$. 
The task is to extract a common random number 
$f(a)$ from the random number $a \in \cA$, which is almost independent of 
Eve's quantum state.
Here, Alice and Bob are only allowed to apply the same function $f$ to the common random number $a \in \cA$.
Now, we focus on an ensemble of the functions $f_{\bZ}$ from 
$\cA$ to $\{1, \ldots, M\}$, where $\bZ$ denotes a random variable describing 
the stochastic behavior of the function $f$.
An ensemble of the functions $f_{\bZ}$ is called universal$_2$ 
when it satisfies the following condition\cite{Carter}:
\begin{condition}\Label{C1}
$\forall a_1 \neq \forall a_2\in \cA$,
the probability that $f_{\bZ}(a_1)=f_{\bZ}(a_2)$ is
at most $\frac{1}{M}$.
\end{condition}

For example, when $M$ is an arbitrary integer and 
the cardinality $|\cA|$ is an arbitrary multiple of $M$,
an ensemble $\{f_{\bZ}\}$ 
satisfying the above condition
is given in the following way.
First, we fix a function $f$ from $\cA$ to $\{1, \ldots, M\}$
such that the cardinality $|f^{-1}\{i\}|$ is $\frac{|\cA|}{M}$.
We randomly choose a permutation $g \in S_{\cA}$ on $\cA$ with the uniform distribution,
where $S_{\cA}$ denotes the set of permutation on $\cA$.
So, we can make a random function $\{f\circ g \}_{g \in S_{\cA}}$.
This ensemble satisfies Condition \ref{C1}.

\begin{lem}[{\cite[Lemma 33]{H-cq}}]\Label{L-15}
When 
an ensemble of the functions $f_{\bZ}$ from 
$\cA$ to $\{1, \ldots, M\}$
satisfies Condition \ref{C1},
any density matrix $\sigma_E$ on the system ${\cal H}_{E}$ satisfies 
\begin{align}
& \rE_{\bZ} d_1'(f_{\bZ}(A);E|\rho_{E,A}) \nonumber \\
\le &
(4+ \sqrt{v}) M^{s/2} e^{-\frac{s}{2}H_{1+s}(A|E| \rho_{E,A} \|\sigma_E )}
\Label{8-26-13-a} \\
& \rE_{\bZ} d_1'(f_{\bZ}(A);E|\rho_{E,A}) \nonumber \\
\le &
(4+ \sqrt{\lceil \lambda \rceil}) M^{s/2} 
e^{-\frac{s}{2}H_{1+s}(A|E| \rho_{E,A} \|\sigma_{E} )+\frac{s}{2}}\Label{8-26-13-f2},
\end{align}
where $v$ is the number of eigenvalues of $\sigma_E$
and $\lambda$ is defined as the real number $\log a_1-\log a_0$
by using the maximum eigenvalue $a_1$ and the minimum eigenvalue $a_0$ of $\sigma$.
\end{lem}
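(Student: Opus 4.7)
The plan is to reduce the trace-norm quantity to a computation of an expected (weighted) Hilbert-Schmidt norm, which can be evaluated cleanly by exploiting the universal$_2$ property, and then to compensate for the non-commutativity between $\rho_{E,A}$ and the reference state $\sigma_E$ by invoking the pinching inequality (\ref{8-15-23}). Writing $\rho_{E,y}^{\bZ} := \sum_{a \in f_{\bZ}^{-1}(y)} P_A(a)\, \rho_a$ so that
\begin{align*}
d_1'(f_{\bZ}(A);E|\rho_{E,A})
= \sum_{y=1}^M \bigl\| \rho_{E,y}^{\bZ} - \tfrac{1}{M}\rho_E \bigr\|_1 ,
\end{align*}
the objective reduces to controlling $\rE_{\bZ} \sum_y \|\rho_{E,y}^{\bZ} - \rho_E/M\|_1$.

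First I would decompose each term using the pinching map $\cE_{\sigma_E}$: split $\rho_{E,y}^{\bZ} - \rho_E/M$ into the pinched piece $\cE_{\sigma_E}(\rho_{E,y}^{\bZ} - \rho_E/M)$, which commutes with $\sigma_E$, plus the off-diagonal residual. By (\ref{8-15-23}) the residual's trace norm is bounded by (a constant times) $v$ times the trace of an operator that also appears in the pinched piece, so both can be absorbed into a single bound that picks up the factor $(4+\sqrt{v})$ once everything is collected. Next, on the commutative (pinched) side, I would convert trace norm to a weighted 2-norm by the operator Hölder-type estimate $\|X\|_1 \le M^{1/2}(\Tr\, \sigma_E^{-s} X \sigma_E^{-s} X^{\dagger})^{1/2}$-like inequality, tuned so that the Rényi-parameter $1{+}s$ appears; concretely I would interpolate via the operator-monotone function $x^{s}$ for $s \in (0,1]$, which is where $H_{1+s}$ enters. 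Finally, Jensen's inequality $\rE_{\bZ}\|\cdot\|_1 \le (\rE_{\bZ}\|\cdot\|_1^2)^{1/2}$ lets me replace the expectation by the expectation of a squared quantity, which expands into a double sum over $a,a' \in \cA$; Condition \ref{C1} caps the contribution of off-diagonal terms ($a\neq a'$) by $1/M$, leaving a clean diagonal expression that recognizably equals $M^{-1} \Tr \sum_a P_A(a)^{1+s}\rho_a^{1+s}\sigma_E^{-s} = M^{-1} e^{s \,(-H_{1+s}(A|E|\rho_{E,A}\|\sigma_E))}$ up to the usual constants. Taking the square root then yields (\ref{8-26-13-a}).

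For the second bound (\ref{8-26-13-f2}) I would use a coarser replacement of the pinching: rather than pinching by all eigenprojectors of $\sigma_E$ (which costs the factor $v$), group eigenvalues into $\lceil \lambda \rceil$ geometric bins of ratio $e$, so that within each bin $\sigma_E$ is sandwiched between two scalar multiples of the bin-projector differing by a factor $e$. The coarser pinching map $\tilde{\cE}$ has only $\lceil \lambda\rceil$ outputs (explaining $(4+\sqrt{\lceil\lambda\rceil})$), and since the eigenvalues within a bin differ by at most a factor of $e$, replacing $\sigma_E$ by its bin-averaged version in the Rényi expression costs the extra multiplicative factor $e^{s/2}$ visible on the RHS. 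The remaining algebra is identical to the first bound.

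The main obstacle is the non-commutative step that turns the Jensen-squared quantity into an expression of the form $\Tr \sum_a P_A(a)^{1+s} \rho_a^{1+s}\sigma_E^{-s}$: because $\rho_a$ and $\sigma_E$ need not commute, the naive Hilbert-Schmidt expansion does not directly produce $\rho_a^{1+s}$ but rather symmetric sandwich terms like $(\sigma_E^{-s/2}\rho_a \sigma_E^{-s/2})^{1+s}$ or fractional powers thereof. Controlling this requires careful use of the pinching inequality (\ref{8-15-23}) combined with the operator monotonicity of $x^{s}$ for $s \in (0,1]$, exactly in the spirit of how (\ref{8-26-1})--(\ref{8-26-2}) were derived in Subsection \ref{s22-2}; the factor $(4+\sqrt{v})$ (respectively $(4+\sqrt{\lceil \lambda\rceil})$) is the bookkeeping price for this reduction. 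Everything else is a routine application of the universal$_2$ property, Jensen's inequality, and Lemma \ref{l3}.
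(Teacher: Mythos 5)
The paper does not prove this lemma at all: it is imported verbatim from \cite[Lemma 33]{H-cq}, and Appendix \ref{s4-1} is flagged in the introduction as merely a review of a result ``shown in \cite{H-cq}.'' So there is no internal proof to check against; what you supply is a reconstruction of the external argument, and it needs to be judged on its own terms.

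Your outline captures the right ingredients for the special case $s=1$ (the pure order-$2$/leftover-hash regime): reduce to a sum over outputs $y$ of trace norms, control non-commutativity by comparing to the $\sigma_E$-pinched operators at a cost governed by $v$ (respectively $\lceil\lambda\rceil$ after geometric binning of the eigenvalues, accepting an $e^{s/2}$ loss), pass from trace norm to a weighted $2$-norm, invoke $\rE\|\cdot\|_1\le\sqrt{\rE\|\cdot\|_1^2}$, and use Condition~\ref{C1} to kill the off-diagonal $a\neq a'$ cross terms. However, the step you describe as ``tuned so that the R\'enyi parameter $1+s$ appears; concretely I would interpolate via the operator-monotone function $x^s$'' cannot deliver the stated bound for general $s\in(0,1]$. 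After Jensen-squaring and expanding, the surviving diagonal term is intrinsically quadratic in $\rho_a$ and in $P_A(a)$: something of the shape $P_A(a)^2\,\Tr\bigl(\sigma_E^{-1/2}\rho_a\sigma_E^{-1/2}\bigr)^2$ (or, with a coarser Cauchy--Schwarz, $P_A(a)^2\Tr\rho_a^2\sigma_E^{-1}$). Operator monotonicity of $x^s$ does not transmute these exponents into $P_A(a)^{1+s}\Tr\rho_a^{1+s}\sigma_E^{-s}$, so the identification you assert with $e^{-sH_{1+s}(A|E|\rho_{E,A}\|\sigma_E)}$ does not follow. In the same vein, your auxiliary estimate $\|X\|_1\le M^{1/2}(\Tr\,\sigma_E^{-s}X\sigma_E^{-s}X^\dagger)^{1/2}$ has the wrong shape: a Cauchy--Schwarz over the $M$ outputs can only produce the prefactor $M^{1/2}$, whereas the lemma's $M^{s/2}=e^{\frac{s}{2}\log M}$ already signals that $s$ enters as a large-deviation exponent rather than as a H\"older exponent.

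The missing ingredient is the \emph{smoothing} step, which is precisely what \cite{H-cq} advertises in its title (``smoothing of R\'enyi entropy of order $2$''). There the cq-state $\rho_{E,A}$ is split into a typical part, to which the order-$2$/leftover-hash estimate is applied, and an atypical remainder whose trace is controlled by a Markov-type tail bound; optimizing the split over the threshold is what introduces the free parameter $s$ and produces the composite form $M^{s/2}e^{-\frac{s}{2}H_{1+s}(A|E|\rho_{E,A}\|\sigma_E)}$, which reduces to the ordinary $\sqrt{M}\,e^{-\frac{1}{2}H_2}$ at $s=1$. Without this step your argument only reaches the $s=1$ endpoint of (\ref{8-26-13-a}). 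Your treatment of (\ref{8-26-13-f2})---binning eigenvalues of $\sigma_E$ into $\lceil\lambda\rceil$ geometric bands and absorbing the in-band spread into an $e^{s/2}$ factor---is the correct account of that variant, but it is built on top of the same missing smoothing mechanism.
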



\section{Proof of Lemmas \ref{Lee1} and \ref{Lee3}}
\Label{a53}
First, we show Lemma \ref{Lee1}.
\begin{align}
& \rE_{\Phi}D( W \circ \Phi \circ P_A
\| W \circ {P_X})
=
\rE_{\Phi} 
\Tr
(\sum_{a} P_A(a) W_{\Phi(a)})
(\log (\sum_{a} P_A(a) W_{\Phi(a)}) -\log (W \circ {P_X}))\nonumber
 \\
=&
\rE_{\Phi} 
\Tr
(\sum_{a} P_A(a) W_{\Phi(a)})
(\log (\sum_{a'} P_A(a') W_{\Phi(a')})
-\log (W \circ {P_X}) )
 \nonumber\\
=&
\Tr
\sum_{a} \rE_{\Phi(a)} 
P_A(a) W_{\Phi(a)}
\rE_{\Phi|\Phi(a)} 
(\log (P_A(a) W_{\Phi(a)}+
\sum_{a'\neq a } P_A(a') W_{\Phi(a')})
-\log (W \circ {P_X}) ) \nonumber\\
\le &
\Tr
\sum_{a} 
\rE_{\Phi(a)} 
P_A(a) W_{\Phi(a)}
( \log (P_A(a) W_{\Phi(a)}+
\rE_{\Phi|\Phi(a)} 
\sum_{a'\neq a } P_A(a') W_{\Phi(a')})
-\log (W \circ {P_X}) ) \Label{1}\\
= &
\Tr
\sum_{a} 
\rE_{\Phi(a)} 
P_A(a) W_{\Phi(a)}
( \log (P_A(a) W_{\Phi(a)}+
\sum_{a'\neq a } P_A(a') W \circ {P_X})
-\log (W \circ {P_X}) ) \nonumber\\
\le &
\Tr
\sum_{a} 
\rE_{\Phi(a)} 
P_A(a) W_{\Phi(a)}
( \log (P_A(a) W_{\Phi(a)}+W \circ {P_X})
- \log (W \circ {P_X}) ) \Label{2} \\
\le &
\Tr
\sum_{a} 
\rE_{\Phi(a)} 
P_A(a) W_{\Phi(a)}
( \log (P_A(a) v {\cal E}_{W \circ {P_X}} (W_{\Phi(a)})+W \circ {P_X} )
-\log (W \circ {P_X}) ) \Label{3} \\
= &
\frac{1}{s}\Tr
\sum_{a} 
\rE_{\Phi(a)} 
P_A(a) W_{\Phi(a)}(y)
\log (I+ v P_A(a) {\cal E}_{W \circ {P_X}} (W_{\Phi(a)})  (W \circ {P_X})^{-1} )^s \nonumber\\
\le &
\frac{v^s}{s} 
\Tr
\sum_{a} 
\rE_{\Phi(a)} 
P_A(a)^{1+s} W_{\Phi(a)} 
({\cal E}_{W \circ {P_X}} (W_{\Phi(a)}))^{s} (W \circ {P_X})^{-s} \Label{ee4} \\
= &
\frac{v^s}{s} 
\sum_{a} 
\rE_{\Phi(a)} 
P_A(a)^{1+s}  \Tr ({\cal E}_{W \circ {P_X}} (W_{\Phi(a)}))^{1+s} (W \circ {P_X})^{-s} \nonumber \\
= &
\frac{v^s}{s} 
\sum_{a} 
P_A(a)^{1+s} 
\Tr
\sum_{x}
P_X(x) ({\cal E}_{W \circ {P_X}} (W_{x}))^{1+s} (W \circ {P_X})^{-s} 
=
\frac{v^s}{s} 
e^{-s H_{1+s}(A)} e^{s I_s (X;E|{\cal E}_{W \circ {P_X}}[W],P_X)} \nonumber \\
\le &
\frac{v^s}{s} 
e^{-s H_{1+s}(A)}e^{s I_s(X;E|W \times P_X)},\Label{1-30-1}
\end{align}
which implies (\ref{9-19-1}).
In the above derivation, 
(\ref{1}) follows from the concavity of $x \mapsto \log x$,
(\ref{2}) follows from $\sum_{a'\neq a } P_A(a') \le 1$,
(\ref{3}) follows from (\ref{8-15-23}),
(\ref{ee4}) follows from the following inequality,
and
(\ref{1-30-1}) follows from (\ref{8-15-7-b}).
The inequality $(x+y)^s \le x^s+y^s$ yields that
\begin{align}
\log (1+x)= \frac{1}{s}\log(1+x)^s \le
\frac{1}{s}\log(1^s+x^s) = \frac{x^s}{s}.\Label{ee5}
\end{align}
Next, we show (\ref{9-19-2}).
Since $s \mapsto \underline{D}_s^*(W \circ \Phi \circ P_A \| W \circ P_X)$ is monotone increasing, 
we obtain
$s \underline{D}( W \circ \Phi \circ P_A \| W \circ P_X)
\le
s \underline{D}_s^*(A;E| W \circ \Phi \circ P_A \| W \circ P_X)$.
Thus,
\begin{align}
& \rE_{\Phi} e^{s \underline{D}_s^*(W \circ \Phi \circ P_A \| W\circ P_X)}
=
\rE_{\Phi} 
\Tr
(\sum_{a} P_A(a) W_{\Phi(a)})
((W \circ {P_X})^{-1/2} (\sum_{a} P_A(a) W_{\Phi(a)}) (W \circ {P_X})^{-1/2})^s  \nonumber\\
=&
\Tr
\sum_{a} 
\rE_{\Phi(a)} 
P_A(a) W_{\Phi(a)}
\rE_{\Phi|\Phi(a)} 
((W \circ {P_X})^{-1/2}(P_A(a) W_{\Phi(a)}+
\sum_{a'\neq a } P_A(a') W_{\Phi(a')}) (W \circ {P_X})^{-1/2})^s \nonumber\\
\le &
\Tr
\sum_{a} 
\rE_{\Phi(a)} 
P_A(a) W_{\Phi(a)}
((W \circ {P_X})^{-1/2} (P_A(a) W_{\Phi(a)}+
\rE_{\Phi|\Phi(a)} 
\sum_{a'\neq a } P_A(a') W_{\Phi(a')})
(W \circ {P_X})^{-1/2})^s
 \Label{1-a}\\
= &
\Tr
\sum_{a} 
\rE_{\Phi(a)} 
P_A(a) W_{\Phi(a)}(y)
((W \circ {P_X})^{-1/2} (P_A(a) W_{\Phi(a)}+
\sum_{a'\neq a } P_A(a') W \circ {P_X})(W \circ {P_X})^{-1/2})^s\nonumber
\\
\le &
\Tr
\sum_{a} 
\rE_{\Phi(a)} 
P_A(a) W_{\Phi(a)}
((W \circ {P_X})^{-1/2}(P_A(a) W_{\Phi(a)}+W \circ {P_X})(W \circ {P_X})^{-1/2})^s
 \Label{2-a} \\
=&
\Tr
\sum_{a} 
\rE_{\Phi(a)} 
P_A(a) W_{\Phi(a)}
((W \circ {P_X})^{-1/2}P_A(a) W_{\Phi(a)}(W \circ {P_X})^{-1/2}+I )^s\nonumber
 \\
\le &
\Tr
\sum_{a} 
\rE_{\Phi(a)} 
P_A(a) W_{\Phi(a)}
(((W \circ {P_X})^{-1/2}P_A(a) W_{\Phi(a)}(W \circ {P_X})^{-1/2})^s +I )  \Label{3-a} \\
= &
1+ 
\sum_{a} 
\rE_{\Phi(a)} 
P_A(a)^{1+s} \Tr W_{\Phi(a)} ((W \circ {P_X})^{-1/2} W_{\Phi(a)}(W \circ {P_X})^{-1/2})^s
\nonumber  \\
= &
1+ 
\sum_{a} 
P_A(a)^{1+s} 
\sum_{x}
P_X(x) \Tr W_{x} ((W \circ {P_X})^{-1/2} W_{x} (W \circ {P_X})^{-1/2})^s
=
1+ 
(\sum_{a} P_A(a)^{1+s} )e^{\underline{I}_s^*(X;E|W \times P_X)}.\nonumber
\end{align}
In the above derivation, 
(\ref{1-a}) follows from the concavity of $x \mapsto x^s$,
(\ref{2-a}) follows from $\sum_{a'\neq a } P_A(a') \le 1$,
(\ref{3-a}) follows from the inequality $(x+y) \le x^s+y^s$.
Then, we obtain (\ref{9-19-2}).

Next, we show (\ref{9-19-1-b}) of Lemma \ref{Lee3} by modifying the proof of (\ref{9-19-1}).
We introduce the random variable $Z:=f_{\bZ|\tilde{X}}(a) =f_{\bZ}(a)+\tilde{X}$.
The random variable $Z$ is independent of the choice of $f_{\bZ}$.
Since $\rE_{\bZ|Z} W_{f_{\bZ|\tilde{X}}(a)}=W \circ {P_{\mix,{\cal X}}}$ for $a\in {\cal A}$,
the proof of (\ref{9-19-1}) can be applied to the proof of (\ref{9-19-1-b})
by replacing $\Phi(a)$, $\Phi|\Phi(a)$, and $P_X$ by $Z$, $\bZ|Z$ and $P_{\mix,\cX}$.
The proof of (\ref{9-19-2}) can be applied to the proof of (\ref{9-19-2-b}) with the same replacement.


\section{Proof of Lemma \ref{Lee2}}
Next, we show (\ref{9-19-1-a}) of Lemma \ref{Lee2} by modifying the proof of (\ref{9-19-1-b}).
In this case, for any element $x \in {\cal X}$,
\begin{align*}
\frac{P_{\bZ,\tilde{X}}( x \in C[\bZ]+\tilde{X}, C[\bZ]=C )}{P_{\bZ,\tilde{X}}(C[\bZ]=C)}
=\frac{\sL}{|{\cal X}|}
=P_{\bZ,\tilde{X}}( x \in C[\bZ]+\tilde{X} ).
\end{align*}
Thus,
\begin{align*}
P_{\bZ,\tilde{X}}( C[\bZ]=C | x \in C[\bZ]+\tilde{X})
=\frac{P_{\bZ,\tilde{X}}( x \in C[\bZ]+\tilde{X}, C[\bZ]=C )}{P_{\bZ,\tilde{X}}( x \in C[\bZ]+\tilde{X} )}
=P_{\bZ,\tilde{X}}(C[\bZ]=C).
\end{align*}
Further, 
for $x\neq x' \in {\cal X}$,
when $ x \in C[\bZ]+\tilde{X}$,
$x'\in C[\bZ]+\tilde{X}$ if and only if $x'-x \in C[\bZ]$.
Thus,
\begin{align*}
P_{\bZ,\tilde{X}}( x' \in C[\bZ]+\tilde{X} | x \in C[\bZ]+\tilde{X})
=P_{\bZ,\tilde{X}}( x'-x \in C[\bZ] | x \in C[\bZ]+\tilde{X})
=P_{\bZ,\tilde{X}}(x'-x \in C[\bZ]).
\end{align*}
Therefore, 
\begin{align}
& \rE_{\bZ,\tilde{X}}D( W_{P_{\mix,C[\bZ]+\tilde{X}}} \| W \circ {P_{\mix,{\cal X}}}) \nonumber \\
=&
\Tr
\sum_{x} 
\frac{1}{\sL}
P_{\bZ,\tilde{X}}(x \in C[\bZ]+\tilde{X})
 W_{x} 
\rE_{\bZ|x \in C[\bZ]+\tilde{X} } 
(\log (\frac{1}{\sL} W_{x}+\frac{1}{\sL}
\sum_{x'\neq x \in C[\bZ]+\tilde{X}}  W_{x'} )
-\log (W \circ {P_{\mix,{\cal X}}}) ) \nonumber \\
\le &
\Tr
\sum_{x} 
\frac{1}{\sL}
P_{\bZ,\tilde{X}}(x \in C[\bZ]+\tilde{X})
W_{x}
(\log (\frac{1}{\sL} W_{x}+\frac{1}{\sL}
\rE_{\bZ|x \in C[\bZ]+\tilde{X} } 
\sum_{x'\neq x \in C[\bZ]+\tilde{X}}  W_{x'} )
-\log (W \circ {P_{\mix,{\cal X}}}) ) \Label{1-b}\\
= &
\Tr
\sum_{x} 
\frac{1}{\sL}
P_{\bZ,\tilde{X}}(x \in C[\bZ]+\tilde{X})
W_{x}
(\log (\frac{1}{\sL} W_{x}+\frac{1}{\sL}
\rE_{\bZ} 
\sum_{x' \neq 0 \in C[\bZ]}  W_{x'+x} ) -\log (W \circ {P_{\mix,{\cal X}}}) ) \nonumber \\
= &
\Tr
\sum_{x} 
\frac{1}{\sL}
\frac{\sL}{|\cX|}
W_{x}
(\log (\frac{1}{\sL} W_{x}+\frac{1}{\sL}\frac{\sL}{|{\cal X}|}
\sum_{x' \neq 0\in {\cal X}}  W_{x'+x} ) -\log (W \circ {P_{\mix,{\cal X}}}) ) \nonumber \\
= &
\Tr
\sum_{x} 
\frac{1}{|\cX|}
W_{x}
(\log (\frac{1}{\sL} W_{x}+ W \circ {P_{\mix,{\cal X}}}) -\log (W \circ {P_{\mix,{\cal X}}}) ) \nonumber \\
\le &
\Tr
\sum_{x} 
\frac{1}{|\cX|}
W_{x}
(\log (\frac{v}{\sL} {\cal E}_{W \circ {P_{\mix,{\cal X}}}}(W_{x})+ W \circ {P_{\mix,{\cal X}}}) -\log (W \circ {P_{\mix,{\cal X}}}) ) \Label{3-b}\\
= &
\Tr
\sum_{x} 
\frac{1}{|\cX|}
W_{x}
\log (I+ \frac{v}{\sL} {\cal E}_{W \circ {P_{\mix,{\cal X}}}}(W_{x}) (W \circ {P_{\mix,{\cal X}}})^{-1}) \nonumber \\
= &
\Tr
\sum_{x} 
\frac{1}{s|\cX|}
W_{x}
\log (I+ \frac{v}{\sL} {\cal E}_{W \circ {P_{\mix,{\cal X}}}}(W_{x}) (W \circ {P_{\mix,{\cal X}}})^{-1})^s \nonumber \\
\le &
\sum_{x} 
\frac{1}{|\cX|}
\frac{v^s}{s \sL^s}
\Tr
W_{x} {\cal E}_{W \circ {P_{\mix,{\cal X}}}}(W_{x})^s
(W \circ {P_{\mix,{\cal X}}})^{-s}
\Label{4-b}
=\sum_{x} 
\frac{1}{|\cX|}
\frac{v^s}{s \sL^s}
\Tr
{\cal E}_{W \circ {P_{\mix,{\cal X}}}}(W_{x})^{1+s}
(W \circ {P_{\mix,{\cal X}}})^{-s}
\\
= &
\frac{v^s}{s \sL^s}
e^{I_s(X;E|{\cal E}_{W \circ {P_{\mix,{\cal X}}}}[W] \times P_{\mix,\cX})} 
\le 
\frac{v^s}{s \sL^s}
e^{I_s(X;E|W \times P_{\mix,\cX})} .\Label{1-30-2}
\end{align}
In the above derivation, 
(\ref{1-b}) follows from the concavity of $x \mapsto \log x$,
(\ref{3-b}) follows from (\ref{8-15-23}),
(\ref{4-b}) follows from (\ref{ee5}),
and
(\ref{1-30-2}) follows from (\ref{8-15-7-b}).
Then, we obtain (\ref{9-19-1-a}).

Next, we show (\ref{9-19-2-a}).
\begin{align}
& \rE_{\bZ,\tilde{X}} e^{\underline{D}_s^*(W_{P_{\mix,C[\bZ]+\tilde{X}}} \| W \circ {P_{\mix,{\cal X}}} )} \nonumber \\
=&
\Tr
\sum_{x} 
\frac{1}{\sL}
P_{\bZ,\tilde{X}}(x \in C[\bZ]+\tilde{X})
 W_{x} 
\rE_{\bZ|x \in C[\bZ]+\tilde{X} } 
((W \circ {P_{\mix,{\cal X}}})^{-1/2} (\frac{1}{\sL} W_{x}+\frac{1}{\sL} \sum_{x'\neq x \in C[\bZ]+\tilde{X}}  W_{x'}) 
(W \circ {P_{\mix,{\cal X}}})^{-1/2})^s  \nonumber\\
\le &
\Tr
\sum_{x} 
\frac{1}{\sL}
P_{\bZ,\tilde{X}}(x \in C[\bZ]+\tilde{X})
 W_{x} 
((W \circ {P_{\mix,{\cal X}}})^{-1/2} (\frac{1}{\sL} W_{x}+\frac{1}{\sL} \rE_{\bZ|x \in C[\bZ]+\tilde{X} } \sum_{x'\neq x \in C[\bZ]+\tilde{X}}  W_{x'}) 
(W \circ {P_{\mix,{\cal X}}})^{-1/2})^s  \Label{1-c}\\
= &
\Tr
\sum_{x} 
\frac{1}{|{\cal X}|}
 W_{x} 
((W \circ {P_{\mix,{\cal X}}})^{-1/2} (\frac{1}{\sL} W_{x}+\frac{1}{\sL} \rE_{\bZ} \sum_{x'- x \in C[\bZ],x'\neq x}  W_{x'}) 
(W \circ {P_{\mix,{\cal X}}})^{-1/2})^s \nonumber \\
\le &
\Tr
\sum_{x} 
\frac{1}{|{\cal X}|}
 W_{x} 
((W \circ {P_{\mix,{\cal X}}})^{-1/2} (\frac{1}{\sL} W_{x}+\frac{1}{|{\cal X}|} \sum_{x' \neq x \in {\cal X}}  W_{x'}) 
(W \circ {P_{\mix,{\cal X}}})^{-1/2})^s \nonumber\\
\le &
\Tr
\sum_{x} 
\frac{1}{|{\cal X}|}
 W_{x} 
((W \circ {P_{\mix,{\cal X}}})^{-1/2} (\frac{1}{\sL} W_{x}+W \circ {P_{\mix,{\cal X}}}) 
(W \circ {P_{\mix,{\cal X}}})^{-1/2})^s  \nonumber\\
= &
\Tr
\sum_{x} 
\frac{1}{|{\cal X}|}
 W_{x} 
(I +\frac{1}{\sL} (W \circ {P_{\mix,{\cal X}}})^{-1/2} W_{x}(W \circ {P_{\mix,{\cal X}}})^{-1/2} )^s  \nonumber \\
\le &
\Tr
\sum_{x} 
\frac{1}{|{\cal X}|}
 W_{x} 
(I +\frac{1}{\sL^s} ((W \circ {P_{\mix,{\cal X}}})^{-1/2} W_{x}(W \circ {P_{\mix,{\cal X}}})^{-1/2} )^s ) \Label{3-c}\\
= &
1+ \frac{1}{\sL^s}
\sum_{x} 
\frac{1}{|{\cal X}|}
\Tr  W_{x}  ((W \circ {P_{\mix,{\cal X}}})^{-1/2} W_{x}(W \circ {P_{\mix,{\cal X}}})^{-1/2} )^s  
= 
1+ \frac{1}{\sL^s} e^{\underline{I}_s^*(X;E|W \times P_{\mix,\cX})}.\nonumber
\end{align}
In the above derivation, 
(\ref{1-c}) follows from the concavity of $x \mapsto x^s$,
(\ref{3-c}) follows from $(x+y) \le x^s+y^s$.
Thus, similar to (\ref{9-19-2}), we obtain (\ref{9-19-2-a}).

\end{document}